\newtheorem{theorem}{{\sc Theorem}}[section]
\newtheorem{proposition}[theorem]{{\sc Proposition}}
\newtheorem{lemma}[theorem]{{\sc Lemma}}
\newtheorem{corollary}[theorem]{Corollary}
\newtheorem{definition}[theorem]{Definition}
\newcommand{\bb}[1]{\mathbb{ #1}}
\bmdefine\Bone{1}
\newcommand{\weak}{\rightharpoonup\:}
\newcommand{\bra}[1]{\overline{#1}}
\newcommand{\hf}{\displaystyle\frac{1}{2}}
\newcommand{\nth}[1]{\displaystyle\frac{1}{#1}}
\renewcommand{\Hat}[1]{\widehat{#1}}
\newcommand{\Tld}[1]{\widetilde{#1}}
\newcommand{\av}[1]{\langle #1 \rangle}
\def\XXint#1#2#3{{\setbox0=\hbox{$#1{#2#3}{\int}$ }
\vcenter{\hbox{$#2#3$ }}\kern-.6\wd0}}
\newcommand{\im}{\mathfrak{Im}}
\newcommand{\re}{\Re\mathfrak{e}}
\newcommand{\lims}{\mathop{\overline\lim}}
\newcommand{\limi}{\mathop{\underline\lim}}
\newcommand{\rhs}{right-hand side}
\newcommand{\lhs}{left-hand side}
\newcommand{\nbh}{neighborhood}
\newcommand{\IFF}{if and only if }
\newcommand{\Ga}{\alpha}
\newcommand{\Gb}{\beta}
\newcommand{\Gd}{\delta}
\newcommand{\Ge}{\epsilon}
\newcommand{\Gve}{\varepsilon}
\newcommand{\Gg}{\gamma}
\newcommand{\Gk}{\kappa}
\newcommand{\Gl}{\lambda}
\newcommand{\Gth}{\theta}
\newcommand{\Gs}{\sigma}
\newcommand{\Go}{\omega}
\newcommand{\GD}{\Delta}
\newcommand{\GG}{\Gamma}
\bmdefine\BGa{\alpha}
\bmdefine\BGb{\beta}
\bmdefine\BGd{\delta}
\bmdefine\BGe{\epsilon}
\bmdefine\BGve{\varepsilon}
\bmdefine\BGf{\phi}
\bmdefine\BGvf{\varphi}
\bmdefine\BGg{\gamma}
\bmdefine\BGc{\chi}
\bmdefine\BGi{\iota}
\bmdefine\BGk{\kappa}
\bmdefine\BGl{\lambda}
\bmdefine\BGn{\eta}
\bmdefine\BGm{\mu}
\bmdefine\BGv{\nu}
\bmdefine\BGp{\pi}
\bmdefine\BGth{\theta}
\bmdefine\BGvth{\vartheta}
\bmdefine\BGr{\rho}
\bmdefine\BGvr{\varrho}
\bmdefine\BGs{\sigma}
\bmdefine\BGvs{\varsigma}
\bmdefine\BGt{\tau}
\bmdefine\BGj{\tau}
\bmdefine\BGu{\upsilon}
\bmdefine\BGo{\omega}
\bmdefine\BGx{\xi}
\bmdefine\BGy{\psi}
\bmdefine\BGz{\zeta}
\bmdefine\BGD{\Delta}
\bmdefine\BGF{\Phi}
\bmdefine\BGG{\Gamma}
\bmdefine\BGL{\Lambda}
\bmdefine\BGP{\Pi}
\bmdefine\BGT{\Theta}
\bmdefine\BGS{\Sigma}
\bmdefine\BGU{\Upsilon}
\bmdefine\BGO{\Omega}
\bmdefine\BGX{\Xi}
\bmdefine\BGY{\Psi}
\bmdefine\BFM{\mathfrak{M}}
\bmdefine\BFb{\mathfrak{b}}
\bmdefine\BFk{\mathfrak{k}}
\bmdefine\BFm{\mathfrak{m}}
\bmdefine\BFu{\mathfrak{u}}
\bmdefine\BFv{\mathfrak{v}}
\newcommand{\CB}{{\mathcal B}}
\newcommand{\CK}{{\mathcal K}}
\newcommand{\CL}{{\mathcal L}}
\newcommand{\CR}{{\mathcal R}}
\newcommand{\CS}{{\mathcal S}}
\bmdefine\BCA{{\mathcal A}}
\bmdefine\BCB{{\mathcal B}}
\bmdefine\BCC{{\mathcal C}}
\bmdefine\BCD{{\mathcal D}}
\bmdefine\BCE{{\mathcal E}}
\bmdefine\BCF{{\mathcal F}}
\bmdefine\BCG{{\mathcal G}}
\bmdefine\BCH{{\mathcal H}}
\bmdefine\BCI{{\mathcal I}}
\bmdefine\BCJ{{\mathcal J}}
\bmdefine\BCK{{\mathcal K}}
\bmdefine\BCL{{\mathcal L}}
\bmdefine\BCM{{\mathcal M}}
\bmdefine\BCN{{\mathcal N}}
\bmdefine\BCO{{\mathcal O}}
\bmdefine\BCP{{\mathcal P}}
\bmdefine\BCQ{{\mathcal Q}}
\bmdefine\BCR{{\mathcal R}}
\bmdefine\BCS{{\mathcal S}}
\bmdefine\BCT{{\mathcal T}}
\bmdefine\BCU{{\mathcal U}}
\bmdefine\BCV{{\mathcal V}}
\bmdefine\BCW{{\mathcal W}}
\bmdefine\BCX{{\mathcal X}}
\bmdefine\BCY{{\mathcal Y}}
\bmdefine\BCZ{{\mathcal Z}}
\bmdefine\Bzr{ 0}
\bmdefine\Ba{ a}
\bmdefine\Bb{ b}
\bmdefine\Bc{ c}
\bmdefine\Bd{ d}
\bmdefine\Be{ e}
\bmdefine\Bf{ f}
\bmdefine\Bg{ g}
\bmdefine\Bh{ h}
\bmdefine\Bi{ i}
\bmdefine\Bj{ j}
\bmdefine\Bk{ k}
\bmdefine\Bl{ l}
\bmdefine\Bm{ m}
\bmdefine\Bn{ n}
\bmdefine\Bo{ o}
\bmdefine\Bp{ p}
\bmdefine\Bq{ q}
\bmdefine\Br{ r}
\bmdefine\Bs{ s}
\bmdefine\Bt{ t}
\bmdefine\Bu{ u}
\bmdefine\Bv{ v}
\bmdefine\Bw{ w}
\bmdefine\Bx{ x}
\bmdefine\By{ y}
\bmdefine\Bz{ z}
\bmdefine\BA{ A}
\bmdefine\BB{ B}
\bmdefine\BC{ C}
\bmdefine\BD{ D}
\bmdefine\BE{ E}
\bmdefine\BF{ F}
\bmdefine\BG{ G}
\bmdefine\BH{ H}
\bmdefine\BI{ I}
\bmdefine\BJ{ J}
\bmdefine\BK{ K}
\bmdefine\BL{ L}
\bmdefine\BM{ M}
\bmdefine\BN{ N}
\bmdefine\BO{ O}
\bmdefine\BP{ P}
\bmdefine\BQ{ Q}
\bmdefine\BR{ R}
\bmdefine\BS{ S}
\bmdefine\BT{ T}
\bmdefine\BU{ U}
\bmdefine\BV{ V}
\bmdefine\BW{ W}
\bmdefine\BX{ X}
\bmdefine\BY{ Y}
\bmdefine\BZ{ Z}
\DeclareMathAlphabet{\cg}{OMS}{zplm}{m}{n}
\newcommand{\HH}{\mathbb{H}}
\newcommand{\RR}{\mathbb{R}}      
\newcommand{\ZZ}{\mathbb{Z}}      
\DeclareMathAlphabet{\pazocal}{OMS}{zplm}{m}{n}
\newcommand{\PK}{\pazocal{K}}
\newcommand{\PC}{\pazocal{C}}
\title{On feasibility of extrapolation of the complex electromagnetic
  permittivity function using Kramer-Kronig relations} 
\author{Yury Grabovsky,  \qquad Narek Hovsepyan}
\date{}
\begin{document}
\maketitle

\begin{abstract}
We study the degree of reliability of extrapolation of complex electromagnetic permittivity functions based on their analyticity properties. Given two analytic functions, representing extrapolants of the same experimental data, we examine how much they can differ at an extrapolation point outside of the experimentally accessible frequency band. We give a sharp upper bound on the worst case extrapolation error, in terms of a solution of an integral equation of Fredholm type. We conjecture and give numerical evidence that this bound exhibits a power law precision deterioration as one moves further away from the frequency band containing measurement data.
\end{abstract}
\tableofcontents

\section{Introduction}
\setcounter{equation}{0} 
\label{SECT intro} 
Properties of linear, time-invariant, causal systems are characterized by
functions analytic in a complex half-plane. Examples include transfer
functions of digital filters \cite{grs01}, complex impedance and admittance
functions of electrical circuits \cite{brune31}, complex magnetic permeability
and complex dielectric permittivity functions \cite{lali60:8,feyn64}. Arising
from the world of real-valued fields, these functions also possess specific
symmetries. The underlying mathematical structure is the Fourier (or Laplace)
transforms of real-valued functions that vanish on negative semi-axis. More
generally, the analyticity arises from the analyticity of resolvents of linear
operators, while their symmetries reflect that these operators are very often
real and self-adjoint.

In a typical situation we can measure the values of such analytic functions on
a compact subset of the boundary of their half-plane of analyticity. The real
and imaginary parts of such a function are not independent but are Hilbert
transforms of one another. In the context of the complex dielectric
permittivity this fact is expressed by the Kramers-Kronig relations
\cite{kron26}. It is therefore tempting to use these relations in order to
reconstruct the analytic functions from their measured values. Unfortunately,
such a reconstruction problem is ill-posed (e.g. \cite{mill70}), and one needs
to place additional constraints on the set of admissible analytic functions
for the extrapolation problem to be mathematically well-posed.

In this paper we propose a physically natural regularization that implies that
the underlying analytic functions can be analytically continued into a larger
complex half-plane. In that case, the idea is to exploit the fact that complex
analytic functions possess a large degree of rigidity, being uniquely
determined by values at any infinite set of points in any finite
interval. This rigidity also implies that even very small measurement errors
will produce data \textit{mathematically} inconsistent with values of an
analytic function. In such cases the least squares approach
\cite{cude68,ciulli69,capr74,capr79} that treats all data points equally is
the most natural one. In the first part of the paper we prove that the least
squares problem has a unique solution, that yields a mathematically stable
extrapolant. We show that the minimizer must be a rational function and derive
the necessary and sufficient conditions for its optimality.

Recent work \cite{trefe19,deto18,grho-gen,grho-annulus} shows that
surprisingly, the space of analytic functions is also "flexible" in the sense
that the data can often be matched up to a given precision by two physically
admissible functions that are very different away from the interval, where the
data is available. The second part of the paper quantifies this phenomenon by
giving an optimal upper bound on the possible discrepancy between any two
approximate extrapolants. This is done by first reformulating the problem as a
question about analytic functions, which we have already studied in
\cite{grho-gen,grho-annulus}, but without the symmetry
constraints. Incorporating symmetry into the methods of \cite{grho-gen} is
nontrivial, and we address this question next. Our conclusion
is that the symmetry has a virtually negligible regularizing effect, as far as the
optimal upper bound on the extrapolation uncertainty is concerned.

\section{Preliminaries}
\setcounter{equation}{0} 
\label{SECT prelim} 
When the electromagnetic wave passes
through the material the incident electric field $\BE(\Bx,t)$ interacts with
charge carriers inside the matter. We assume that the induced polarization
field $\BP(\Bx,t)$ depends on the incident electric field linearly and
locally. This is expressed by the constitutive relation 
\begin{equation}
  \label{constrel}
  \BP(\Bx,t)=\int_{0}^{+\infty}\BE(\Bx,t-s)a(s)ds,
\end{equation}
indicating that the polarization field depends only on the past values of
$\BE(\Bx,t)$. The function $a(t)$ is called the impulse response or a memory
kernel, which is assumed to decay exponentially. Its decay rate, $a(t)\sim
e^{-t/\tau_{0}}$, $t\to\infty$, indicates how fast the system ``forgets'' the past
values of the incident field. The parameter $\tau_{0}>0$ is called the relaxation
time, which can be measured for many materials.

Let 
\[
a_{0}(t)=
\begin{cases}
  a(t),&t\ge 0,\\
  0,&t<0.
\end{cases}
\]
Then we can extend the integral in (\ref{constrel}) to the entire real line
and apply the Fourier transform to convert the convolution into a product:
\[
\Hat{\BP}(\Bx,\omega)=\Hat{a}_{0}(\omega)\Hat{\BE}(\Bx,\omega),
\]
where 
\[
\Hat{f}(\omega)=\int_{\bb{R}}f(x)e^{i \omega x}dx 
\] 
is the Fourier transform. In physics, the function $\Gve(\omega)=\Gve_{0}+\Hat{a}_{0}(\omega)$
is called the complex dielectric permittivity of the material, where
$\Gve_{0}$ is the dielectric permittivity of the vacuum. Mathematically, it is
more convenient to study $\Hat{a}_{0}(\omega)$, rather than $\Gve(\omega)$. From
now on, we will denote
\[
f(\omega)=\Hat{a}_{0}(\omega),
\]
and refer to it as the complex electromagnetic permittivity, in a convenient
abuse of terminology. Let us recall the well-known analytic properties of
isotropic complex electromagnetic permittivity as a function of frequency
$\omega$ of the incident electromagnetic wave \cite{lali60:8,feyn64}:
\begin{enumerate}
\item[(a)] $\overline{f(\omega)} = f(- \overline{\omega})$;

\item[(b)] $f(\omega)$ is analytic in the complex upper half-plane
  $\HH_{+}=\{\Go\in\bb{C}:\im\,\Go>0\}$ 
\item[(c)] $\im\, f(\omega) > 0$ for $\Go$ in the first quadrant
  $\re(\omega)>0$, $\im(\omega)>0$;
\item[(d)] $f(\omega) = -A \omega^{-2} + O(\omega^{-3})$, $A>0$ as $\omega \rightarrow \infty$.
\end{enumerate}
Property (a) expresses the fact that physical fields are real. Property (b) is
the consequence of the causality principle i.e. independence of $\BP(\Bx,t)$
of the future values of $E(\Bx,\tau)$, $\tau>t$. Property (c) comes from the
fact that the electromagnetic energy gets absorbed by the material as the
electromagnetic wave passes through. Property (d) is called the plasma limit,
where at very high frequencies the electrons in the medium may be regarded as
free.  Complex analytic functions with properties (a)--(d) and their variants,
are ubiquitous in physics.  The complex impedance of electrical circuits as a
function of frequency has similar properties \cite{fost24,brune31}. Yet another example is the
dependence of effective moduli of composites on the moduli of its constituents
\cite{Bergman:1978:DCC,mi80,mchcg15}. These functions appear in areas as diverse as
optimal design problems \cite{lipt01} and nuclear physics
\cite{mand58,macd59,cfg68}. Typically\footnote{In the context of viscoelastic
  composites measurements corresponding to values of $f(\Go)$ in the upper
  half-plane are also possible.} only the values of such a function on a real
line can be measured. In the case of complex electromagnetic permittivity the
measurements are usually made either on a finite interval or at a discrete set
of frequencies. However, the requirements (a)--(d) do not place any regularity
requirements on $f(\Go)$, when $\Go$ is real. For example, the function
\[
f(\Go)=\nth{\Go_{0}^{2}-\Go^{2}},\quad\Go_{0}>0
\]
satisfies properties (a)--(d), but blows up at the frequency $\Go_{0}>0$. We
exclude such examples by assuming that the memory kernel $a(t)$ decays
exponentially with relaxation time $\tau_{0}>0$. In this case $f(\Go)$ will
have an analytic extension into the larger half-plane
\begin{equation}
\label{Hhdef}
\HH_h = \{\omega \in \mathbb{C}: \im\, \omega> -h\},
\end{equation}
where $h=1/\tau_{0} >0$. In general, the analytic continuation of $f(\Go)$
need not have positive imaginary part when $\im(\Go)>-h$ and $\re(\Go)>0$.
For example, $f(\omega) = - \frac{\omega+i}{(\omega+3i)^3}$ satisfies
conditions (a)--(d), is analytic in $\HH_{3}$, but $\im\, f(x-i\Ge)$ takes
negative values for any $\Ge \in (0,3)$ for some $x>0$. We therefore make an
additional regularizing assumption that positivity property (c) continues to
hold in the larger half-plane $\HH_h$. In fact, under the additional
assumption that the Elmore delay \cite{elmo48} is positive, i.e., $-if'(0)>0$, the
positivity condition can be guaranteed in some possibly smaller half-plane
$\HH_{h'}$, $0<h'\le h$ (see the Appendix). Thus, the class of
all physically admissible complex dielectric permittivity functions is narrowed in
a natural way to the class $\PK_{h}$, defined as follows.
\begin{definition}
  \label{def:Kh}
A complex analytic function $f:\HH_{h}\to\bb{C}$ belongs to the class
$\PK_{h}$ if it has the following list of physically justified properties.
\begin{enumerate}
\item[(S)] Symmetry: $\overline{f(\omega)} = f(- \overline{\omega})$;
\item[(P)] Passivity: $\im(f(\Go))>0$, when $\im(\Go)>-h,\quad\re(\Go)>0$;
\item[(L)] Plasma limit: $f(\omega) = -A \omega^{-2} + O(\omega^{-3})$, $A>0$ as $\omega \rightarrow \infty$.
\end{enumerate}
\end{definition}
Functions in the set $\PK_{h}$ are closely related to an important class of
functions called Stieltjes functions.
\begin{definition}
  \label{def:Stiel}
  A non-constant function analytic in the complex upper half-plane is said to
  be of Stieltjes class $\mathfrak{S}$ if its imaginary part is positive, and
  it is analytic on the negative real axis, where it takes real and nonnegative
  values. Such functions together with all nonnegative constant functions form
  the Stieltjes class $\mathfrak{S}$.
\end{definition}
It is well-known that a Stieltjes function $F(z)$ is uniquely determined by a
constant $\rho \ge 0$ and a Borel-regular positive measure $\Gs$ by the
representation
\begin{equation}
  \label{Strep}
  F(z)=\rho + \int_{0}^{\infty}\frac{d\Gs(\Gl)}{\Gl-z},\qquad
\int_{0}^{\infty}\frac{d\Gs(\Gl)}{\Gl+1}<+\infty.
\end{equation}
The measure $\Gs$ is often referred to as \emph{spectral measure}
\cite{kron26,lali60:8}. Let us show that function $f\in\PK_{h}$ can be represented by
\begin{equation}
  \label{intrep}
f(\Go)=F((\Go+ih)^{2}),\quad F\in\mathfrak{S},\quad\rho=0,\quad
\int_{0}^{\infty}d\Gs(\Gl)=A<+\infty,
\end{equation}
where  $\Gs$ is the spectral measure for $F(z)$.

For any $f \in \PK_h$ consider the function $g(\zeta) = f(\zeta-ih)$ which is analytic in $\HH_+$, \ $\overline{g(\zeta)} = g(- \overline{\zeta})$, \ $\im g >0$ in the first quadrant and $g(\zeta) \sim -A \zeta^{-2}$ as $\zeta \to \infty$ for some $A>0$.

Unfolding the first quadrant in the $\zeta$-plane into the upper half-plane in
the $z$-plane via $z=\zeta^2$ we obtain a function $F(z) = g(\sqrt{z})$, which
is analytic in $\HH_+$ and has a positive imaginary part there. The symmetry
of $g$ implies that it is real on $i\RR_{>0}$, but then $F$ is real on
$\RR_{<0}$. Clearly, analyticity of $g$ on $i\RR_{>0}$ implies that of $F$ on
$\RR_{<0}$. The plasma limit assumption implies that $F(-x) \geq 0$ for $x$ large enough, which is enough to conclude that $F$ is a Stieltjes function (see the proof of Theorem A.4, pg 392 of \cite{krnu77}). Thus, $F$ admits the
representation (\ref{Strep}). But then, the asymptotic relation $F(z) \sim -A
z^{-1}$ as $z \to \infty$ implies that $\rho = 0$ and $\int_0^\infty d \sigma
(\lambda) = A < \infty$. Thus, $f(\omega) = g(\omega + ih) = F((\omega +
ih)^2)$. Conversely, if $f$ is given by \eqref{intrep} then it is straightforward to check that it satisfies all the required properties of class $\PK_h$.

\section{Main results} \label{SECT main results}

Let us assume that the experimentally measured data $f_{\rm exp}(\Go)$ is known on
a band of frequencies $\Gamma = [0,B]$. The unavoidable random noise makes the
measured values mathematically inconsistent with the analyticity of the
complex dielectric permittivity function. The standard way to deal with the
noise is to use the ``least squares'' approach by looking for a function
$f\in\PK_{h}$ that is closest to the experimental data $f_{\rm exp}(\Go)$ in the
$L^{2}$ norm on $\GG$. Thus, after rescaling the frequency interval $\GG$ to
the interval $[0,1]$ we arrive at the following least squares problem
\begin{equation}
\label{lsq}
\inf_{f\in \PK_h} \|f-f_{\rm exp}\|_{L^2(0,1)}.
\end{equation}
One approach \cite{echerk01,chou08} is to ignore the positivity requirement,
while retaining the spectral representation (\ref{intrep}). The resulting
problem constrains $f$ to a vector space, but becomes ill-posed. It is then
solved by Tikhonov regularization techniques. Unfortunately, such an approach
cannot guarantee that the solution possesses the required positivity.

We will see in Section~\ref{SECT lsq} that the positivity property of
functions in $\PK_{h}$ plays a regularizing role, making the least squares
problem (\ref{lsq}) well-posed. So the solution to \eqref{lsq} exists, is
unique and lies in the closure $\CS_h = \overline{\PK_h}$ with respect to the
standard topology\footnote{This is a metrizable topology of uniform
  convergence on compact subsets of $\HH_h$.} of the space $H(\HH_h)$ of
analytic functions on $\HH_h$. We then characterize the set $\CS_h$ and obtain
stability of analytic continuation in the following sense\: if $\{f_n\}, f \subset \CS_h$ are such that $f_n \to f$ in $L^2(0,1)$, then $f_n \to f$ as $n \to \infty$ in $H(\bb{H}_{h})$. In Section~\ref{SECT minimizer} we study the properties of the minimizer of \eqref{lsq}.

Even though we have established well-posedness and stability of the
extrapolation problem, the above-mentioned results are not
quantitative, since they do not give rates of convergence of the extrapolation errors.
Figure~\ref{fig:illpos} (corresponding to a small value of the natural
regularization parameter) shows two perfectly admissible functions in $\PK_{h}$
that are virtually indistinguishable on $[0,1]$, but separate almost immediately beyond the
data window. 
\begin{figure}[t]
\center
\includegraphics[scale=0.33]{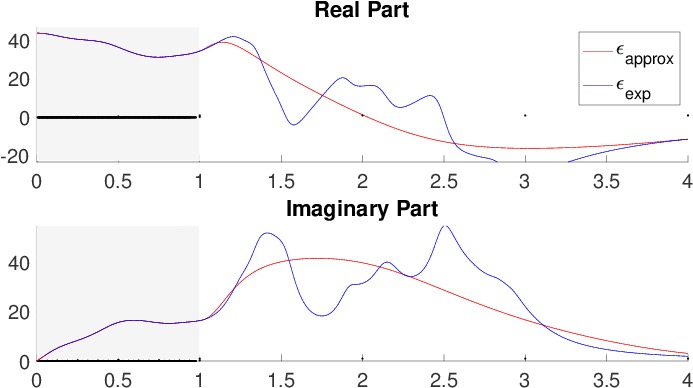}
\caption{Apparent ill-posedness of the extrapolation process.}
\label{fig:illpos}
\end{figure}
It suggests that the quantification of mathematical well-posedness is a matter of
practical importance. While there is no shortage of proposed algorithms for
extrapolation of experimental data in the vast literature on the subject,
there is no mathematically rigorous quantitative analysis of uncertainty
inherent in such extrapolation procedures.
We therefore consider two different functions $f$ and $g$ in $\PK_h$, that
differ by less than a small fraction $\epsilon$ of their size on the frequency
band $[0,1]$. Our goal is to estimate how much $f$ and $g$ can differ at a
given point $\Go_{0}>1$? We begin by giving a precise formulation of this question.
For any $\Ge>0$ we consider the set of  pairs
\[
U_{h}(\Ge)=\left\{(f,g) \in \PK_h:\frac{\|f-g\|_{L^2(0,1)}}{\max (\|\sigma_f\|, \|\sigma_g\|)} \leq \epsilon \right\},
\]
where $\sigma_f$ and $\sigma_g$ are the spectral measures in the
representation \eqref{intrep} of $f$ and $g$, respectively, and
$$\|\sigma_f\|:= \int_0^\infty \frac{d \sigma_f(\lambda)}{\lambda + 1}<+\infty$$
is finite interpreted as a "total norm" of $f$. Our goal is to find an upper
bound on the relative extrapolation error at the point $\Go_{0}$
\begin{equation} \label{Delta}
\Delta_{\Go_{0},h}(\epsilon) = \sup\left\{ \frac{|f(\Go_{0})-g(\Go_{0})|}{\max (\|\sigma_f\|, \|\sigma_g\|)}: (f,g) \in U_{h}(\Ge) \right\}.
\end{equation}
\begin{figure}[t]
  \centering
  \includegraphics[scale=0.2]{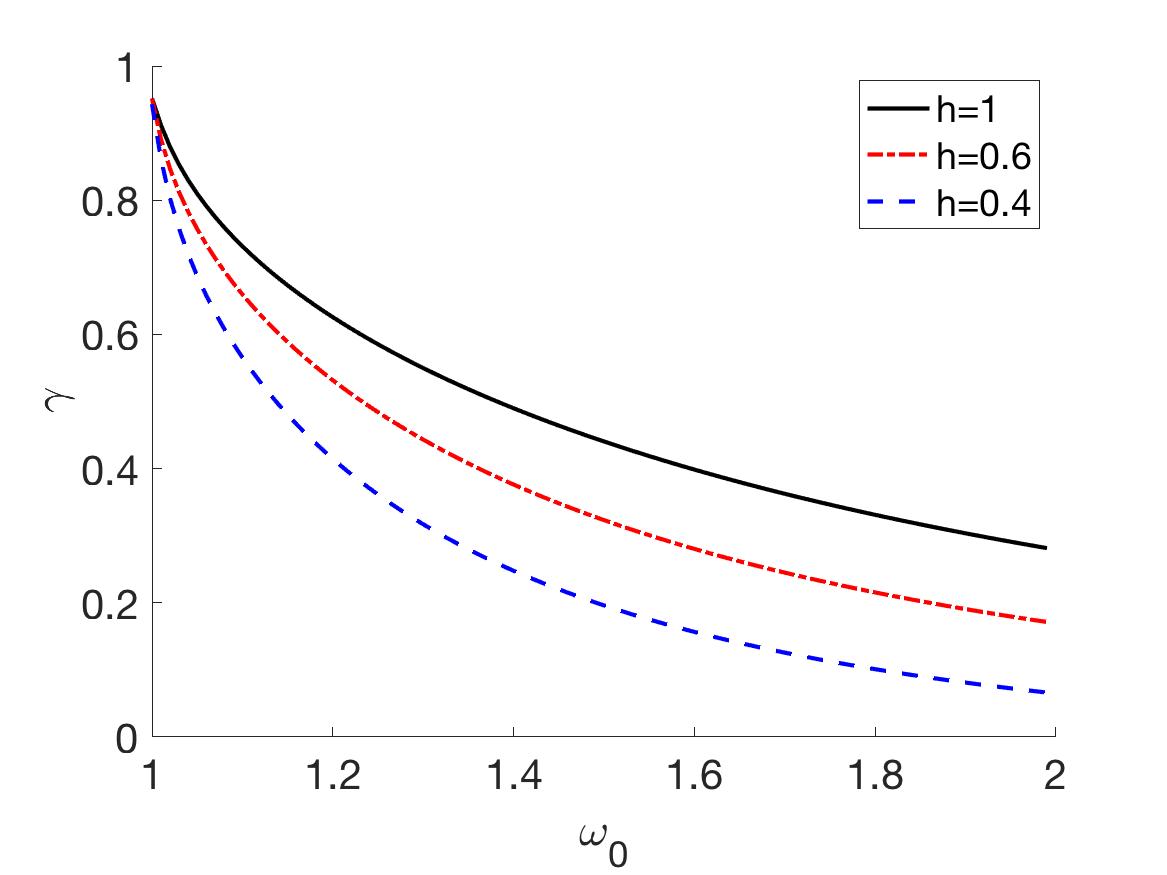}
  \caption{Power law exponent $\gamma$ as a function of $\Go$ for several
    values of $h$.}
  \label{fig:gammaofomega}
\end{figure}
The two fundamental questions determining the reliability of the extrapolation procedures are 
\begin{enumerate}
\item  Is it true that $\Delta_{\Go_{0},h}(\epsilon) \to 0$ as $\epsilon \to 0^+$?
\item What is the exact convergence rate of $\Delta_{\Go_{0},h}(\epsilon)$ to $0$?
\end{enumerate}
The first insight is the realization that, in fact, these questions are about
the difference $\phi=f-g$, rather than the pair $(f,g)$. The difference $\phi$
has the same spectral representation (\ref{Strep}), (\ref{intrep}) as $f$ and
$g$, except the spectral measure is no longer positive. Our next observation
is that the asymptotic behavior of $\GD_{\Go_{0},h}(\epsilon)$, as $\Ge\to 0$
is insensitive to certain restrictions on the spectral measures $\Gs$, as long
as the set of admissible measures is dense (in the weak-* topology) in the
space of measures (\ref{Strep}). For example, we may work only with absolutely
continuous measures with densities in $L^{2}(0,+\infty)$, permitting us to use
the theory of Hardy functions and Hilbert space methods to obtain exact
asymptotic behavior of $\GD_{\Go_{0},h}(\epsilon)$. The passage from pairs
$(f,g)$ to a single function $\phi=f-g$ is described in Section~\ref{SECT
  Reformulation}. The analysis of the Hilbert space problem for the difference
$\phi=f-g$ is in Section~\ref{SECT Optimal bound}, where it is shown that
$\GD_{\Go_{0},h}(\epsilon) \lesssim \epsilon^\gamma$ for some $\gamma \in
(0,1)$, giving a positive answer to our first question. The answer to the
second question is more nuanced, if we distinguish what we can prove
rigorously and what we can conjecture based on the numerical and analytical
evidence. The theory in Section~\ref{SECT Optimal bound} permits numerical
computation of the asymptotics of $\GD_{\Go_{0},h}(\epsilon)$ by relating it
to a similar problem without the symmetry constraint (property (a) from
Section~\ref{SECT prelim}).  Figure~\ref{fig:powerlaw} shows that
asymptotically $\GD_{\Go_{0},h}(\epsilon)\sim\Ge^{\gamma(\Go_{0}, h)}$, while
we also see from Figure~\ref{fig:comparison} that the symmetry requirement
does not change the value of the exponent $\gamma(\Go_{0}, h)$.
\begin{figure}[t]
\begin{center}
\begin{subfigure}[t]{2in}
\includegraphics[scale=0.22]{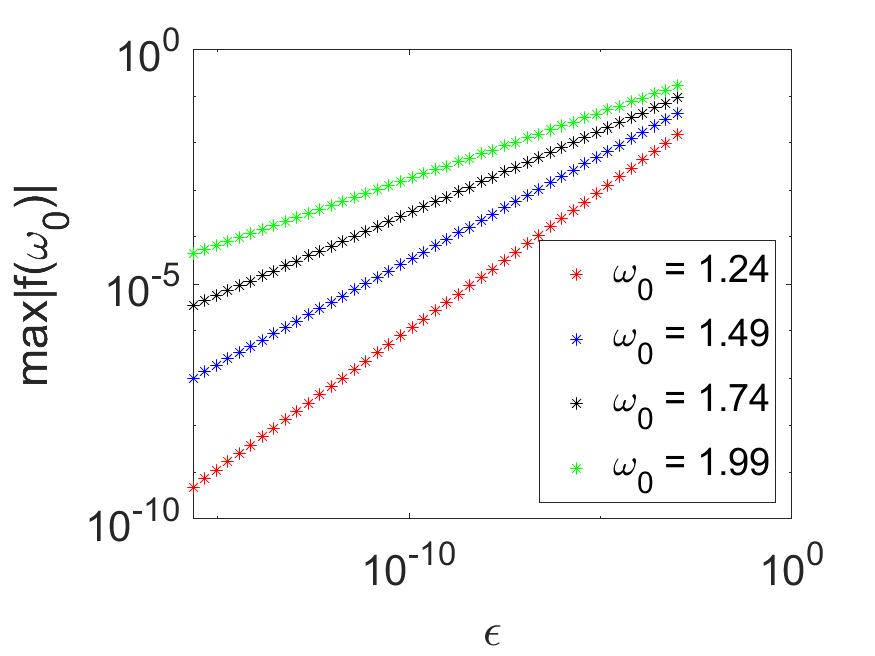}
\caption{}
\label{fig:powerlaw}
\end{subfigure}
\hspace{12ex}
\begin{subfigure}[t]{2in}
\includegraphics[scale=0.22]{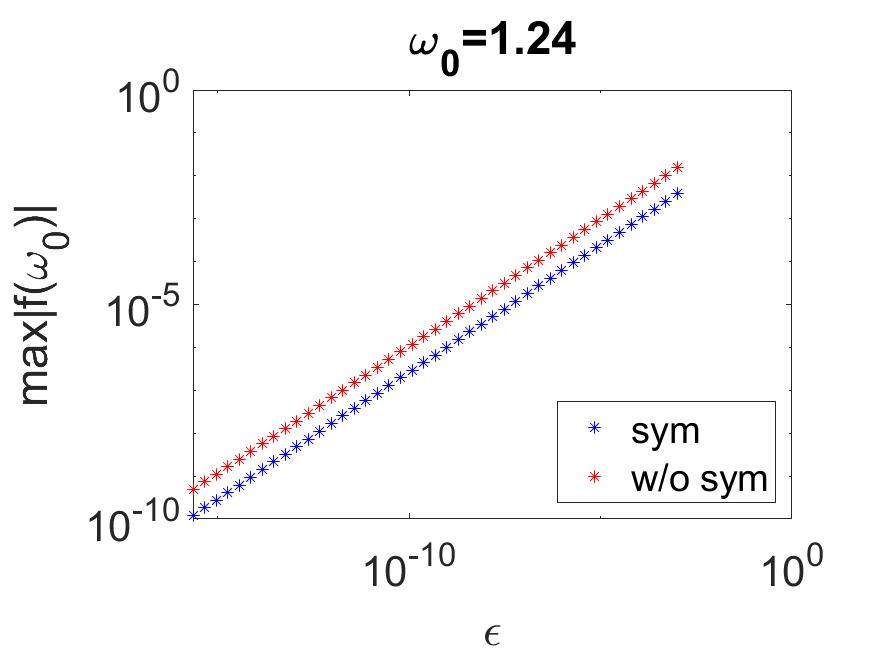}
\caption{}
\label{fig:comparison}
\end{subfigure}
\caption{Numerical support for the power law transition principle.}
\end{center}
\end{figure}

These results demonstrate the power law principle we
have formulated in \cite{grho-annulus, grho-gen}, generalizing the Nevanlinna
principle \cite{ciulli69,trefe19}. It says that the largest value a bounded analytic
function, which is of order $\Ge$ on a curve $\GG$ inside its domain of
analyticity can take at a point $\Go_{0}\not\in\GG$, decays as
$\Ge^{\Gg}$, where the exponent $0<\Gg<1$ depends on the geometry of the
domain, the curve $\GG$ and the point $\Go_{0}$. Figure~\ref{fig:gammaofomega}
shows how rapidly $\gamma(\Go_{0}, h)$ decays to 0, as
$\Go_{0}$ moves further away from $\GG$ for several values of $h$. The larger
the regularization parameter $h$ is, the better behaved is the extrapolation
problem.

In \cite{grho-gen,grho-annulus} we have gained some insight into the
mathematical structure of the maximizer function and the underlying mechanisms
that cause the power law precision deterioration in problems without the
symmetry constraint. Specifically, in the absence of symmetry the Hardy
function $\phi(z)$ of unit norm maximizing $|\phi(\Go_{0})|$ is a rescaled solution of a
linear integral equation of Fredholm type
\begin{equation}
\label{Kup}
\CK_{h} u + \epsilon^2 u = p_{\omega_0},
\end{equation}
where
\begin{equation} \label{K and p}
(\CK_{h} u)(\omega) = \int_{-1}^1 p_x(\omega) u(x) dx, 
\qquad \qquad p_{\Go_{0}}(\omega) = \frac{i}{2\pi (\omega-\overline{\Go_{0}}+2ih)}.
\end{equation}
The exponent $\gamma(\omega_0, h)$ can be computed from the unique solution
$u_{\epsilon}=u_{\epsilon, \omega_0, h}$ of the integral equation:
\begin{equation} \label{gamma main formula}
\gamma(\omega_0, h) = 1 - \lim_{\epsilon \to 0^+} \frac{\ln \|u_{\epsilon}\|_{L^2(-1,1)}}{\ln (1/\epsilon)}.
\end{equation}

The equality of the exponents for problems with and without symmetry shown in
Figure~\ref{fig:comparison} can be explained by the ``quantitative asymmetry''
of the solution $u_{\epsilon}$:
\begin{equation}
  \label{qassy}
  \lims_{\Ge\to 0}\frac{|u_{\epsilon}(\Go_{0})|}{|u_{\epsilon}(-\Go_{0})|}<1.
\end{equation}
Indeed, the symmetrized solution $v_{\epsilon}(\Go)=u_{\epsilon}(\Go)+\bra{u_{\epsilon}(-\bra{\Go})}$ has the
same order of magnitude at $\Go=\Go_{0}$ as $u_{\epsilon}(\Go_{0})$, as
$\Ge\to 0$. While numerically (\ref{qassy}) is seen to hold, we do not have a mathematical
proof of this inequality. Nonetheless, the equality of the exponents for
problems with and without symmetry is established in Section~\ref{SECT Optimal bound}.

\begin{figure}[t]
  \centering
  \includegraphics[scale=0.2]{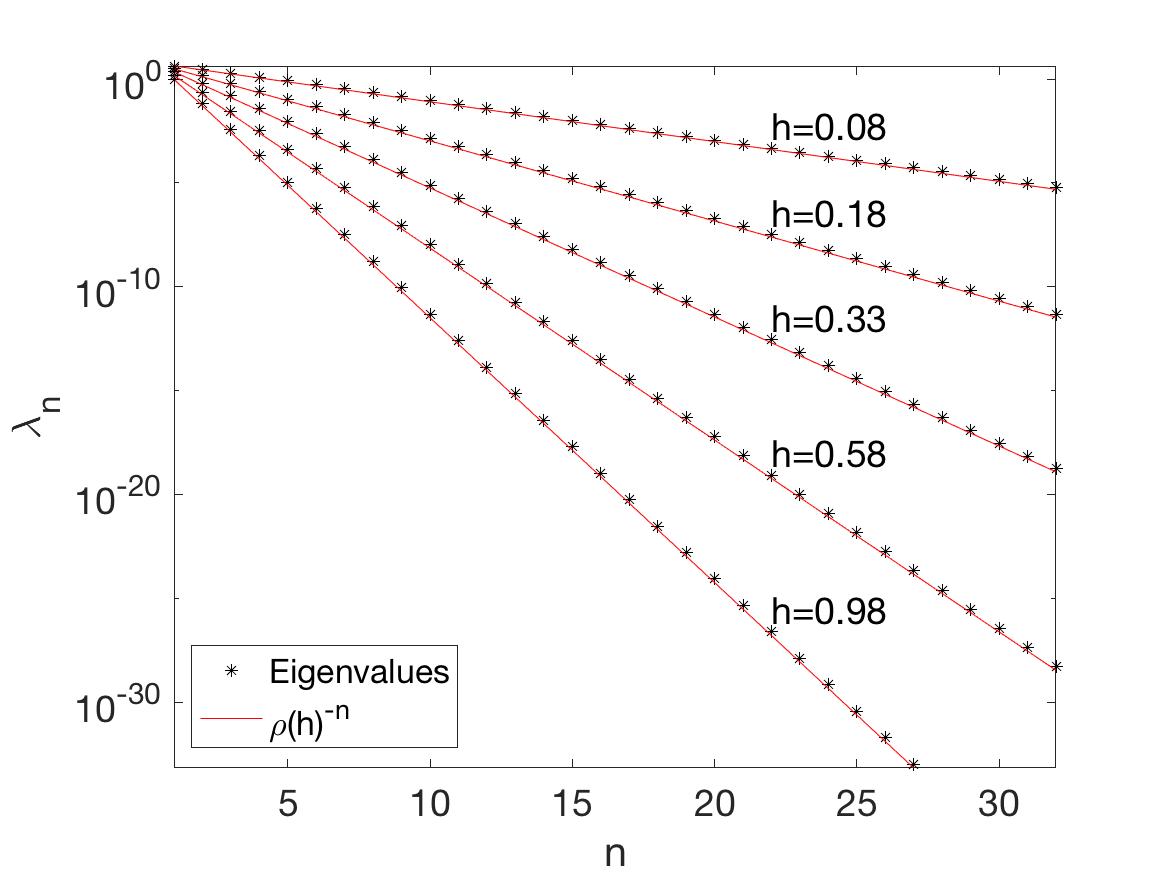}
  \caption{Comparison of the eigenvalues $\Gl_n$ of $\CK_h$ and $\rho(h)^{-n}$.}
  \label{fig:alpha}
\end{figure}

Once the symmetry constraint is discarded, the problem reduces to the one that
we have already studied in \cite{grho-gen}. The insights from that study
permit us to construct a ``near-optimal'' test function $\phi=f-g$ and give an
analytic formula for an upper bound on $\gamma(\omega_0, h)$, which is tight
for $h\ge 0.6$. To explain the construction of the near-optimal test function,
consider the orthonormal eigenbasis $\{e_{n}:n\ge 1\}\subset L^{2}(-1,1)$ of
$\CK_{h}$. We observe that taking $u=e_{n}$ in (\ref{K and p}) we obtain
\[
(p_{\Go_{0}},e_{n})_{L^{2}}=\bra{(\CK_{h}e_{n})(\Go_{0})}=\Gl_{n}\bra{e_{n}(\Go_{0})},
\]
where $\Gl_{n}>0$ are the corresponding eigenvalues. Then the solution of
(\ref{K and p}) can be written as
\[
u_{\epsilon}(\Go)=\sum_{n=1}^{\infty}\frac{\Gl_{n}\bra{e_{n}(\Go_{0})}e_{n}(\Go)}{\Gl_{n}+\Ge^{2}}.
\]
The next idea comes from the upper bound on the decay of the eigenvalues
$\Gl_{n}$ from \cite{beto17} and an identical asymptotics from
\cite{parfenov}. Figure~\ref{fig:alpha} shows that $\Gl_{n}\sim\rho^{-n}$,
where $\rho$ is the Riemann invariant of
$G_{h}=\bb{C}_{\infty}\setminus([-1,1]\pm ih)$. The Riemann invariant of a
doubly-connected region is the unique value of $\rho>1$ such that $G_{h}$ is
conformally equivalent to the annulus
\[
A_{\rho}=\{z\in\bb{C}:\rho^{-1/2}<|z|<\rho^{1/2}\}.
\]
If $\Psi:G_{h}\to A_{\rho}$ is the conformal isomorphism, then it maps
$\GG_{h}=[-1,1]+ih$ onto the circle $|z|=\rho^{-1/2}$ and the real
line\footnote{In order to explain the structure of the maximizer function it
  is convenient to work in a shifted plane $\bb{H}_{h}+ih$, so that the
  interval $[-1,1]$ where frequencies are measured corresponds to $\GG_{h}$
  and the boundary of analyticity $\im\Go=-h$ shifts to the real line.} is
mapped to the unit circle. In the annulus $A_{\rho}$ the same question we are
studying in the upper half-plane can be analyzed completely (see
\cite{grho-annulus} for details). In $A_{\rho}$ the eigenfunctions of the
corresponding integral operator are just functions $z^{n}$. Even though it is not
true that the eigenfunctions of $\CK_{h}$ are $\Psi(\Go)^{n}$, we can treat
them as such, replacing $e_{n}(\Go)$ with
$\Tld{e}_{n}(\Go)=(\sqrt{\rho}\Psi(\Go))^{n}$ (so that $|\Tld{e}_{n}(\Go)|=1$
on $\GG_{h}$). This gives us a replacement
\begin{equation}
  \label{solstruct}
  \Tld{u}_{\epsilon}(\Go)=\sum_{n=1}^{\infty}\frac{\bra{\Psi(\Go_{0})}^{n}\Psi(\Go)^{n}}{\rho^{-n}+\Ge^{2}}
\end{equation}
for the solution $u_{\epsilon}(\Go)$ of (\ref{K and p}). Lemma~\ref{LEM series asymp}
below shows that
\[
\Tld{u}_{\epsilon}(\Go_{0})=\sum_{n=1}^{\infty}\frac{|\Psi(\Go_{0})|^{2n}}{\rho^{-n}+\Ge^{2}}\sim
\Ge^{-2\Gth_{0}}P\left(\frac{2\ln(1/\Ge)}{\ln\rho}\right),
\]
where
\[
P(t)=\left(\frac{\rho}{|\Psi(\Go_{0})|^{2}}\right)^{t}\sum_{k\in\bb{Z}}\frac{|\Psi(\Go_{0})|^{2k}}{\rho^{t}+\rho^{k}}
\]
is a smooth 1-periodic function of $t$, and
\[
\Gth_{0}=1+\frac{2\ln|\Psi(\Go_{0})|}{\ln\rho}.
\]
\begin{figure}[t] 
\begin{center}
\begin{subfigure}[t]{2in}
\includegraphics[scale=0.15]{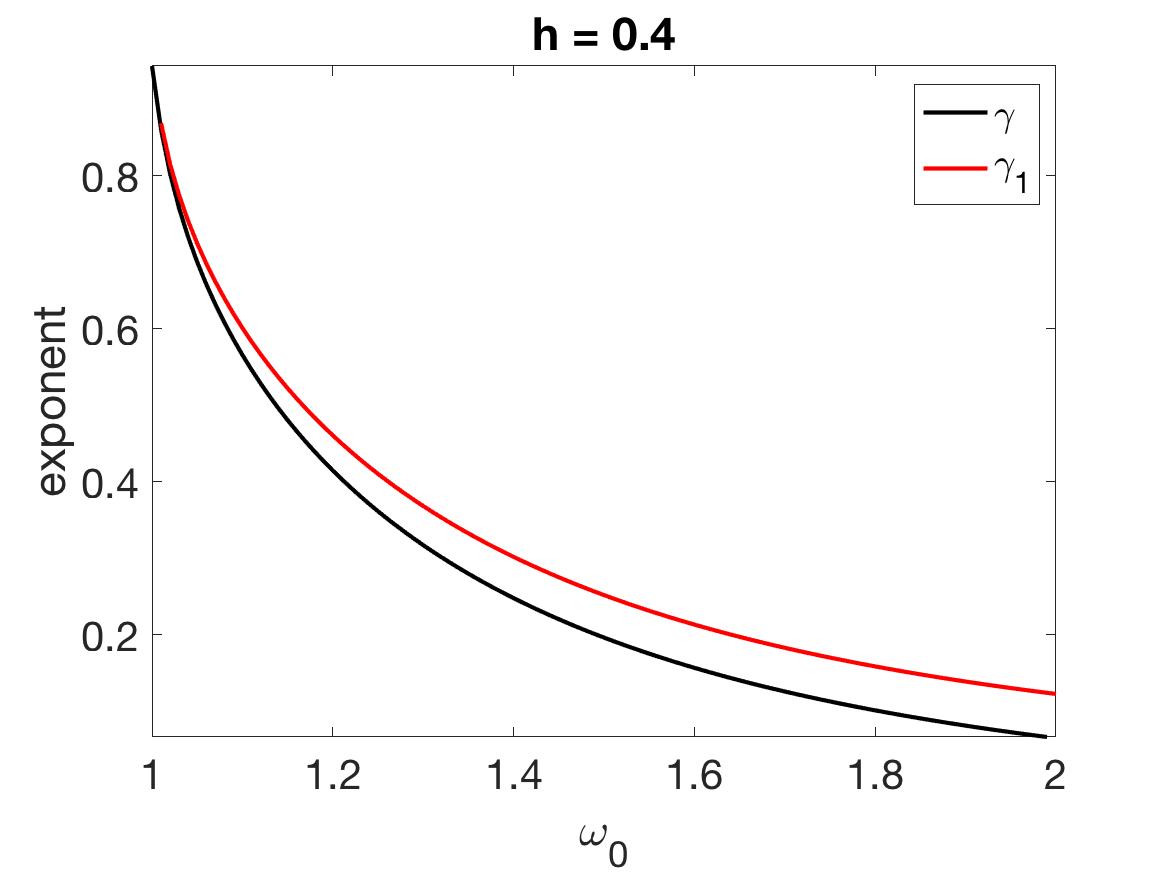}
\caption{}
\end{subfigure}
\hspace{.15in}
\begin{subfigure}[t]{2in}
\includegraphics[scale=0.15]{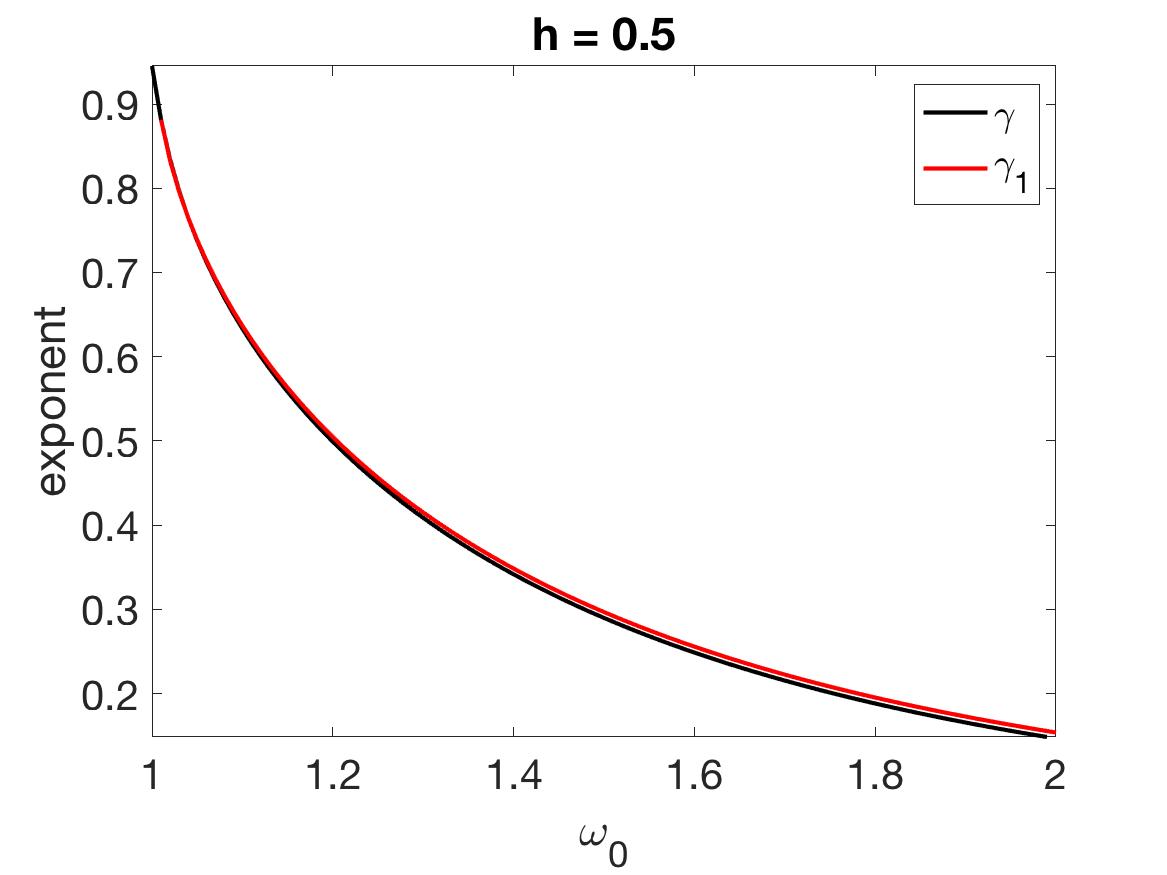}
\caption{}
\end{subfigure}
\hspace{.15in}
\begin{subfigure}[t]{2in}
\includegraphics[scale=0.15]{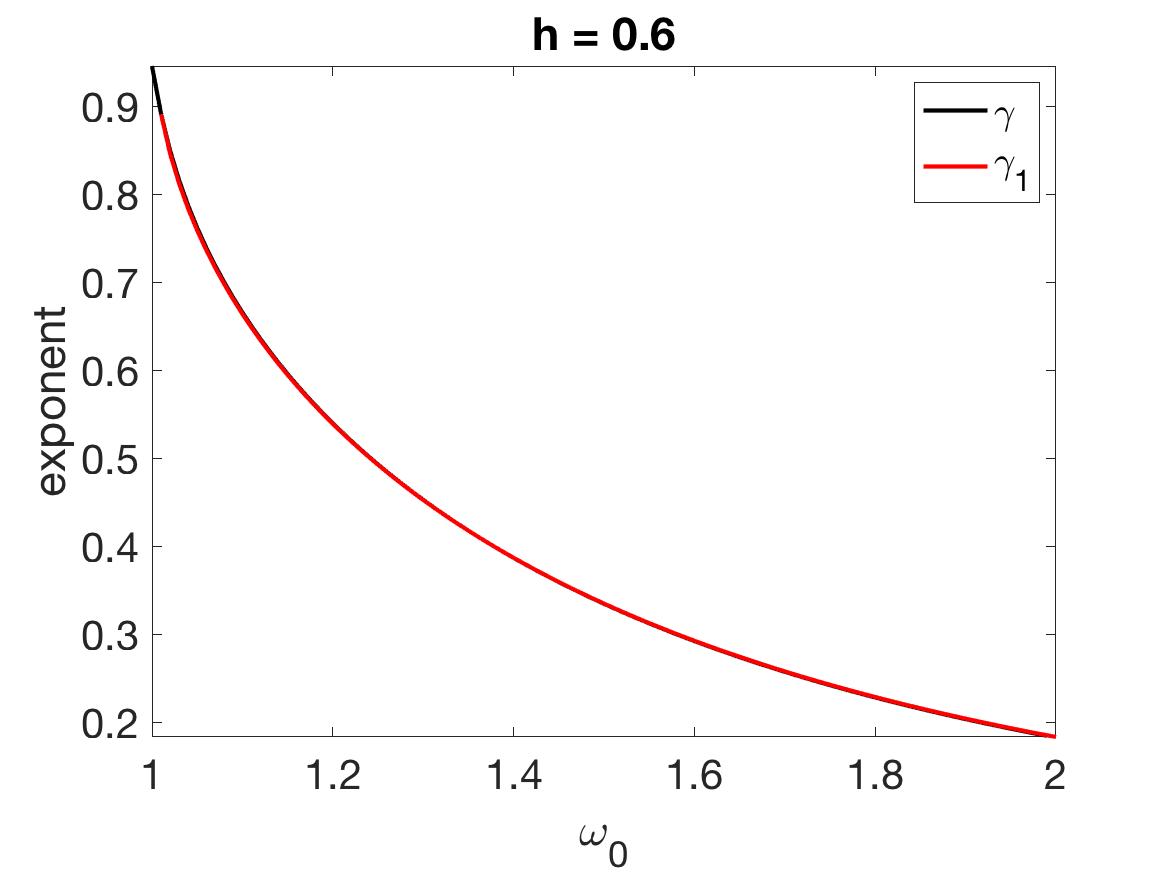}
\caption{}
\end{subfigure}
\caption{Comparison of $\gamma$ and $\gamma_1$}
\label{fig:gamma}
\end{center}
\end{figure}

The same lemma shows that when $\Go\in\GG_{h}$, then
$|\Psi(\Go)|=\rho^{-1/2}$, and
\[
|\Tld{u}(\Go)|\sim\Ge^{-2\Gth_{h}},\qquad\Gth_{h}=\hf+\frac{\ln|\Psi(\Go_{0})|}{\ln\rho},
\]
while, when $\Go\in\bb{R}$, $|\Psi(\Go)|=1$, and we have
\[
|\Tld{u}(\Go)|\sim\Ge^{-2\Gth_{\bb{R}}},\qquad\Gth_{\bb{R}}=1+\frac{\ln|\Psi(\Go_{0})|}{\ln\rho},
\]
Then 
$
M(\Go)=\Ge^{2\Gth_{\bb{R}}}\Tld{u}(\Go)
$
is $O(1)$ on $\bb{R}$,  $O(\Ge)$ on
$\GG_{h}$ and $O(\Ge^{\Gg_{1}})$ at $\Go_{0}$, where
\[
\Gg_{1}(\Go_{0})=2(\Gth_{\bb{R}}-\Gth_{0})=-\frac{\ln|\Psi(\Go_{0})|}{\ln\rho}.
\]
The explicit formula for the conformal isomorphism $\Psi:G_{h}\to A_{\rho}$
has been derived in \cite[p.~138]{akhi90} in terms of elliptic functions and
integrals, permitting us to compute an
upper bound $\Gg_{1}(\Go_{0})$ on the true exponent $\Gg(\Go_{0})$.  Figure~\ref{fig:gamma}
shows that $\Gg_{1}(\Go_{0})$ is a very good approximation for $\gamma$, when $h\ge0.6$.

\begin{lemma} \label{LEM series asymp}
Let $a\in\bb{C}$ and $b>0$ be such that $0<b<|a|<1$. Let
\begin{equation}
  \label{maxstruct}
  \phi(\eta)=\sum_{n=0}^{\infty}\frac{a^{n}}{\eta+b^{n}}.
\end{equation}
Then the asymptotics of $\phi(\eta)$, as $\eta\to 0^{+}$ is surprisingly
irregular, depending on the limit
\[
t=\lim_{j\to\infty}\left\{\frac{\ln\eta_j}{\ln b}\right\}
\]
along a sequence $\eta_{j}\to 0$, as $j\to\infty$, where $\{x\}$ denotes the
fractional part of $x$. Specifically,
$$\phi(\eta_j) \sim \phi_0(t) \eta_j^{-\gamma},$$
where
$$\phi_0(t) = \frac{b^{t}}{a^{t}}
\sum_{k \in \ZZ} \frac{a^{k}}{b^{t}+b^{k}}$$ 
is a smooth 1-periodic function, and
\[
\Gg=1-\frac{\ln a}{\ln b}.
\]
In the formulas above $a^{t}=e^{t\ln a}$ and $\ln$ can denote
any analytic branch (independent of $\eta$) that agrees with usual
logarithm for positive real numbers.
\end{lemma}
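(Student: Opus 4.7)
The plan is to reduce the one-sided series to a two-sided one by an index shift that isolates the power-law prefactor. Since $0<b<1$, write $\eta=b^{\tau}$ with $\tau=\ln\eta/\ln b\to+\infty$ as $\eta\to 0^{+}$, and split $\tau=N+t$ with $N=\lfloor\tau\rfloor\in\bb{Z}$ and $t=\{\tau\}\in[0,1)$. Reindexing the sum via $n=N+k$ and factoring out $a^{N}b^{-N}$ from each term gives
$$\phi(\eta)=a^{N}b^{-N}\sum_{k=-N}^{\infty}\frac{a^{k}}{b^{t}+b^{k}}.$$
Using $b^{-N}=\eta^{-1}b^{t}$ and $a^{N}=\eta^{\ln a/\ln b}a^{-t}$ (with a fixed branch of $\ln a$), the prefactor equals $\eta^{-\gamma}(b/a)^{t}$ with $\gamma=1-\ln a/\ln b$. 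Hence
$$\phi(\eta)=\eta^{-\gamma}\,\frac{b^{t}}{a^{t}}\,S_{N}(t),\qquad S_{N}(t)\defeq\sum_{k=-N}^{\infty}\frac{a^{k}}{b^{t}+b^{k}}.$$

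The second step is to pass $N\to\infty$ and identify the limit as the two-sided sum $S_{\infty}(t)=\sum_{k\in\bb{Z}}a^{k}/(b^{t}+b^{k})$ appearing in the definition of $\phi_{0}$. Here the hypothesis $0<b<|a|<1$ enters twice. For $k\to+\infty$ the general term is bounded by $|a|^{k}/b^{t}$, summable because $|a|<1$. For $k\to-\infty$, factoring $b^{k}$ out of the denominator gives $|a^{k}/(b^{t}+b^{k})|\le (|a|/b)^{k}=(b/|a|)^{|k|}$, summable because $b<|a|$. Both bounds are uniform in $t\in[0,1]$, so $S_{N}\to S_{\infty}$ uniformly. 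The same geometric tails justify term-by-term differentiation in $t$ of any order, showing that $\phi_{0}(t)=(b/a)^{t}S_{\infty}(t)$ is $C^{\infty}$. The $1$-periodicity $\phi_{0}(t+1)=\phi_{0}(t)$ follows from the reindexing $k\mapsto k+1$ in $S_{\infty}$, which exactly absorbs the extra factor $b/a$ coming from $(b/a)^{t+1}=(b/a)(b/a)^{t}$.

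Finally, given any sequence $\eta_{j}\to 0^{+}$ with $t_{j}=\{\ln\eta_{j}/\ln b\}\to t$, we have $N_{j}=\lfloor\ln\eta_{j}/\ln b\rfloor\to+\infty$, and uniform convergence of $S_{N}$ together with continuity of $S_{\infty}$ give $S_{N_{j}}(t_{j})\to S_{\infty}(t)$. Substituting into the displayed identity for $\phi(\eta_{j})$ yields
$$\phi(\eta_{j})=\eta_{j}^{-\gamma}\,\phi_{0}(t_{j})\,\bigl(1+o(1)\bigr)\sim\phi_{0}(t)\,\eta_{j}^{-\gamma},$$
which is the claimed asymptotic.

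The main obstacle, to the extent there is one, is bookkeeping rather than analysis: one must verify the two geometric tail bounds uniformly in $t\in[0,1]$ and keep the branch of $\ln a$ fixed throughout, so that the identity $a^{N}=\eta^{\ln a/\ln b}a^{-t}$ is used consistently. Since $\eta>0$ and $\tau=\ln\eta/\ln b$ is real, choosing the branch of $\ln a$ once fixes every occurrence of $a^{\tau}$, $a^{-t}$ and $\eta^{-\gamma}$ compatibly, and the resulting formula for $\phi_{0}$ is independent of that choice; otherwise the argument is simply a change of variables followed by dominated termwise passage to the limit.
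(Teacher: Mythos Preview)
Your proof is correct and follows essentially the same idea as the paper: reindex via $k=n-N$ with $N=\lfloor\ln\eta/\ln b\rfloor$ to factor out $\eta^{-\gamma}(b/a)^{t}$ and recognize the two-sided series. The only cosmetic difference is the order of operations---the paper first completes $\phi$ to the doubly-infinite sum $F(\eta)=\phi(\eta)+\psi(\eta)$ (noting $\psi(\eta)=\sum_{n\ge 1}a^{-n}/(\eta+b^{-n})$ is regular at $\eta=0$) and then reindexes to obtain the exact identity $F(\eta)\eta^{\gamma}=\phi_{0}(\{\ln\eta/\ln b\})$, whereas you reindex first and then pass to the limit in the truncated sum---but the substance is the same.
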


\begin{proof}
We first notice that unlike $\phi(\eta)$, the function 
\[
\psi(\eta)=\sum_{n=1}^{\infty}\frac{a^{-n}}{\eta+b^{-n}}
\]
is regular at $\eta=0$. In fact, $\psi(0)=b/(a-b)$. We therefore define a new function

\[
F(\eta)=\sum_{n \in \ZZ} \frac{a^{n}}{\eta+b^{n}}=\phi(\eta)+\psi(\eta),
\]
which obviously satisfies
\[
\lim_{j\to\infty}F(\eta_{j})\eta_{j}^{\Gg}=\lim_{j\to\infty}\phi(\eta_{j})\eta_{j}^{\Gg},
\]
whenever $\eta_{j}\to 0^{+}$ and the limit on the \rhs\ exists. Introducing the integer and fractional parts 

\[
N(\eta)=\left[\frac{\ln\eta}{\ln b}\right],\qquad
\Ga(\eta)=\left\{\frac{\ln\eta}{\ln b}\right\}
\]

\noindent we make a change of index of summation $k=n-N(\eta)$ and obtain, using
\[
N(\eta)=\frac{\ln\eta}{\ln b}-\Ga(\eta),
\]
after a short calculation, that
\[
F(\eta)\eta^{\Gg}=\sum_{k \in \ZZ} \frac{a^{k-\Ga(\eta)}}{1+b^{k-\Ga(\eta)}}=
\frac{b^{\Ga(\eta)}}{a^{\Ga(\eta)}}\sum_{k \in \ZZ} \frac{a^{k}}{b^{\Ga(\eta)}+b^{k}}.
\]
The statement of the lemma is now apparent.
\end{proof}

In general, we have shown in \cite{grho-annulus,grho-gen} that the exact
exponent $\Gg(\Go_{0},h)$ is determined by the exponential decay of the
magnitudes $|e_n(\Go_{0})|$ of the orthonormal eigenbasis $e_{n}$ of the
integral operator $\CK_{h}$. Specifically, we have proved that if
\begin{equation} \label{exp decay}
\lambda_n \simeq e^{-\alpha(h) n}, \qquad
|e_n(\Go_{0})| \simeq e^{-\beta(\Go_{0},h) n},
\end{equation}
then $0<2 \beta(\Go_{0},h) < \alpha(h)$, and
\begin{equation}
  \label{gamma}
\Gg(\Go_{0},h)=\frac{2 \Gb(\Go_{0},h)}{\Ga(h)}.
\end{equation}

The conjectured asymptotics $\Gl_{n}\sim\rho^{-n}$ of (squares of) singular
values of the restriction operator $\CR_{h}$ exactly coincides with the
asymptotics of the restriction operators to smooth domains established in
\cite{parfenov}. Unfortunately, the methods in \cite{parfenov} are not
applicable, since the end-points of the interval $[-1,1]$ can be regarded as
corners of angle 0, violating the desired smoothness
requirements. Nonetheless, Figure~\ref{fig:alpha} indicates that the technical
assumptions in \cite{parfenov} on the smoothness of domains could probably be
significantly relaxed.

The eigenvalues $\lambda_n$ are also connected to Kolmogorov $n$-widths
\cite{pinkus85}, since they are squares of singular values of the restriction
operator $\CR_{h} : H^2(\HH_h) \to L^2(-1,1)$ (here $H^2$ is defined in
\eqref{H^2 space def}). Specifically (cf. \cite[Theorem~6.1]{fisher83}),
$\sqrt{\lambda_{n+1}}$ is the Kolmogorov $n$-width of the restriction to
$L^2(-1,1)$ of closed unit ball in $H^2(\HH_h)$. The relation of the
Kolmogorov $n$-widths of restrictions of various classes of analytic functions
to corresponding Riemann invariants have been know in many cases
\cite{erokhin68, zakski76,fimi80}.

\section{The least squares problem} 
\setcounter{equation}{0} 
\label{SECT lsq}

\subsection{Existence and uniqueness}
We begin by examining the existence and uniqueness questions in the least
squares problem (\ref{lsq}). Let $f_{n}\in\PK_{h}$ be a minimizing sequence in (\ref{lsq}). Then
it has to be bounded in the $L^{2}(0,1)$ norm. We will show that this implies
existence of a subsequence converging uniformly on compact
subsets of $\HH_{h}$ to an analytic function. In general, this limit
does not need to be in $\PK_{h}$, since it is not closed in $H(\HH_{h})$. We will, therefore, need to characterize
the closure $\bra{\PK_{h}}$ of $\PK_{h}$.

We recall that a family of functions in $H(G)$ is called normal, if every
sequence has a convergent in $H(G)$ subsequence. In other words, normal families
of functions are exactly the precompact subsets in $H(G)$.
\begin{theorem}
  \label{th:closure}~
  \begin{enumerate}
\item[(i)] The closure of $\PK_{h}$ in $H(\bb{H}_{h})$ is
$
\CS_{h}=\{f(\Go)=F((\Go+ih)^{2}):F\in\mathfrak{S}\}.
$
\item[(ii)] For any $M>0$ the family of functions 
$
\CS_{h}^{M}=\{f\in\CS_{h}:\|f\|_{L^{2}(0,1)}\le M\}
$
is normal.
  \end{enumerate}
\end{theorem}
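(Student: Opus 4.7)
Both parts of the theorem reduce to classical statements about the Stieltjes class $\mathfrak{S}$ via the biholomorphism $\Phi(\omega)=(\omega+ih)^2$, which maps $\HH_h$ bijectively onto $\bb{C}\setminus[0,\infty)$. Under $\Phi$, the class $\CS_h$ is identified with $\{F\circ\Phi:F\in\mathfrak{S}\}$, while $\PK_h$ corresponds to the subclass with $\rho=0$ and strictly positive, finite total spectral mass. The pullback $f\mapsto f\circ\Phi^{-1}$ is a homeomorphism between $H(\HH_h)$ and $H(\bb{C}\setminus[0,\infty))$ in the topology of uniform convergence on compact subsets.

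For part (i), the inclusion $\PK_h\subset\CS_h$ is the content of \eqref{intrep}. The class $\mathfrak{S}$ is closed under local uniform convergence (nonnegativity of the imaginary part on $\HH_+$, reality and nonnegativity on $\RR_{<0}$, and nonnegativity of constant limits all pass to the limit), so $\CS_h$ is closed in $H(\HH_h)$. For density, I would approximate any $F(z)=\rho+\int_0^\infty d\sigma(\lambda)/(\lambda-z)\in\mathfrak{S}$ by
\[
F_n(z)=\bigl(\rho+\tfrac{1}{n}\bigr)\frac{n}{n-z}+\int_0^n\frac{d\sigma(\lambda)}{\lambda-z},
\]
which belongs to $\mathfrak{S}$ with vanishing constant term and strictly positive, finite total spectral mass $n\rho+1+\sigma([0,n])$, so that $F_n\circ\Phi\in\PK_h$. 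The local uniform convergence $F_n\to F$ on $\bb{C}\setminus[0,\infty)$ follows from dominated convergence using the uniform bound $|\lambda-z|^{-1}\le C_K/(\lambda+1)$ valid for $\lambda\ge 0$ and $z$ in any compact $K\subset\bb{C}\setminus[0,\infty)$.

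The main work for part (ii) lies in the following quantitative injectivity estimate, which I will refer to as the Key Lemma: there exists $\delta=\delta(h)>0$ such that
\[
F(-1)=\rho+\int_0^\infty\frac{d\sigma(\lambda)}{\lambda+1}\le\delta^{-1}\|F\circ\Phi\|_{L^2(0,1)}\qquad\text{for every }F\in\mathfrak{S}.
\]
Granting the Key Lemma, any $\{f_n\}\subset\CS_h^M$ yields $F_n\in\mathfrak{S}$ with $\rho_n+\|\sigma_n\|\le M/\delta$. The measures $\nu_n:=d\sigma_n/(\lambda+1)$ are positive measures on the compactification $[0,\infty]$ with uniformly bounded total mass, so Banach--Alaoglu gives a subsequence with $\rho_n\to\rho^*$ and $\nu_n\rightharpoonup\nu^*$ weak-$*$. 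The function $\lambda\mapsto(\lambda+1)/(\lambda-z)$ extends continuously to $[0,\infty]$ with value $1$ at $\infty$ for each $z\in\bb{C}\setminus[0,\infty)$, so the representation
\[
F_n(z)=\rho_n+\int_{[0,\infty]}\frac{\lambda+1}{\lambda-z}\,d\nu_n(\lambda)
\]
yields pointwise convergence $F_n(z)\to F^*(z)$, which is locally uniform by Vitali's theorem (using the same representation to obtain a uniform local bound). Pulling back by $\Phi$, a subsequence of $\{f_n\}$ converges in $H(\HH_h)$, establishing normality.

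I would prove the Key Lemma by contradiction. Suppose $F_n\in\mathfrak{S}$ with $F_n(-1)=1$ but $\|F_n\circ\Phi\|_{L^2(0,1)}\to 0$. The compactness argument above produces a locally uniform limit $G\in\mathfrak{S}$. Evaluating at $z=-1$, where $(\lambda+1)/(\lambda+1)\equiv 1$ on all of $[0,\infty]$, we obtain
\[
G(-1)=\rho^*+\nu^*([0,\infty])=\rho^*+(1-\rho^*)=1,
\]
so any mass of $\nu^*$ that escapes to $\infty$ is exactly absorbed into the constant term of $G$. On the other hand, $[0,1]\subset\HH_h$ is compact, so $f_n=F_n\circ\Phi\to G\circ\Phi$ uniformly on $[0,1]$, forcing $\|G\circ\Phi\|_{L^2(0,1)}=0$. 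Continuity of $G\circ\Phi$ on $[0,1]$ and analytic continuation from the arc $\Phi([0,1])$ then give $G\equiv 0$ on $\bb{C}\setminus[0,\infty)$, contradicting $G(-1)=1$. The main obstacle is the careful bookkeeping of spectral mass escaping to infinity; it is resolved by passing to the compactification $[0,\infty]$, where the normalization $F_n(-1)=1$ becomes a conserved quantity preserved in the limit.
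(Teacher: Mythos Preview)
Your proof is correct and follows the same overall architecture as the paper's: both establish the key inequality $\rho+\int_0^\infty\frac{d\sigma(\lambda)}{\lambda+1}\lesssim\|f\|_{L^2(0,1)}$ (your Key Lemma is exactly the upper inequality of the paper's Lemma~\ref{LEM technical}), and both deduce normality from this bound via weak-$*$ compactness of the spectral measures together with Montel/Vitali. The genuine differences are in execution. The paper proves the bound constructively, computing $\int_0^1\re f(\omega)\,d\omega$ by Fubini and exhibiting an explicit positive lower bound $\mu_h$ for the resulting kernel $\varphi(\sqrt{\lambda})$; this yields an explicit constant $C_h$ and, as a byproduct, the reverse inequality as well. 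Your argument is soft: a contradiction via compactness on the one-point compactification $[0,\infty]$, with the normalization $F_n(-1)=1$ becoming a conserved quantity because the test function $(\lambda+1)/(\lambda+1)\equiv 1$ is continuous there. This is elegant and avoids any computation, at the cost of giving no explicit constant and only one direction of the inequality (which, as you correctly note, is all the theorem needs). For density your single formula $F_n(z)=(\rho+\tfrac1n)\frac{n}{n-z}+\int_0^n\frac{d\sigma}{\lambda-z}$ is tidier than the paper's two-sequence construction $g_n+h_n$; and for closedness you invoke the classical fact that $\mathfrak{S}$ is closed under locally uniform limits directly, whereas the paper re-derives it from the quantitative lemma and Banach--Alaoglu on the dual of an explicit Banach space $\CB$. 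Both routes are valid; the paper's buys explicit constants that feed into later quantitative results, while yours is shorter and more conceptual.
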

\begin{proof}
The proof is based on the representation (\ref{Strep}), where we interpret the
measure $\Gs$ as an element of the Banach space $\CB^{*}$ dual to
\[
\CB=\left\{\phi\in C([0,+\infty)): \lim_{\Gl\to\infty} \Gl\phi(\Gl)=0 \right\},
\]
with the norm
\[
\|\phi\|_{\CB}=\max_{\Gl\ge 0}(\Gl+1)|\phi(\Gl)|.
\]
If we define the action of the measure $\Gs$ on $\phi\in\CB$ by
\[
\av{\phi,\Gs}=\int_{0}^{\infty}\phi(\Gl)d\Gs(\Gl),
\]
then
\begin{equation} \label{sigma norm}
\|\Gs\|_{*}=\int_0^\infty \frac{d\Gs(\Gl)}{\Gl+1},
\end{equation}
when the measure $\Gs$ is nonnegative.

The conclusion of the theorem then follows easily from the fundamental
estimate in the lemma below.
  \begin{lemma} \label{LEM technical}
There exists $c_{h}>0$ and $C_h>0$ depending only on $h$, such that for every $f \in \CS_h$
\begin{equation*}
c_h \|f\|_{L^2(0,1)} \le \rho + \|\Gs\|_{*} \leq C_h \|f\|_{L^2(0,1)},
\end{equation*}
where
\[
\rho=\lim_{\Go\to\infty}f(\Go).
\]
\end{lemma}

\begin{proof}
Let us start by proving the second inequality. Applying the
H{\"o}lder inequality to the representation
\begin{equation}
  \label{Lhrep}
  f(\Go)=\rho+\int_{0}^{\infty}\frac{d\Gs(\Gl)}{\Gl-(\Go+ih)^{2}}
\end{equation}
we obtain
\begin{equation*}
\|f\|_{L^2(0,1)} \geq \left( \int_0^1 | \re(f) |^2 d \omega \right)^\frac{1}{2} \geq \left| \int_0^1 \re(f) d \omega \right|.
\end{equation*} 

\noindent Applying Fubini's theorem we then compute

\begin{equation*}
\int_0^1 \re(f) d \omega = \rho +  \int_0^1 \int_0^\infty \re \left( \frac{1}{\lambda - (\omega+ih)^{2}} \right) d \sigma(\lambda) d \omega  = \rho + \int_0^\infty \varphi(\sqrt{\lambda}) \frac{d \sigma(\lambda)}{\lambda + 1}, 
\end{equation*}

\noindent where

$$\varphi(x)= \frac{x^2+1}{4x} \ln{\left( 1+\frac{4x}{(x-1)^2+h^2} \right)}.$$

\noindent Note that $\varphi(x)>0$ for $x>0$, and because $\ln(1+x) \sim x$ as $x\rightarrow 0$ we get

\begin{equation*}
\begin{split}
\lim_{x\rightarrow 0} \varphi(x) =  \frac{1}{1 + h^2} > 0,
\qquad \qquad
\lim_{x\rightarrow \infty} \varphi(x) = 1 >0.
\end{split}
\end{equation*}

\noindent Thus $\inf_{[0,\infty)} \varphi(x) = \mu_{h} >0$, which implies the
desired estimate with $C_{h}=1/\mu_{h}$.

Let us now turn to the first inequality. Again, by H{\"o}lder's inequality

\begin{equation*}
\begin{split}
\frac{1}{2}\|f\|_{L^2(0,1)}^2 - \rho^2 \leq&  \int_0^1 \left( \int_0^\infty  \frac{d \sigma(\lambda)}{\left| \lambda - (\omega+ih)^{2} \right|}  \right)^2 d \omega 
\leq
\\
\leq&
\int_0^\infty \frac{d \sigma(\lambda)}{\lambda+1} \cdot \int_0^1 \int_0^\infty \frac{\lambda+1}{\left| \lambda - (\omega+ih)^{2} \right|^2}  d \sigma(\lambda) d \omega =\|\sigma\|_* \cdot \int_0^\infty \psi(\lambda) d \sigma(\lambda)
\end{split}
\end{equation*}

\noindent where

\begin{equation*}
\psi(\lambda) = \int_0^1  \frac{\lambda + 1 }{\left| \lambda - (\omega+ih)^{2} \right|^2} d \omega = \frac{\varphi(\sqrt{\lambda})}{\lambda + h^2} + \frac{\lambda + 1}{4h(\lambda + h^2)} \left( \arctan{\tfrac{\sqrt{\lambda}+1}{h}} -\arctan{\tfrac{\sqrt{\lambda}-1}{h}} \right).
\end{equation*}

\noindent Note that $(\lambda+1) \psi(\lambda)$ is bounded in $[0,\infty)$, because $\varphi$ is a bounded function and the difference of arctangents can be bounded by $\frac{2h}{\lambda-1}$ for $\lambda>1$, by the mean value theorem. But then the desired inequality follows from the estimate

\begin{equation*}
\int_0^\infty \psi(\lambda) d \sigma(\lambda) \leq C_h \int_0^\infty \frac{d \sigma(\lambda)}{\lambda+1} = C_h \|\sigma\|_*. 
\end{equation*}

\end{proof}

Obviously $\PK_h \subset \CS_h$ and Theorem~\ref{th:closure} follows from the next lemma.

\begin{lemma} \label{LEM L_h is closed}
\mbox{}
\begin{enumerate}
\item[(i)] $\displaystyle \CS_h$ is closed in $H(\HH_h)$.

\item[(ii)] $\CS_h \subset \bra{\PK_h}$

\end{enumerate}
\end{lemma}

\begin{proof}
(i) Let $\{f_n\} \subset \CS_h$ be a sequence such that $f_n \to f$ in
$H(\HH_h)$. Then according to Lemma~\ref{LEM technical} the sequences
$\{\rho_n\} \subset \RR$ and $\{\sigma_n\} \subset \CB^*$ are bounded. By the
Banach-Alaoglu theorem the closed unit ball in $\CB^*$ is compact in the
weak-* topology. It is also sequentially compact because the Banach space $\CB$ is separable.   
Thus, there exist subsequences (which we do not relabel) $\rho_n \to \rho$ and  $\sigma_n \stackrel{\ast}{\rightharpoonup} \sigma$ weakly-* in $\CB^*$. Let us write

\begin{equation*}
f_n(\omega) = \rho_n + \|\sigma_n\|_* + \int_0^\infty G(\omega, \lambda) d \sigma_n(\lambda) ,
\end{equation*}

\noindent where

\begin{equation*}
G(\omega, \lambda) = \frac{1}{\lambda - (\omega+ih)^2} - \frac{1}{\lambda+1} = \frac{1+(\omega+ih)^2}{(\lambda-(\omega+ih)^2) \ (\lambda+1)}.
\end{equation*}

\noindent It is now evident that $G(\omega, \cdot) \in \CB$ for each fixed $\omega \in \HH_h$. Upon extracting convergent subsequence of the bounded sequence $\{\|\sigma_n\|_*\}$, with limit denoted by $a$, we obtain that

\begin{equation*}
\begin{split}
f(\omega) = \lim_{n \rightarrow \infty} f_n(\omega) = \rho + a + \int_0^\infty G(\omega,\lambda) d \sigma(\lambda)
= \rho + a - \|\sigma\|_*  + \int_0^\infty \frac{d \sigma(\lambda)}{\lambda - (\omega+ih)^2} .
\end{split}
\end{equation*}

\noindent By lower semicontinuity of the norm $a \geq \|\sigma\|_*$, hence we conclude that $f \in \CS_h$.

(ii)
1. Let us start by showing that for any constant $\rho \geq 0$, there exists $\{g_n\} \subset \PK_h$ such that $g_n \to \rho$ uniformly on $[0,1]$ as $n \to \infty$. Indeed, define

\begin{equation*}
g_n(\omega) = \rho \int_{n}^{n+1} \frac{\lambda d \lambda}{\lambda-(\omega+ih)^2}.
\end{equation*}

\noindent Clearly, $g_n \in \PK_h$ and

\begin{equation*}
g_n(\omega) - \rho = \rho (\omega+i h)^2 \int_n^{n+1} \frac{d \lambda}{\lambda-(\omega+ih)^2},
\end{equation*}

\noindent which approaches to zero, as $n \to \infty$, uniformly on compact subsets of $\HH_h$.

2. Let now $f \in \CS_h$ and let $\rho$ and $\sigma$ be as in its definition. Consider the functions

\begin{equation*}
h_n(\omega) = \int_0^n \frac{d \sigma(\lambda)}{\lambda - (\omega+ih)^2}.
\end{equation*} 

\noindent Note that $h_n \in \PK_h$, since its corresponding measure is $d\sigma_n = \chi_{(0,n)} d\sigma$ and

\begin{equation*}
\int_0^\infty d \sigma_n(\lambda) = \int_0^n d \sigma(\lambda) \leq (n+1) \int_0^n \frac{d \sigma(\lambda)}{\lambda + 1} < \infty.
\end{equation*} 

\noindent Now

\begin{equation*}
f(\omega) - h_n(\omega) = \rho + \int_n^\infty \frac{d \sigma(\lambda)}{\lambda - (\omega+ih)^2}
\end{equation*}

\noindent and by dominated convergence the above difference tends to $\rho$ uniformly on compact subsets of $\HH_h$. It remains to use the sequence $\{g_n\}$ from part 1 to get that $g_n + h_n$ is the desired sequence in $\PK_h$ converging to $f$ in $H(\HH_h)$. 

\end{proof}
To prove part (ii) of Theorem~\ref{th:closure} we observe that for any compact
subset $K\subset\bb{H}_{h}$ there exists a constant $C_{K}$ so that
\[
C_{K}=\sup_{\Gl\ge 0}\sup_{\Go\in K}\frac{\Gl+1}{|\Gl-(\Go+ih)^{2}|}<+\infty.
\]
Thus, for any $\Go\in K$  and $f\in\CL_{h}$ we have from representation (\ref{Lhrep})
\[
|f(\Go)|\le\rho+C_{K}\|\Gs\|_{*}.
\]
Now, Lemma~\ref{LEM technical} implies that the family of functions
$\CL_{h}^{M}$ is locally equibounded. We conclude, by Montel's theorem, that
$\CL_{h}^{M}$ is a normal family of analytic functions.
\end{proof}

A corollary of Theorem~\ref{th:closure} is stability of analytic continuation.
\begin{corollary} \label{COR stability}
Let $\{f_n\}, f \subset \CS_h$ be such that $f_n \to f$ in $L^2(0,1)$, then
$f_n \to f$ as $n \to \infty$ in $H(\bb{H}_{h})$.
\end{corollary}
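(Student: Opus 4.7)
The plan is to combine the normality assertion in Theorem~\ref{th:closure}(ii) with a standard subsequence-of-subsequence argument, using the $L^2(0,1)$ convergence to identify any limit point uniquely via the identity principle.

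First I would observe that since $f_n \to f$ in $L^2(0,1)$, the sequence $\{f_n\}$ is bounded in $L^2(0,1)$, say by some constant $M$. Hence $\{f_n\} \subset \CS_h^M$, which, by Theorem~\ref{th:closure}(ii), is a normal family in $H(\HH_h)$. In particular, every subsequence of $\{f_n\}$ admits a further subsequence that converges in $H(\HH_h)$ to some limit $g \in \CS_h$ (using Theorem~\ref{th:closure}(i) for closedness of $\CS_h$ under this topology).

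Next I would identify any such limit $g$ with $f$. Since $h>0$, the interval $[0,1]$ is a compact subset of $\HH_h$, so convergence in $H(\HH_h)$ implies uniform convergence on $[0,1]$, and therefore convergence in $L^2(0,1)$. Combining this with the assumed $L^2(0,1)$-convergence $f_n \to f$ gives $g = f$ a.e.\ on $[0,1]$; since both $g$ and $f$ are analytic on $\HH_h$ and agree on a set with an accumulation point, the identity theorem yields $g = f$ on all of $\HH_h$.

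Finally, since every subsequence of $\{f_n\}$ has a further subsequence converging in $H(\HH_h)$ to the same limit $f$, a standard topological lemma (in any metrizable space, a sequence converges to $L$ iff every subsequence has a sub-subsequence converging to $L$) implies that the full sequence $\{f_n\}$ converges to $f$ in $H(\HH_h)$. The only step requiring a moment's thought is the identification $g=f$, but this is immediate once one notes that $[0,1]$ sits inside the domain of analyticity $\HH_h$; everything else is a direct application of the normality result already in hand.
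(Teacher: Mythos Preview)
Your proposal is correct and follows essentially the same argument as the paper: boundedness in $L^2(0,1)$ gives normality via Theorem~\ref{th:closure}(ii), any $H(\HH_h)$-subsequential limit is identified with $f$ by $L^2(0,1)$ convergence together with the identity theorem, and the subsequence-of-subsequence argument yields convergence of the full sequence. Your write-up is in fact slightly more explicit than the paper's (you spell out the metrizability and the role of $[0,1]\subset\HH_h$), but the approach is the same.
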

Indeed, if $f_n \to f$ in $L^2(0,1)$, then $\|f_{n}\|_{L^{2}(0,1)}$ is
bounded. Then any converging subsequence $f_{n_{k}}\to g$ in $H(\bb{H}_{h})$ must
also converge to $g$ in $L^2(0,1)$. But then $f=g$ on $(0,1)$. Since both $f$
and $g$ are analytic in $\bb{H}_{h}$, then $f=g$ everywhere. Since the set of
limits of converging subsequences of $f_{n}$ consists of a single element
$\{f\}$, we conclude that $f_{n}\to f$ in $H(\bb{H}_{h})$.

Let us now return to the least squares problem (\ref{lsq}).
\begin{theorem} \label{THM lsq}
For a given $f_{\rm exp} \in L^2(0,1)$, the least squares problem
\begin{equation} \label{least squares 2}
\mathfrak{E} = \mathfrak{E}(f_{\rm exp}) = \min_{f\in \CS_h} \|f-f_{\rm exp}\|_{L^2(0,1)}
\end{equation}
 has a unique solution. Moreover,
\[
\inf_{f\in \PK_h} \|f-f_{\rm exp}\|_{L^2(0,1)}=\mathfrak{E}(f_{\rm exp}).
\]
\end{theorem}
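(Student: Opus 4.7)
The plan is to establish existence and uniqueness of the minimizer in $\CS_h$ by standard direct-method plus convexity arguments, and then to promote density of $\PK_h$ in $\CS_h$ (already proved) into equality of the two infima.

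For \emph{existence}, I would take a minimizing sequence $\{f_n\}\subset \CS_h$. The triangle inequality bounds $\|f_n\|_{L^2(0,1)}$ uniformly, so $\{f_n\}\subset \CS_h^M$ for some $M$. Part (ii) of Theorem~\ref{th:closure} says $\CS_h^M$ is a normal family in $H(\bb{H}_h)$, hence after passing to a subsequence $f_n\to f^*$ in $H(\bb{H}_h)$, and by part (i) of the same theorem, $f^*\in\CS_h$. Since $h>0$, the interval $[0,1]$ is a compact subset of $\bb{H}_h$, so the convergence is uniform on $[0,1]$ and therefore also in $L^2(0,1)$. This gives $\|f^*-f_{\rm exp}\|_{L^2(0,1)}=\mathfrak{E}$.

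For \emph{uniqueness}, I would first note that $\CS_h$ is convex: the representation $f(\Go)=F((\Go+ih)^2)$ with $F\in\mathfrak{S}$, and the fact that $\mathfrak{S}$ is closed under convex combinations (addition of representations \eqref{Strep} adds $\rho$'s and measures), show that $\lambda f_1+(1-\lambda)f_2\in\CS_h$ whenever $f_1,f_2\in\CS_h$ and $\lambda\in[0,1]$. If $f_1,f_2$ are two minimizers, then $(f_1+f_2)/2\in\CS_h$, and the parallelogram identity in $L^2(0,1)$ yields
\[
2\mathfrak{E}^2=\|f_1-f_{\rm exp}\|^2+\|f_2-f_{\rm exp}\|^2=2\Bigl\|\tfrac{f_1+f_2}{2}-f_{\rm exp}\Bigr\|^2+\tfrac{1}{2}\|f_1-f_2\|^2\ge 2\mathfrak{E}^2+\tfrac{1}{2}\|f_1-f_2\|_{L^2(0,1)}^2,
\]
forcing $f_1=f_2$ on $(0,1)$; analyticity in $\bb{H}_h$ then gives $f_1\equiv f_2$.

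For the \emph{equality of the two infima}, the inclusion $\PK_h\subset \CS_h$ gives $\mathfrak{E}\le\inf_{\PK_h}\|f-f_{\rm exp}\|_{L^2(0,1)}$. For the reverse inequality, apply part (ii) of Lemma~\ref{LEM L_h is closed} to the minimizer $f^*\in\CS_h$: there exists $\{g_n\}\subset\PK_h$ with $g_n\to f^*$ in $H(\bb{H}_h)$. As above, convergence in $H(\bb{H}_h)$ implies uniform, and hence $L^2(0,1)$, convergence on the compact set $[0,1]\subset \bb{H}_h$, so $\|g_n-f_{\rm exp}\|_{L^2(0,1)}\to\mathfrak{E}$, which proves the opposite inequality.

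There is no real obstacle here: all the heavy lifting has been done in Theorem~\ref{th:closure} and Lemma~\ref{LEM L_h is closed}. The only items that need a careful word are the convexity of $\CS_h$ (which is clear from the Stieltjes representation \eqref{Strep}) and the passage from $H(\bb{H}_h)$ convergence to $L^2(0,1)$ convergence, which is immediate because $[0,1]$ is compactly contained in the open half-plane $\bb{H}_h$ for $h>0$.
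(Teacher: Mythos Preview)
Your proof is correct and follows essentially the same route as the paper. The only cosmetic difference is in the existence step: the paper extracts a weakly convergent subsequence in $L^2(0,1)$ and invokes lower semicontinuity of the norm, whereas you use the normal-family property of $\CS_h^M$ to get strong $L^2(0,1)$ convergence directly; both rely on Theorem~\ref{th:closure}, and your version is arguably cleaner since it makes the identification of the limit as an element of $\CS_h$ explicit. The uniqueness arguments (parallelogram identity versus expanding $\|f_t-f_{\rm exp}\|^2$ as a quadratic in $t$) are two phrasings of the same strict-convexity fact, and the density argument for the equality of infima is identical.
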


\begin{proof}
To prove existence, let $\{f_{n}\}_{n=1}^\infty \in \CS_h$ be a minimizing sequence, then it is bounded in $L^{2}(0,1)$. Let us extract a weakly convergent subsequence, not relabeled, $f_{n} \weak f_{0}$ in $L^{2}(0,1)$, as $n\to\infty$. The limiting function $f_{0}$ is in $\CS_h$. By the
convexity of the $L^{2}$-norm we have
\[
\mathfrak{E} = \limi_{n\to\infty}\|f_{n}-f_{\rm exp}\|_{L^2(0,1)} \ge \|f_{0}-f_{\rm exp}\|_{L^2(0,1)}.
\]
Hence, $f_0$ is a minimizer. To prove that the infimum in \eqref{least squares 2} stays the same if we replace $\CS_h$ by $\PK_h$ we note that if $f_0 \in \CS_h$ is a minimizer, then there exists a sequence $\{g_n\} \subset \PK_h$ converging to $f_0$ strongly in $L^2(0,1)$.

To prove uniqueness, let $f_{1}$ and $f_{2}$ be two different solutions. Then $\|f_j-f_{\rm exp}\|_{L^2(0,1)}=\mathfrak{E}$ for $j=1,2$. Observe that the function $f_{t}=tf_{1}+(1-t)f_{2}$ is also admissible and therefore
\[
\mathfrak{E} \le \|f_{t}-f_{\rm exp}\|_{L^2(0,1)} \le 
t\|f_{1}-f_{\rm exp}\|_{L^2(0,1)} + (1-t)\|f_{2}-f_{\rm exp}\|_{L^2(0,1)}=\mathfrak{E},
\]
therefore $\|f_{t}-f_{\rm exp}\|_{L^2(0,1)} = \mathfrak{E}$ for
all $t\in[0,1]$. However,
\[
\|f_{t}-f_{\rm exp}\|^{2}_{L^2(0,1)} = t^{2}\|f_{1}-f_{2}\|_{L^2(0,1)}^{2} + 2 t \Re (f_{1}-f_{2},f_{2}-f_{\rm exp})
+\|f_{2}-f_{\rm exp}\|_{L^2(0,1)}^{2},
\]
which cannot be constant, since the coefficient at $t^{2}$ is
non-zero by our assumption $f_{1}\not=f_{2}$. The obtained contradiction,
concludes the theorem.
\end{proof}

\subsection{Properties of the minimizer} \label{SECT minimizer}
In this section we will prove that if the minimum in (\ref{least squares 2}) is nonzero,
then the minimizer must be a rational function in $\bb{C}$ with poles (and
zeros) on the line $\im(\Go)=h$. We use the method of Caprini \cite{capr74,capr80} to
prove the statement. The method for finding the necessary and sufficient
conditions for a minimizer in (\ref{least squares 2}) is based on our ability
to compute the effect of the change of $\rho$ and spectral measure $\Gs$ in
representation (\ref{Strep}) on the value of the functional we want to
minimize.  Suppose that
\[
f_{*}(\Go)=\rho_{*}+\int_{0}^{\infty}\frac{d\Gs_{*}(\Gl)}{\Gl-(\Go+ih)^{2}}
\]
is the minimizer and
\begin{equation}
  \label{compet}
  f(\Go)=\rho+\int_{0}^{\infty}\frac{d\Gs(\Gl)}{\Gl-(\Go+ih)^{2}}
\end{equation}
is a competitor. The variation $\phi=f-f_{*}$ can then be written as
\[
\phi(\Go)=\GD\rho+\int_{0}^{\infty}\frac{d\nu(\Gl)}{\Gl-(\Go+ih)^{2}},\qquad
\nu=\Gs-\Gs_{*},\quad\GD\rho=\rho-\rho_{*}.
\]
We then compute
\begin{equation}
  \label{var}
    \|f-f_{\rm exp}\|_{L^{2}}^{2}-\|f_{*}-f_{\rm exp}\|_{L^{2}}^{2}=
\GD\rho\lim_{t\to\infty}tC(t)+\int_{0}^{\infty}C(t)d\nu(t)+\|\phi\|_{L^{2}}^{2},
\end{equation}
where
\begin{equation}
  \label{Caprini}
C(t)=2\re\int_{0}^{1}\frac{f_{*}(\Go)-f_{\rm exp}(\Go)}{t-(\Go-ih)^{2}}d\Go,\qquad  t\ge 0
\end{equation}
is the Caprini function of $f_{*}(\Go)$.

\begin{theorem}
  \label{th:Caprini}
Suppose the infimum in (\ref{lsq}) is nonzero, then the
 the minimizer $f_{*}\in\CS_{h}$ in (\ref{least squares 2}) is given by
 \begin{equation}
   \label{fstar}
   f_{*}(\Go)=\rho_{*}+\sum_{j=1}^{N}\frac{\Gs_{j}}{t_{j}-(\Go+ih)^{2}}
 \end{equation}
 for some for some $N\ge 0$, $\Gs_{j}>0$, $0\le t_{1}<t_{2}<\dots<t_{N}$ and
 $\rho_{*}\ge 0$.  Moreover, $f_{*}$, given by (\ref{fstar}) is the minimizer
 \IFF its Caprini function $C(t)$ is nonnegative and vanishes at $t=t_{j}$,
 $j=1,\ldots,N$, and ``at infinity'', in the sense that
\begin{equation}
  \label{zatinf}
  2\re\int_{0}^{1}(f_{\rm exp}(\Go)-f_{*}(\Go))d\Go=\lim_{t\to\infty}tC(t)=0,
\end{equation}
provided $\rho_{*}>0$.
\end{theorem}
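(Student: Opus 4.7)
The plan is to exploit the exact variational identity (\ref{var}) in both directions: to extract necessary optimality conditions from one-parameter admissible perturbations of $(\rho_*, \sigma_*)$, and to read off sufficiency directly because the remainder $\|\phi\|^2$ is nonnegative. Fixing the minimizer (which exists and is unique by Theorem~\ref{THM lsq}), taking $\sigma = \sigma_* + s\delta_{t_0}$ with $\rho = \rho_*$ in (\ref{var}) and dividing by $s>0$ as $s\to 0^+$ gives $C(t_0) \geq 0$ for every $t_0 \geq 0$; continuity of $C$ then yields $C \geq 0$ on $[0,\infty)$. A dual perturbation that decreases $\sigma_*$ on a Borel set $E$ (namely $\sigma = \sigma_* - s\mu$ with $0 \leq \mu \leq \sigma_*$ supported in $E$) gives $\int C\, d\mu \leq 0$, and combined with $C \geq 0$ forces $C = 0$ \ $\sigma_*$-a.e., so $\mathrm{supp}(\sigma_*) \subset \{C = 0\}$. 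Varying only $\rho$ by $\pm s$ (so $\phi \equiv s$ and $\|\phi\|_{L^2(0,1)}^2 = s^2$) yields $\lim_{t\to\infty}tC(t) \geq 0$ always, with equality if $\rho_* > 0$.

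The main obstacle is showing that $\mathrm{supp}(\sigma_*)$ is finite. Since the integrand in (\ref{Caprini}) has no singularity for $t\geq 0$ and $\omega\in[0,1]$ (the quantity $(\omega-ih)^2 = \omega^2 - h^2 - 2ih\omega$ has imaginary part $-2h\omega$ vanishing only at $\omega=0$, where the real part is $-h^2<0$), $C(t)$ is real-analytic on an open neighborhood of $[0,\infty)$. If $C \not\equiv 0$, its zeros in $[0,\infty)$ are isolated, and the large-$t$ expansion $C(t) = \sum_{k\geq 0} a_k t^{-k-1}$ with $a_k = 2\,\mathrm{Re}\int_0^1 (f_*-f_{\rm exp})(\omega)(\omega-ih)^{2k}\,d\omega$ has a first nonzero term dominating at infinity, so zeros cannot accumulate there; hence $\{C=0\} \cap [0,\infty)$ is finite, yielding the rational form (\ref{fstar}). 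To rule out $C \equiv 0$ under the hypothesis $\mathfrak{E}>0$, I would apply Schwarz reflection to the Cauchy-type integral $H(t) = \int_0^1 (f_*-f_{\rm exp})(\omega)/(t-(\omega-ih)^2)\,d\omega$, which is analytic on $\bb{C}\setminus\CL$ for $\CL := \{(\omega-ih)^2 : \omega\in[0,1]\}$; since $\CL$ lies in the closed lower half-plane, $H$ is analytic throughout the upper half-plane. If $\mathrm{Re}\,H \equiv 0$ on $[0,\infty)$, then $H^\#(t) := -\overline{H(\bar t)}$ agrees with $H$ on $[0,\infty)$, and by analytic continuation throughout the upper half-plane minus the reflected curve $\overline{\CL}$. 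But $H$ is smooth across $\overline{\CL}$, whereas $H^\#$ inherits the Plemelj jumps of $H$ across $\CL$, which are proportional to $f_*-f_{\rm exp}$. This forces $f_* \equiv f_{\rm exp}$ on $[0,1]$, contradicting $\mathfrak{E}>0$.

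Sufficiency follows immediately from (\ref{var}): for any competitor $f$ with $\rho\geq 0$, $\sigma\geq 0$, the integral $\int C\, d\nu = \int C\, d\sigma - \sum_j C(t_j)\sigma_j$ equals $\int C\, d\sigma \geq 0$ by the assumptions $C\geq 0$ and $C(t_j)=0$; the term $\Delta\rho \cdot \lim_{t\to\infty}tC(t)$ is nonnegative because either the limit vanishes (when $\rho_*>0$, by hypothesis) or $\Delta\rho=\rho\geq 0$ (when $\rho_*=0$) together with $\lim tC(t)\geq 0$ automatically from $C\geq 0$; and $\|\phi\|^2\geq 0$. Hence the right-hand side of (\ref{var}) is nonnegative, so $f_*$ is a minimizer.
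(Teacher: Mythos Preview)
Your proof is correct and follows essentially the same route as the paper's: extract necessary conditions from one-parameter variations in (\ref{var}), use analyticity of $C$ to force $\mathrm{supp}(\sigma_*)$ to be finite, and read off sufficiency directly from the nonnegativity of $\|\phi\|^2$. Your Schwarz reflection/Plemelj argument showing that $C\not\equiv 0$ when $f_*\neq f_{\rm exp}$ is a welcome elaboration of what the paper asserts in a single line, and your power-series treatment at $t=\infty$ is equivalent to the paper's change of variable $s=1/t$.
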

\begin{proof}
If $\rho_{*}>0$, then we can consider the competitor (\ref{compet}) with
$\Gs=\Gs_{*}$. Formula (\ref{var}) then implies that
\[
\GD\rho\lim_{t\to\infty}tC(t)+(\GD\rho)^{2}\ge 0,
\]
where $\GD\rho$ can be either positive or negative and can be chosen as small
in absolute value as we want. This implies (\ref{zatinf}).

  Next, suppose $t_{0}\in[0,+\infty)$ is in the support of $\Gs_{*}$. For every $\Ge>0$ we define
  $I_{\Ge}(t_{0})=\{t\ge 0:|t-t_{0}|<\Ge\}$. Saying that $t_{0}$ is in the
  support of $\Gs_{*}$ is equivalent to $\Gs_{*}(I_{\Ge}(t_{0}))>0$ for all $\Ge>0$.
Then, there are two possibilities. Either
  \begin{enumerate}
  \item[(i)] $\displaystyle\lim_{\Ge\to 0}\Gs_{*}(I_{\Ge}(t_{0}))=0$, or
  \item[(ii)] $\displaystyle\lim_{\Ge\to 0}\Gs_{*}(I_{\Ge}(t_{0}))=\Gs_{0}>0$
  \end{enumerate}
Let us first consider case (i). Then we construct a competitor measure
\[
\Gs_{\Ge}(\Gl)=\Gs_{*}(\Gl)-\Gs_{*}|_{I_{\Ge}(t_{0})}+\Gth\Gs_{*}(I_{\Ge}(t_{0}))\Gd_{t_{0}}(\Gl),\quad\Gth>0,
\]
where instead of the distributed mass of $I_{\Ge}(t_{0})$ we place a single
point mass at $t_{0}$. We then define
\begin{equation}
  \label{feps}
  f_{\Ge}(\Go)=\rho_{*}+\int_{0}^{\infty}\frac{d\Gs_{\Ge}(\Gl)}{\Gl-(\Go+ih)^{2}}.
\end{equation}
Formula (\ref{var}) then implies
\[
\lim_{\Ge\to 0}\frac{\|f_{\rm exp}-f_{\Ge}\|_{L^{2}(0,1)}^{2}-
\|f_{\rm exp}-f_{*}\|_{L^{2}(0,1)}^{2}}{\Gs_{*}(I_{\Ge}(t_{0}))}=
(\Gth-1)C(t_{0}).
\]
If $f_{*}$ is a minimizer, then we must have $(\Gth-1)C(t_{0})\ge 0$ for all
$\Gth> 0$, which implies that $C(t_{0})=0$. 

In the case (ii) we have $\Gs_{*}(\{t_{0}\})=\Gs_{0}>0$. Then, for every
$|\Ge|<\Gs_{0}$ we construct a competitor measure
\[
\Gs_{\Ge}(\Gl)=\Gs_{*}(\Gl)+\Ge\Gd_{t_{0}}(\Gl),\quad|\Ge|<\Gs_{0},
\]
as well as the corresponding $f_{\Ge}$, given by (\ref{feps}). We then compute
\begin{equation}
  \label{sigmavar}
  \lim_{\Ge\to 0}\frac{\|f_{\rm exp}-f_{\Ge}\|_{L^{2}(0,1)}^{2}-
\|f_{\rm exp}-f_{*}\|_{L^{2}(0,1)}^{2}}{\Ge}=C(t_{0}).
\end{equation}
Since in this case $\Ge$ can be both positive and negative we conclude that
$C(t_{0})=0$. 

Hence, we have shown that $C(t_{0})=0$ whenever $t_{0}\in[0,+\infty)$ is in
the support of the spectral measure $\Gs$ of the minimizer $f_{*}$. It remains
to observe that for any $t\in\bb{R}$
\[
C(t)=\int_{0}^{1}\frac{f_{\rm exp}(\Go)-f_{*}(\Go)}{t-(\Go-ih)^{2}}d\Go+
\int_{0}^{1}\frac{\bra{f_{\rm exp}(\Go)}-\bra{f_{*}(\Go)}}{t-(\Go+ih)^{2}}d\Go.
\]
Thus, $C(t)$ is a restriction to the real line of a complex analytic function
on the \nbh\ of the real line in the complex $t$-plane. By assumption, $f_{\rm
  exp}\not=f_{*}$, and therefore $C(t)$ is not identically zero. In
particular, the zeros of $C(t)$ cannot have an accumulation point on the real
line. We can also see that the sequence of zeros of $C(t)$ cannot go to infinity by
considering
\[
B(s)=C\left(\nth{s}\right)=s\int_{0}^{1}\frac{f_{\rm exp}(\Go)-f_{*}(\Go)}{1-s(\Go+ih)^{2}}d\Go+
s\int_{0}^{1}\frac{\bra{f_{\rm exp}(\Go)}-\bra{f_{*}(\Go)}}{1-s(\Go-ih)^{2}}d\Go,
\]
which is analytic in a  \nbh\ of 0, and hence cannot have a sequence of zeros
$s_{n}\to 0$, as $n\to\infty$. We conclude that the support of the spectral
measure of the minimizer $f_{*}$ must be finite:
\[
\Gs_{*}(\Gl)=\sum_{j=1}^{N}\Gs_{j}\Gd_{t_{j}}(\Gl),
\]
and the minimizer must be a rational function. 

Now let us consider the competitor (\ref{compet}) defined by $\rho=\rho_{*}$
and $\Gs(\Gl)=\Gs_{*}+\Ge\Gd_{t_{0}}(\Gl)$, where $\Ge>0$ and
$t_{0}\not\in\{t_{1},\ldots,t_{N}\}$. Formula (\ref{var}) then implies that
\[
\Ge C(t_{0})+\Ge^{2}\|\phi_{0}\|^{2}_{L^{2}}\ge
0,\qquad\phi_{0}(\Go)=\frac{1}{t_{0}-(\Go+ih)^{2}}.
\]
for all sufficiently small $\Ge>0$, which implies that $C(t)\ge 0$ for all $t\ge
0$. The necessity of the stated properties of the Caprini function $C(t)$ is
now established. 

Sufficiency is a direct consequence of formula (\ref{var}),
since we can write
\[
\nu(\Gl)=\Gs(\Gl)-\Gs_{*}(\Gl)=\sum_{j=1}^{N}\GD\Gs_{j}\Gd_{t_{j}}(\Gl)+\Tld{\nu}(\Gl),
\]
where $\Tld{\nu}(\Gl)$ is a positive Radon measure without any point masses at
$\Gl=t_{j}$, $j=1,\ldots,N$. We then compute, via formula (\ref{var}), taking
into account that $C(t)\ge0$ for all $t\ge 0$ and $C(t_{j})=0$, that
\[
\|f_{*}+\phi-f_{\rm exp}\|_{L^{2}}^{2}-\|f_{*}-f_{\rm exp}\|_{L^{2}}^{2}=
\GD\rho\lim_{t\to\infty}tC(t)+\int_{0}^{\infty}C(t)d\Tld{\nu}(t)+\|\phi\|_{L^{2}}^{2}\ge 0,
\]
since the first term on the \rhs\ is either nonnegative, if $\rho_{*}=0$ or zero,
if $\rho_{*}>0$.
\end{proof}
We observe that that if $t_{j}>0$, then we must also have $C'(t_{j})=0$, since
$t=t_{j}$ is a point of local minimum of $C(t)$. If we write formula (\ref{fstar}) in
the form
\[
f_{*}(\Go)=\rho_{*}-\frac{\Gs_{0}}{(\Go+ih)^{2}}+\sum_{j=1}^{N}
\frac{\Gs_{j}}{t_{j}-(\Go+ih)^{2}},\qquad\rho_{*}\ge 0,\ \Gs_{0}\ge 0,\ t_{j}>0,\
\Gs_{j}>0,\ j=1,\ldots,N,
\]
then we have exactly $2(N+1)$ equations for $2(N+1)$ unknowns $\rho_{*}$,
$\Gs_{0}$, $t_{j}$, $\Gs_{j}$, $j=1,\ldots,N$:
\[
\rho_{*}\lim_{t\to\infty}tC(t)=0,\quad\Gs_{0}C(0)=0,\quad C(t_{j})=0,\quad
C'(t_{j})=0,\quad j=1,\ldots,N.
\]
Obviously, these equations do not imply that critical points $t_{j}$ are local
minima of $C(t)$, nor do they enforce the nonnegativity of $C(t)$. Taken
together with their highly nonlinear dependence on $t_{j}$ and an unknown
value of $N$, their practical utility for finding $f_{*}$ is dubious. Instead,
Theorem~\ref{th:Caprini} could be used to verify that a particular
$f_{*}(\Go)$ is the minimizer of (\ref{lsq}).

\section{Worst case error analysis}
\setcounter{equation}{0} 
\label{SECT wcea}

\noindent \textbf{Notation:}  We write $A \lesssim B$, if there exists a constant $c$ such that $A \leq c B$ and likewise the notation $A \gtrsim B$ will be used. If both $A \lesssim B$ and $A \gtrsim B$ are satisfied we will write $A\simeq B$. Throughout the paper all the implicit constants will be independent of $\epsilon$. Let also
\begin{equation} \label{S def}
Sf(\omega): = \overline{f(-\overline{\omega})}.
\end{equation}
In this section we analyze the quantity $\GD_{\Go_{0},h}(\epsilon)$, given by
\eqref{Delta} and answer the two questions posed in Section~\ref{SECT main
  results} about $\GD_{\Go_{0},h}(\epsilon)$ by showing that we can restate the
questions entirely in terms of the difference $f-g$. 

\subsection{Reformulation of the problem} \label{SECT Reformulation}

\noindent To analyze $\GD_{\Go_{0},h}(\epsilon)$ we examine the difference $\phi=f-g$. First observe that $\phi$ also has an integral representation (\ref{intrep}) with a signed measure
$\Gs = \sigma_f - \sigma_g$. Let now $\sigma = \sigma^+ - \sigma^-$ be the unique Hahn decomposition of $\sigma$ as a difference of two mutually orthogonal positive measures $\sigma^{\pm}$. Then we may write $\phi=\phi^+ - \phi^-$, where $\phi^\pm \in \PK_h$ are given by
\begin{equation} \label{phi+- def}
\phi^\pm (\omega) : = \int_0^\infty \frac{d \sigma^\pm(\lambda)}{\lambda - (\omega+ih)^2}.
\end{equation}
Thus, we expect that asymptotically $\GD_{\Go_{0},h}(\epsilon)$ and
\begin{equation} \label{Delta* intermid}
\sup \left\{ \frac{|\phi(\Go_{0})|}{\max \|\sigma^\pm\|_*} : \phi \in \PK_h - \PK_h \quad \text{and} \quad \frac{\|\phi\|_{L^2(0,1)}}{\max \|\sigma^\pm\|_*} \leq \epsilon  \right\},
\end{equation}
must be equivalent. Here we have abbreviated $\max \|\sigma^\pm\|_* := \max
(\|\sigma^+\|_*, \|\sigma^-\|_*)$. The next idea comes from the realization
that the asymptotics of the worst possible error is not very sensitive to
specific norms and spaces. The reason, as we have seen in \cite{grho-gen} for
a similar problem, is that the analytic function delivering the largest error
at $\Go_{0}$ is analytic in a larger half-space $\bb{H}_{2h}$ and is therefore
bounded in a wide variety of norms. Our idea is therefore to prove asymptotic
equivalence of $\GD_{\Go_{0},h}(\epsilon)$ to a quadratic optimization problem
in a Hilbert space, permitting us to express the asymptotics of
$\GD_{\Go_{0},h}(\epsilon)$ in terms of the solution of the integral equation
(\ref{Kup}). 

Let us recall the definition of the Hardy class $H^2(\HH_{h})$
\begin{equation} \label{H^2 space def}
H^2(\HH_{h}) = \left\{ f \ \text{is analytic in} \ \HH_h: \sup_{y>-h} \|f\|_{L^2(\RR+iy)} < \infty \right\}.
\end{equation}
It is well known \cite{koos98} that functions in $H^2$ have $L^2$ boundary
data and that $\|f\|_{H^2(\HH_h)} = \|f\|_{L^2(\RR-ih)}$ defines a norm in
$H^2$.  We describe the relation between the Hardy space $H^2(\HH_{h})$ and
$\PK_{h}-\PK_{h}$ more precisely in the following lemma.
\begin{lemma} \label{LEM H2 subset K-K}
Let $f \in H^2(\HH_h)$ with $Sf=f$ and $\int_0^\infty x |\im f(x-ih)| < \infty$, then $f \in \PK_h - \PK_h$ with

\begin{equation} \label{sigma for H^2}
d \sigma(\lambda) = \frac{1}{\pi} \im f(\sqrt{\lambda}-ih) d \lambda
\end{equation}

\noindent Moreover, $f^\pm \in \PK_{h}$ and

\begin{equation} \label{f+- H2 norm bound}
\max \|\Gs_{f^\pm}\|_{*} \leq  \nth{2\sqrt{\pi}}\|f\|_{H^2(\HH_h)}.  
\end{equation}
\end{lemma}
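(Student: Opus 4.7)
The plan is to establish the integral representation for $f$ using the Cauchy formula for $H^2(\HH_h)$, combined with the symmetry, and then to read off the spectral measure, decompose it via Hahn, and estimate its $\CB^*$-norm by Cauchy-Schwarz on the boundary.

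First, I would translate to the standard setting by setting $g(\zeta)=f(\zeta-ih)$, so that $g\in H^2(\HH_+)$ and the symmetry $Sf=f$ becomes $\overline{g(-\bar\zeta)}=g(\zeta)$; in particular $\im g(x)=\im f(x-ih)$ is odd in $x\in\RR$. The Cauchy integral of $\overline{g(x)}$ vanishes in $\HH_+$ (the conjugate extends holomorphically into $\HH_-$), so subtracting this vanishing integral from the standard Cauchy formula for $g$ yields
\[
g(\zeta)=\frac{1}{\pi}\int_{-\infty}^{\infty}\frac{\im g(x)}{x-\zeta}\,dx,\qquad \zeta\in\HH_+.
\]
Substituting $\zeta=\omega+ih$, using oddness of $\im g$ to fold the integral onto $(0,\infty)$, splitting $\frac{2x}{x^2-(\omega+ih)^2}=\frac{1}{x-(\omega+ih)}+\frac{1}{x+(\omega+ih)}$, and finally changing variables $\lambda=x^2$ would produce the representation
\[
f(\omega)=\int_0^{\infty}\frac{d\sigma(\lambda)}{\lambda-(\omega+ih)^2},\qquad d\sigma(\lambda)=\frac{1}{\pi}\im f(\sqrt{\lambda}-ih)\,d\lambda,
\]
which is exactly \eqref{sigma for H^2}.

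Next I would Hahn-decompose $\sigma=\sigma^+-\sigma^-$ (trivially, by the sign of the density), define $f^\pm$ via \eqref{phi+- def}, and check the three axioms of $\PK_h$. Symmetry (S) and passivity (P) are automatic from the form of the representation with positive measures; for the plasma limit (L), observe that
\[
\int_0^{\infty}d\sigma^\pm(\lambda)=\frac{2}{\pi}\int_0^{\infty}u\,[\im f(u-ih)]_{\pm}\,du\le\frac{2}{\pi}\int_0^{\infty}u\,|\im f(u-ih)|\,du<\infty
\]
by the integrability hypothesis, so $f^\pm(\omega)=-A^\pm\omega^{-2}+O(\omega^{-3})$ with $A^\pm=\int d\sigma^\pm<\infty$ (on non-degenerate parts; if one of the $\sigma^\pm$ vanishes, the decomposition is trivial).

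Finally, for the norm bound I would write
\[
\|\sigma^\pm\|_*=\int_0^{\infty}\frac{d\sigma^\pm(\lambda)}{\lambda+1}=\frac{2}{\pi}\int_0^{\infty}\frac{u\,[\im f(u-ih)]_{\pm}}{u^2+1}\,du
\]
and apply Cauchy-Schwarz against the weight $u/(u^2+1)$. Using $\int_0^\infty u^2/(u^2+1)^2\,du=\pi/4$ together with $\int_0^\infty[\im f(u-ih)]_{\pm}^2\,du\le\tfrac{1}{2}\int_{\RR}|f(u-ih)|^2\,du=\tfrac{1}{2}\|f\|_{H^2(\HH_h)}^2$ (the factor $1/2$ coming from the even symmetry of $|\im f|^2$), one obtains $\|\sigma^\pm\|_*\lesssim\|f\|_{H^2(\HH_h)}$, matching \eqref{f+- H2 norm bound} up to the explicit constant. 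The only non-routine step is justifying the Cauchy representation (and the vanishing of the conjugate integral) rigorously for $H^2$ functions, but this is the standard Paley-Wiener theorem for the half-plane.
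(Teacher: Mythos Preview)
Your proposal is correct and follows essentially the same route as the paper: reduce to $\HH_+$ by translation, invoke the Poisson/Cauchy representation $g(\zeta)=\tfrac{1}{\pi}\int_{\RR}\frac{\im g(x)}{x-\zeta}\,dx$ for $H^2$ functions, use the oddness of $\im g$ to fold onto $(0,\infty)$ and change variables $\lambda=x^2$, then Hahn-decompose and bound $\|\sigma^\pm\|_*$ by Cauchy--Schwarz. The one small difference is in the constant: you bound $\int_0^\infty(\im f)^2\le\tfrac12\int_\RR|f|^2$, which yields $\|\sigma^\pm\|_*\le\tfrac{1}{\sqrt{2\pi}}\|f\|_{H^2}$; the paper sharpens this to the stated $\tfrac{1}{2\sqrt\pi}$ by also using that the real part of a Hardy function is the Hilbert transform of its imaginary part, giving $\|f\|_{H^2}^2=2\|\im f\|_{L^2(\RR)}^2=4\|\im f\|_{L^2(0,\infty)}^2$.
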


\begin{proof}
We observe that it is enough to prove the lemma for $h=0$ and then apply it to
functions $f(\Go-ih)\in H^{2}(\bb{H}_{+})$, where $f\in H^{2}(\HH_h)$ and $\Go\in\bb{H}_{+}$.
  
For Hardy functions the following representation formula holds (cf. \cite{koos98} p. 128)
\begin{equation} \label{Hardy rep}
f(\omega) = \frac{1}{\pi} \int_{\RR} \frac{\im f(x)}{x-\omega} d x, \qquad \qquad \omega \in \HH_+.
\end{equation}

\noindent Passing to limits in the symmetry relation $Sf(\omega) = f(\omega)$ as $\im \omega \downarrow 0$, and taking imaginary parts we see that $- \im f(x) = \im f(-x) $. The formula \eqref{Hardy rep} now gives

\begin{equation*}
\pi f(\omega) = \int_{0}^\infty \frac{\im f(x)}{x-\omega} d x +  \int_{0}^\infty \frac{\im f(-x)}{-x-\omega} d x = \int_{0}^\infty \frac{2x \im f(x) dx}{x^2-\omega^2} 
=
 \int_0^\infty \frac{\im f(\sqrt{\lambda}) d \lambda}{\lambda-\omega^2},
\end{equation*}
which implies \eqref{sigma for H^2}. 

Next, consider the functions
\begin{equation*}
f^{\pm}(\omega) = \int_0^\infty \frac{d \sigma^\pm(\lambda)}{\lambda-\omega^2},
\qquad \quad
d \sigma^\pm(\lambda) = \frac{1}{\pi} \left(\im f \right)^\pm(\sqrt{\lambda}) d \lambda,
\end{equation*} 
where $(\im f)^\pm$ denote the positive and negative parts of the real valued function $\im f$. Then $f = f^+ - f^-$ and since $\int_0^\infty x |\im f(x)| dx < \infty$, the measures $\sigma^\pm$ are finite and so $f^\pm \in \PK_0$.

Finally, we prove the inequality (\ref{f+- H2 norm bound}). We compute
\[
\|\Gs^{\pm}\|_{*}=\frac{2}{\pi}\int_{0}^{\infty}\frac{x(\im f)^{\pm}(x)}{1+x^{2}}dx.
\]
Applying the Cauchy-Schwarz inequality we obtain 
\[
\|\Gs^{\pm}\|_{*}\le\nth{\sqrt{\pi}}\|(\im f)^{\pm}\|_{L^{2}(0,+\infty)}
\le\nth{\sqrt{\pi}}\|\im f\|_{L^{2}(0,+\infty)}=\nth{2\sqrt{\pi}}\|f\|_{H^{2}(\HH_+)},
\]
where we have used the symmetry and the fact that the
real part of a Hardy function is the Hilbert transform of its imaginary part
\cite{koos98}, and therefore,
\[
\|f\|_{H^2(\HH_+)}^{2} = 2\|\im f\|^{2}_{L^2(\RR)}=4\|\im f\|^{2}_{L^2(0,+\infty)}.
\]
\end{proof}

In order to complete the transition from $\PK_{h}$ to Hardy spaces we need to
replace the norm $\|\Gs\|_{*}$ in (\ref{Delta* intermid}) with an equivalent
Hilbert space norm. This is accomplished in our next Lemma.
\begin{lemma} \label{LEM h' norms equiv}
Let $h' \in (0,h)$, then for any $f \in \PK_h$ 

\begin{equation} \label{h' norm equivalence}
\|f\|_{h'} := \left\| \frac{f}{\omega + ih} \right\|_{H^2(\HH_{h'})} \simeq \|\sigma\|_*,
\end{equation}

\noindent where the implicit constants depend only on $h-h'$.
\end{lemma}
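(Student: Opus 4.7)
The plan is to prove the two inequalities separately, with constants depending only on $\delta := h - h'$.

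For the upper bound $\|f\|_{h'} \lesssim \|\sigma\|_*$, I would use the representation (\ref{intrep}) (with $\rho = 0$, as $f \in \PK_h$) and evaluate on the boundary $\omega = x - ih'$, so $\omega + ih = x + i\delta$. The Cauchy--Schwarz inequality applied to the $d\sigma$-integral with weights $(\lambda+1)^{\pm 1/2}$ gives the pointwise bound
$$\left|\frac{f(x-ih')}{x+i\delta}\right|^2 \le \|\sigma\|_* \int_0^\infty \frac{(\lambda+1)\,d\sigma(\lambda)}{(x^2+\delta^2)\,|\lambda-(x+i\delta)^2|^2}.$$
Integrating in $x$ and applying Fubini, the task reduces to the uniform estimate $\Phi(\lambda) \le C_\delta (\lambda+1)^{-2}$ for the kernel integral
$$\Phi(\lambda) := \int_\RR \frac{dx}{(x^2+\delta^2)\,|\lambda-(x+i\delta)^2|^2}.$$
This is an elementary computation: at $\lambda = 0$ the integrand simplifies to $(x^2+\delta^2)^{-3}$, giving $\Phi(0) = 3\pi/(8\delta^5)$; for $\lambda \to \infty$ a Laplace-type analysis near the minima $x \approx \pm\sqrt\lambda$ produces $\Phi(\lambda) \sim \pi/(2\delta\lambda^2)$; continuity of $\Phi$ on $[0,\infty)$ then gives the uniform bound. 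Substituting yields $\|f\|_{h'}^2 \le C_\delta \|\sigma\|_*^2$.

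For the lower bound $\|\sigma\|_* \lesssim \|f\|_{h'}$, the idea is to evaluate $f$ at a single interior probe point and then invoke the reproducing-kernel inequality for $H^2(\HH_{h'})$. I would choose $\omega_0 = i(1-h') \in \HH_{h'}$, at which $(\omega_0+ih)^2 = -(1+\delta)^2$, so
$$f(\omega_0) = \int_0^\infty \frac{d\sigma(\lambda)}{\lambda+(1+\delta)^2} \simeq \|\sigma\|_*,$$
where the equivalence follows from the elementary two-sided bound $\lambda+1 \le \lambda+(1+\delta)^2 \le (1+\delta)^2(\lambda+1)$. On the other hand, the standard reproducing-kernel inequality $|g(w)| \le \|g\|_{H^2(\HH_{h'})}/(2\sqrt{\pi(\im w + h')})$, applied to $g = f/(\omega+ih)$ at $w = \omega_0$ (where $\im w + h' = 1$ and $|\omega_0+ih| = 1+\delta$), yields $|f(\omega_0)| \le (1+\delta)(2\sqrt\pi)^{-1} \|f\|_{h'}$, and hence $\|\sigma\|_* \le C_\delta \|f\|_{h'}$.

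The main technical obstacle is the uniform estimate $\Phi(\lambda) \lesssim (\lambda+1)^{-2}$. Qualitatively it follows from continuity together with the asymptotics at $\lambda = 0$ and $\lambda \to \infty$, but deriving clean explicit constants is somewhat tedious. A systematic route would be either a residue computation (the integrand extends to a rational function of $x$ with poles at $\pm i\delta$ and $\pm\sqrt\lambda \pm i\delta$, all off the real axis) or a direct case analysis exploiting the factorization $|\lambda - (x+i\delta)^2|^2 = ((\sqrt\lambda-x)^2+\delta^2)((\sqrt\lambda+x)^2+\delta^2)$. Since the implicit constants in the statement are allowed to depend on $h-h'$, any such estimate will do.
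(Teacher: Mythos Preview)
Your argument is correct but takes a genuinely different route from the paper. The paper does not split into two inequalities: it expands $\|f\|_{h'}^{2}$ as the double integral
\[
\|f\|_{h'}^{2}=\int_{0}^{\infty}\!\int_{0}^{\infty}I(\lambda,t)\,\frac{d\sigma(\lambda)}{\lambda+1}\,\frac{d\sigma(t)}{t+1},
\]
computes $I(\lambda,t)$ explicitly by residues as a rational function of $\lambda,t$, and then observes that $I$ is bounded above and below on $[0,\infty)^{2}$ by positive constants depending only on $\delta$. Both directions of the equivalence follow at once.

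Your approach trades this single explicit computation for two standard tools. The Cauchy--Schwarz step collapses the bilinear form to its diagonal, so you only need the one-variable estimate $\Phi(\lambda)\lesssim(\lambda+1)^{-2}$ rather than the full $I(\lambda,t)$; and the lower bound via a probe point plus the reproducing-kernel inequality is entirely soft and avoids any integral computation. This is arguably more robust and reusable, at the cost of losing the exact identity that the paper obtains. One small remark: your probe point $\omega_{0}=i(1-h')$ depends on $h'$, but, as you note, the resulting constants depend only on $\delta$, so the statement is preserved.
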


\begin{proof}
Since $\HH_{h'} \subset \HH_h$, it is clear that the function $f(\omega) / (\omega + ih)$ is analytic in $\HH_{h'}$. Next letting $\delta = h-h'$, using the integral representation \eqref{intrep} for $f$ and Fubini's theorem we compute

\begin{equation*}
\begin{split}
\|f\|_{h'}^2 = \int_\RR \frac{1}{x^2 + \delta^2} \int_0^\infty \int_0^\infty \frac{d \sigma(\lambda) d \sigma(t)}{[\lambda - (x+i \delta)^2] [t - (x-i \delta)^2]} dx 
= \int_0^\infty \int_0^\infty I(\lambda,t) \frac{d \sigma(\lambda)}{\lambda + 1} \frac{d \sigma(t)}{t+1},
\end{split}
\end{equation*} 

\noindent where

\begin{equation*}
I(\lambda, t) = \frac{\pi (\lambda + 1) (t+1)}{\delta (\lambda + 4 \delta^2) (t + 4 \delta^2)} \cdot \frac{(\lambda - t)^2 + 12\delta^2 (\lambda + t) + 96 \delta^4}{(\lambda - t)^2 + 8\delta^2 (\lambda + t) + 16 \delta^4}.
\end{equation*}

\noindent This concludes the proof, since it is clear that the function $I(\lambda, t)$ is bounded above and below by two positive constants depending only on $\delta$. 
\end{proof}

Now we are ready to give the desired Hilbert space reformulation of our
problem. For any $h>0$ we define
\begin{equation} \label{D def}
D_{h}(\epsilon) = \sup \left\{ |f(\Go_{0})| \ : f \in H^2(\HH_{h}) \ \text{and} \ Sf=f, \ \|f\|_{H^2(\HH_{h})} \leq 1, \ \|f\|_{L^2(-1,1)} \leq \epsilon \right\}.
\end{equation}
Notice that for convenience we suppressed the dependence on $\Go_{0}$ and also
replaced interval from $[0,1]$ by a symmetric interval $[-1,1]$, resulting in
an equivalent formulation due to the symmetry $Sf=f$ of the functions in
$\PK_h$.

\begin{theorem}[Equivalence of $\Delta$ and $D$] \label{THM Delta* D equiv}
For any $h' \in (0,h)$
\begin{equation} \label{Delta D equiv}
D_{h}(\epsilon) \lesssim \Delta_{h}(\epsilon) \lesssim D_{h'}(\epsilon),
\end{equation}
as $\epsilon\to 0$, where the implicit constants depend only on $h$ and $h'$.
\end{theorem}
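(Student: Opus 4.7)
We prove the two inequalities separately by passing between the classes with explicit constructions.

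\emph{The lower bound $D_h(\epsilon)\lesssim\Delta_h(\epsilon)$.} Given $f\in H^2(\HH_h)$ with $Sf=f$, $\|f\|_{H^2(\HH_h)}\le1$, $\|f\|_{L^2(-1,1)}\le\epsilon$, my plan is to use Lemma~\ref{LEM H2 subset K-K} to split $f=f^+-f^-$ with $f^\pm\in\PK_h$ and $\max\|\sigma_{f^\pm}\|_*\le\tfrac{1}{2\sqrt\pi}$. The pair $(f^+,f^-)$ may fail to lie in $U_h(\epsilon)$ when $\max\|\sigma_{f^\pm}\|_*$ is small, so I ``boost'' it by a fixed $u\in\PK_h$ with $\|\sigma_u\|_*>0$ (e.g.\ $u(\omega)=1/(1-(\omega+ih)^2)$):
\[
f_1=f^++tu,\qquad g_1=f^-+tu,\qquad t=\frac{1}{\sqrt 2\,\|\sigma_u\|_*}.
\]
Because all the spectral measures are nonnegative, $\max(\|\sigma_{f_1}\|_*,\|\sigma_{g_1}\|_*)=\max\|\sigma_{f^\pm}\|_*+t\|\sigma_u\|_*$ lies in the fixed interval $[\tfrac{1}{\sqrt 2},\,\tfrac{1}{\sqrt 2}+\tfrac{1}{2\sqrt\pi}]$. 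The symmetry identity $\|f\|_{L^2(-1,1)}^2=2\|f\|_{L^2(0,1)}^2$ (a consequence of $Sf=f$) gives $\|f_1-g_1\|_{L^2(0,1)}=\|f\|_{L^2(0,1)}\le\epsilon/\sqrt 2$, so the ratio defining admissibility is at most $\epsilon$, i.e.\ $(f_1,g_1)\in U_h(\epsilon)$. Since $f_1-g_1=f$, this pair witnesses $\Delta_h(\epsilon)\ge|f(\omega_0)|/C_0$ with $C_0=\tfrac{1}{\sqrt 2}+\tfrac{1}{2\sqrt\pi}$, and taking the supremum in $f$ yields $D_h(\epsilon)\le C_0\,\Delta_h(\epsilon)$.

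\emph{The upper bound $\Delta_h(\epsilon)\lesssim D_{h'}(\epsilon)$.} Since the ratio defining $U_h(\epsilon)$ is invariant under common rescaling of $f$ and $g$, I normalize so that $\max(\|\sigma_f\|_*,\|\sigma_g\|_*)=1$, set $\phi=f-g$ and introduce
\[
\tilde\phi(\omega)=\frac{\phi(\omega)}{(\omega+ih)^2}.
\]
The multiplier $(\omega+ih)^2$ is $S$-symmetric (as the square of the $S$-antisymmetric function $\omega+ih$), so $S\tilde\phi=\tilde\phi$; its only zero $\omega=-ih$ lies on $\partial\HH_h\setminus\HH_{h'}$, so $\tilde\phi$ is analytic on $\HH_{h'}$. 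Applying Lemma~\ref{LEM h' norms equiv} to $f$ and $g$ separately and using the triangle inequality,
\[
\left\|\phi/(\omega+ih)\right\|_{H^2(\HH_{h'})}\lesssim\|\sigma_f\|_*+\|\sigma_g\|_*\le 2,
\]
and since $|1/(\omega+ih)|\le 1/(h-h')$ on the boundary $\bb{R}-ih'$, one obtains $\|\tilde\phi\|_{H^2(\HH_{h'})}\lesssim 1$. Simultaneously $\|\tilde\phi\|_{L^2(-1,1)}\le h^{-2}\|\phi\|_{L^2(-1,1)}\lesssim\epsilon$ and $|\tilde\phi(\omega_0)|=|\phi(\omega_0)|/|\omega_0+ih|^2\simeq|\phi(\omega_0)|$. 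Dividing $\tilde\phi$ by an a priori upper bound $\bar C$ for its $H^2$-norm produces a function admissible for $D_{h'}(C\epsilon)$, where $C=C(h,h')$. The simple homogeneity observation ``if $F$ is admissible for $D_{h'}(C\epsilon)$ and $C\ge 1$, then $F/C$ is admissible for $D_{h'}(\epsilon)$'' gives $D_{h'}(C\epsilon)\lesssim D_{h'}(\epsilon)$, completing the bound.

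\emph{Expected main obstacle.} Lemma~\ref{LEM H2 subset K-K} carries the auxiliary hypothesis $\int_0^\infty x\,|\im f(x-ih)|\,dx<\infty$, which is equivalent to the total variation $|\sigma_f|+|\sigma_g|$ being finite and does not follow automatically from $f\in H^2(\HH_h)$. My plan is to circumvent this by a density/approximation argument: replace $f$ by a truncation $f_N$ whose signed spectral density is supported in $\lambda\in[0,N]$, apply the decomposition argument to each $f_N$, and pass to the limit $N\to\infty$. Corollary~\ref{COR stability} together with continuity of point-evaluation on $H^2(\HH_h)$ preserves the constraints $\|f\|_{L^2(-1,1)}\le\epsilon$, $\|f\|_{H^2(\HH_h)}\le 1$, and the value $|f(\omega_0)|$ in the limit. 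Verifying uniform-in-$N$ control of $\|\sigma_{f_N^\pm}\|_*$ that is compatible both with the $S$-symmetry and with the limit is the most delicate part; the remainder of the argument consists of the direct applications of Lemmas~\ref{LEM H2 subset K-K} and~\ref{LEM h' norms equiv} described above.
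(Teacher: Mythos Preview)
Your argument is correct and follows essentially the same architecture as the paper's proof: one inequality uses Lemma~\ref{LEM H2 subset K-K} to turn an admissible $H^2$ function into a pair in $\PK_h$ (after ``boosting'' by a fixed element to pin down $\max\|\sigma^\pm\|_*$), and the other uses Lemma~\ref{LEM h' norms equiv} to turn a difference $f-g$ into an admissible $H^2(\HH_{h'})$ function. There are, however, two points where the paper proceeds a bit more economically.

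\emph{Upper bound.} Instead of $\tilde\phi=\phi/(\omega+ih)^2$, the paper takes $\tilde\phi(\omega)=i\,\phi(\omega)/(\omega+ih)$. The factor $i/(\omega+ih)$ is already $S$-symmetric (check: $S\big(i/(\omega+ih)\big)=\overline{i/(-\bar\omega+ih)}=i/(\omega+ih)$), so a single power suffices. Lemma~\ref{LEM h' norms equiv} then gives $\|\tilde\phi\|_{H^2(\HH_{h'})}\lesssim 1$ directly, with no further division needed. Your version with the square is equally valid, just slightly heavier.

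\emph{Lower bound.} The paper eliminates your ``main obstacle'' without truncation by applying the \emph{same} division trick here as well: given $\phi\in H^2(\HH_h)$ admissible for $D_h(\epsilon)$, it sets $\psi(\omega)=\phi(\omega)/(\omega+ih)^2$ and only then invokes Lemma~\ref{LEM H2 subset K-K}. The extra decay $|\psi(x-ih)|\sim|\phi(x-ih)|/x^2$ at infinity makes $\int_0^\infty x|\im\psi(x-ih)|\,dx<\infty$ follow from $\phi\in L^2(\RR-ih)$ by Cauchy--Schwarz (away from $x=0$; near $x=0$ one uses the odd symmetry of $\im\phi(\cdot-ih)$). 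One then splits $\psi=\psi^+-\psi^-$, adds a fixed $\psi_0\in\PK_h$ with $\|\sigma_{\psi_0}\|_*=1$ to both parts, and recovers $|\phi(\omega_0)|=|\omega_0+ih|^2|\psi(\omega_0)|$. This is exactly your boosting step, but with the division done \emph{before} Lemma~\ref{LEM H2 subset K-K} rather than after, which dispenses with the approximation argument entirely. Your truncation route also works (the uniform control $\|\sigma_{f_N^\pm}\|_*\le\tfrac{1}{2\sqrt\pi}\|f\|_{H^2}$ is precisely \eqref{f+- H2 norm bound}, and $f_N\to f$ uniformly on compacta of $\HH_h$), but it is the detour the paper avoids.
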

\begin{proof}
We first observe that 
\[
  \GD_{h}(\Ge)=\sup\{|f(\Go_{0})-g(\Go_{0})|:\{f,g\}\subset\PK_{h},\ 
\max\{\|\Gs_{f}\|_{*},\|\Gs_{g}\|_{*}\}=1,\ \|f-g\|_{L^{2}(-1,1)}\le\Ge\}.
\]
To prove the first inequality in \eqref{Delta D equiv}, let $\{f,g\}\subset\PK_{h}$ be such that
\[
\max\{\|\Gs_{f}\|_{*},\|\Gs_{g}\|_{*}\}=1,\qquad\|f-g\|_{L^{2}(-1,1)}\le\Ge.
\]
Let
\[
\phi(\Go)=\frac{i(f(\Go)-g(\Go))}{\Go+ih}.
\]
Then, $S\phi=\phi$. Moreover, by Lemma~\ref{LEM h' norms equiv}, for any $h'\in(0,h)$ we estimate
\[
\|\phi\|_{H^{2}(\HH_{h'})}=\|f-g\|_{h'}\le\|f\|_{h'}+\|g\|_{h'}\lesssim
\|\Gs_{f}\|_{*}+\|\Gs_{g}\|_{*}\le 2.
\]
We conclude that there exists a constant $c>0$, depending only on $h$ and
$h'$, such that $c\phi$ is
admissible for $D_{h'}(\Ge)$. Therefore,
\[
D_{h'}(\Ge)\ge c|\phi(\Go_{0})|=\frac{c|f(\Go_{0})-g(\Go_{0})|}{|\Go_{0}+ih|}
\]
Taking supremum over all such pairs $(f,g)$ we conclude that
\[
\GD_{h}(\Ge)\le CD_{h'}(\Ge)
\]
for some constant $C>0$, that depends on $h$ and $h'$, but not on $\Ge$.

To prove the other inequality, let $\phi\in H^{2}(\bb{H}_{h})$ be admissible for
$D_{h}(\Ge)$. The idea is to construct a pair of functions
$\{f,g\}\subset\PK_{h}$ that are admissible for $\GD_{h}(\Ge)$. Since
$\phi$ might not decay sufficiently fast at infinity to be in
$\PK_{h}-\PK_{h}$ we modify it and define
\[
\psi(\Go)=\frac{\phi(\Go)}{(\Go+ih)^{2}}.
\]
This modification preserves the symmetry ($S\psi=\psi$) and
ensures the required decay, so that Lemma~\ref{LEM H2 subset K-K} is
applicable. So that $\psi^{\pm}\in\PK_{h}$ and  
$\|\Gs_{\psi^{\pm}}\|_{*}\lesssim 1$. Now, let $\psi_{0}(\Go)\in\PK_{h}$
be such that $\|\Gs_{\psi_{0}}\|_{*}=1$. We define
\[
F(\Go)=\psi^{+}(\Go)+\psi_{0}(\Go),\qquad
G(\Go)=\psi^{-}(\Go)+\psi_{0}(\Go).
\]
We observe that there exists a constant $C>0$, such that
\[
1=\|\Gs_{\psi_{0}}\|_{*}\le\|\Gs_{F}\|_{*}\le C,\qquad
1=\|\Gs_{\psi_{0}}\|_{*}\le\|\Gs_{G}\|_{*}\le C.
\]
Thus, the pair $(f,g)$ given by
\[
f(\Go)=\frac{F(\Go)}{M},\quad
g(\Go)=\frac{G(\Go)}{M},\quad
M=\max\{\|\Gs_{F}\|_{*},\|\Gs_{G}\|_{*}\}\ge 1
\]
is admissible for $\GD_{h}(\Ge)$. Thus,
\[
\GD_{h}(\Ge)\ge|f(\Go_{0})-g(\Go_{0})|=\frac{|\phi(\Go_{0})|}{(\Go_{0}^{2}+h^{2})M}\ge
\frac{|\phi(\Go_{0})|}{C}.
\]
Taking supremum over all admissible $\phi$ we obtain the remaining inequality
in (\ref{Delta D equiv}).

\end{proof}


\subsection{The effect of the symmetry constraint} 
\label{SECT Optimal bound}

\noindent \textbf{Notation:} Let $H^2:= H^2(\HH_h)$ and let $(\cdot, \cdot)$ and $\|\cdot\|$ denote the inner product and its induced norm in $H^2$. 

\vspace{.1in}

The goal of this section is to analyze the asymptotics of the quantity
$D_{h}(\epsilon)$, as $\Ge\to 0$. Modulo
symmetry $Sf=f$, this has already been done in
\cite{grho-annulus}. Investigating the effect that symmetry may have on the
asymptotics of $D_{h}(\epsilon)$ means relating it to
\begin{equation} \label{sup no symm}
D_{h}^0(\epsilon) = \sup \left\{ |f(\Go_{0})| \ : f \in H^2 \ \text{and} \ \|f\| \leq 1, \ \|f\|_{L^2(-1,1)} \leq \epsilon \right\}.
\end{equation}
The key feature of (\ref{sup no symm}) is its invariance under
multiplying $f$ by a constant phase factor, which allowed us to replace the
target functional $|f(\Go_{0})|$ by a linear one $\re\,f(\Go_{0})$. Since,
multiplication by non-real factors breaks the symmetry $Sf=f$, this reduction
does not work for $D_{h}(\epsilon)$. Nevertheless, convexity of the
target functional permits us to relate it to linear functionals, if
we observe that
\begin{equation*}
  |f(\Go_{0})| = \max_{|\lambda| = 1} \re( \bra{\lambda} f(\Go_{0})).
\end{equation*}
Interchanging the order of maxima with respect to $\lambda$ and $f$ permits us
to use our solution of (\ref{sup no symm}) from \cite{grho-gen} if we can
eliminate the symmetry constraint. This is indeed possible.  Following the
ideas from the theory of reproducing kernel Hilbert spaces \cite{para16} we
write the Cauchy integral formula as an inner product in $H^{2}$: $f(\Go_{0})
= (f, p_{\Go_{0}})$, where $p_{\Go_{0}}$ is given by \eqref{K and p}.  It is
easy to check that for $f\in H^{2}$, satisfying the symmetry constraint we
have

\begin{equation*}
\re( \overline{\lambda} f(\Go_{0}))  = \re (f, \lambda p_{\Go_{0}}) = \re (f,q_{\Go_{0}.\lambda}), \qquad \qquad
q_{\Go_{0},\lambda} = \frac{\lambda p_{\Go_{0}} + S(\lambda p_{\Go_{0}})}{2}.
\end{equation*}
We can now discard the symmetry constraint. We claim that the maximizer function of the problem 

\begin{equation} \label{sup projected}
D_{\Gl,h}^{0}(\Ge)=\sup \left\{ \re (f, q_{\Go_{0},\lambda}) \ : f \in H^2 \ \text{and} \ \|f\| \leq 1, \ \|f\|_{L^2(-1,1)} \leq \epsilon \right\}
\end{equation}
\emph{automatically} has the required symmetry. Indeed, if $f\in
H^{2}$ solves \eqref{sup projected}, we can decompose it into its symmetric
and antisymmetric parts $f=f_s+f_a$, which are mutually real-orthogonal both in
$H^2$ and $L^2(-1,1)$. In other words, they satisfy
\[
\re(f_{s},f_{a})=\re(f_{s},f_{a})_{L^{2}(-1,1)}=0.
\]
Thus, 
\[
\|f\|^2 = \|f_s\|^2 + \|f_a\|^2\ge\|f_s\|^2,\quad
\|f\|^2_{L^{2}(-1,1)} = \|f_s\|^2_{L^{2}(-1,1)} +
\|f_a\|^2_{L^{2}(-1,1)}\ge\|f_s\|^2_{L^{2}(-1,1)},
\] 
which implies that
\[
\Gk=\max \left\{ \|f_s\|,  \frac{\|f_s\|_{L^2(-1,1)}}{\epsilon}
\right\}\le 1.
\]
Also, by the symmetry of
$q_{\Go_{0},\lambda}$ we find that
\[\re (f, q_{\Go_{0},\lambda}) = \re (f_s, q_{\Go_{0},\lambda}).\]
But then the function $f_s /\Gk $ satisfies the constraints of
\eqref{sup projected} and strictly increases the value of target functional
unless $\Gk=1$, or equivalently, $f_a=0$. Thus, if $f$ is the maximizer, then it has to be symmetric.

According to Theorem~\ref{THM D_h^0} from the next section, the maximizer function $f^{*}_{\Ge}(\Go)$ for
(\ref{sup no symm}) has the property that $f^{*}_{\Ge}(\Go_{0})=D_{h}^{0}(\Ge)>0$. Since
removing the symmetry constraint increases the set of admissible functions we
have an obvious inequality
\begin{equation}
  \label{Dobvious}
  D_{h}(\Ge)\leq f^{*}_{\Ge}(\Go_{0})=D_{h}^{0}(\Ge).
\end{equation}
Our foregoing discussion suggests that the function $v_{\Gl,\Ge}=\Gl f^{*}_{\Ge}$
must be a good candidate for the maximizer in $D_{\Gl,h}^{0}(\Ge)$. Using it
as a test function we get the inequality
\[
D_{\Gl,h}^{0}(\Ge)\ge\re(\Gl f^{*}_{\Ge},q_{\Go_{0},\Gl})=
\frac{f^{*}_{\Ge}(\Go_{0})}{2}+\hf\re(\Gl^{2}(f^{*}_{\Ge},Sp_{\Go_{0}})).
\]
We conclude that
\[
D_{h}(\Ge)=\max_{|\Gl|=1}D_{\Gl,h}^{0}(\Ge)\ge\frac{f^{*}_{\Ge}(\Go_{0})}{2}+\hf|(f^{*}_{\Ge},Sp_{\Go_{0}})|\ge\frac{f^{*}_{\Ge}(\Go_{0})}{2}=\hf D_{h}^{0}(\Ge).
\]
Hence, we have shown that
\begin{equation}
  \label{sym_nosym}
  \hf D_{h}^{0}(\Ge)\le D_{h}(\Ge)\le D_{h}^{0}(\Ge).
\end{equation}

\subsection{Optimal bound for $D_h^0(\epsilon)$}

Let us define

\begin{equation} \label{gamma with D_h^0}
\Gg(\Go_{0},h)=\Gg(h)=\limi_{\Ge\to 0}\frac{\ln D^0_{\omega_0, h}(\Ge)}{\ln\Ge}.
\end{equation}

\noindent Combining Theorem~\ref{THM Delta* D equiv} and inequality \eqref{sym_nosym} we see that $D_h^0(\epsilon) \lesssim \Delta_{h} (\epsilon) \lesssim D_{h'}^0(\epsilon)$ for any $h' \in (0,h)$ with implicit constants depending only on $h$ and $h'$. This in particular implies

\begin{equation} \label{gamma h' and gamma h bounds}
\gamma(\omega_0, h') \leq \limi_{\Ge\to 0}\frac{\ln \GD_{\omega_0, h}(\Ge)}{\ln\Ge} \leq \gamma(\omega_0, h), \qquad \qquad \forall h' \in (0,h)
\end{equation}

\noindent It is clear that continuity of $\gamma(\omega_0, h)$ in $h$ will
imply that $\GD_{\omega_0, h}(\epsilon)$ also has power law exponent
$\gamma(\omega_0, h)$. Let us show that the same conclusion will follow under
continuity of $\gamma(\omega_0, h)$ in $\omega_0$ as well. Indeed, it is
enough to show that 
\begin{equation}
  \label{h2om}
 \gamma(\omega_0, h') \geq \gamma\left(\frac{h}{h'}\omega_0, h\right),
\end{equation}
and combine this with \eqref{gamma h' and gamma h bounds}. To prove inequality
(\ref{h2om}), let $f_{\epsilon, \omega_0, h'}^* (\omega)$ be the maximizer
function for $D^0_{\omega_0, h'}(\Ge)$ (cf. Theorem~\ref{THM D_h^0} below) and
consider the function 
\[
g(z) =\sqrt{\frac{h'}{h}}f^*\left(\frac{h'}{h} z\right). 
\]
Note that $\|g\|_{H^2(\HH_h)} = \|f^*\|_{H^2(\HH_{h'})} = 1$ and $\|g\|_{L^2(-1,1)} \leq \|f^*\|_{L^2(-1,1)} = \epsilon$. Therefore, $g$ is an admissible function for $D^0_{\frac{h\omega_0}{h'}, h'}(\Ge)$, hence   

\begin{equation*}
D^0_{\frac{h\omega_0}{h'}, h'}(\Ge) \geq g(\tfrac{h\omega_0}{h'}) =\sqrt{\frac{h'}{h}} f^*(\omega_0) =\sqrt{\frac{h'}{h}} D^0_{\omega_0, h'}(\Ge),
\end{equation*}
which implies inequality (\ref{h2om}). In particular, inequalities (\ref{gamma
  h' and gamma h bounds}) and (\ref{h2om}) imply that $\Gg(\omega_0, h)$ is a
non-increasing function of $\Go_{0}$.  Numerical computations of
$\gamma(\omega_0, h)$ shown in Figure~\ref{fig:gammaofomega} indicate that
$\gamma(\omega_0, h)$ is indeed a continuous function of $\omega_0$. In
Appendix~\ref{SECT power law bounds} we prove that $\gamma(\omega_0,h)$ is also a
non-decreasing function of $h$, satisfying $\gamma(\Go_{0},h) \in (0,1)$ for any $h>0$ and
that $\lim_{h\to0^+} \gamma(\Go_{0},h) = 0$.

To find $\gamma$ we derive an optimal bound for $D_h^0$. Consider the restriction operator $\CR : H^2(\HH_h) \to L^2(-1,1)$
\cite{parfenov,gps03}, then
$\CK = \CR^* \CR$ is a positive, compact and self-adjoint integral operator defined by \eqref{K and p} (where we suppressed the $h$ dependence from the notation). In particular, $\|f\|_{L^2(-1,1)}^2 = (\CK f, f)$. Multiplying $f$ by a constant phase factor we can rewrite \eqref{sup no symm} as

\begin{equation} \label{maximization problem}
\begin{cases}
\Re (f,p_{\omega_0}) \to \text{max}
\\
(f,f) \leq 1
\\
(\CK f, f) \leq \epsilon^2
\end{cases}
\end{equation}

\begin{theorem} \label{THM D_h^0}
Let $\CK$ and $p_{\omega_0}$ be given by \eqref{K and p} and let $\eta = \eta(\epsilon, h, \omega_0)>0$ be the unique solution of $\|(\CK + \eta)^{-1} p_{\omega_0}\|_{L^2(-1,1)} = \epsilon \|(\CK + \eta)^{-1} p_{\omega_0}\|$, then

\begin{equation} \label{D_h^0 exact maximizer}
D_{h}^{0}(\Ge) = \frac{u^*(\omega_0)}{\|u^*\|}
\end{equation}

\noindent where $u^* = u^*_{\epsilon, h, \omega_0}$ solves the integral equation $(\CK + \eta) u^* = p_{\omega_0}$. In particular, the maximizer function is $f^* = u^* / \|u^*\|$. 

\end{theorem}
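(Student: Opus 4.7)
The first step is to convert the definition of $D_h^0(\epsilon)$ into the linearized problem \eqref{maximization problem}. Since multiplying an admissible $f$ by a unit modulus constant preserves both constraints and rotates the complex number $f(\omega_0)=(f,p_{\omega_0})$ freely, one has $D_h^0(\epsilon)=\sup \Re(f,p_{\omega_0})$ over the same admissible set, where I use the reproducing identity $(f,p_{\omega_0})=f(\omega_0)$ for $f\in H^2$, a direct consequence of the Cauchy integral formula applied on the boundary line $\im\omega=-h$.

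Given $\eta>0$ satisfying the stated equation, set $u^*=(\CK+\eta)^{-1}p_{\omega_0}$, which is well defined because $\CK\geq 0$. Using $p_{\omega_0}=\CR^*\CR u^*+\eta u^*$, for any admissible $f$,
\begin{equation*}
\Re(f,p_{\omega_0})=\Re(\CR f,\CR u^*)_{L^2(-1,1)}+\eta\,\Re(f,u^*)\leq \|\CR f\|_{L^2(-1,1)}\|\CR u^*\|_{L^2(-1,1)}+\eta\|f\|\|u^*\|\leq (\epsilon^2+\eta)\|u^*\|,
\end{equation*}
where the last step uses $\|\CR u^*\|_{L^2(-1,1)}=\epsilon\|u^*\|$. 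The candidate $f^*=u^*/\|u^*\|$ is admissible precisely because of the defining equation for $\eta$, and both Cauchy--Schwarz bounds are tight when $f$ is a positive multiple of $u^*$, so $f^*$ attains the bound. Finally,
\begin{equation*}
u^*(\omega_0)=(u^*,p_{\omega_0})=\bigl(u^*,(\CK+\eta)u^*\bigr)=\|\CR u^*\|_{L^2(-1,1)}^2+\eta\|u^*\|^2=(\epsilon^2+\eta)\|u^*\|^2,
\end{equation*}
which is real and positive (consistent with $|f^*(\omega_0)|=\Re(f^*,p_{\omega_0})$), so $D_h^0(\epsilon)=u^*(\omega_0)/\|u^*\|$.

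The main remaining point is existence and uniqueness of $\eta>0$ solving the nonlinear scalar equation. Diagonalizing $\CK$ as $\CK e_n=\lambda_n e_n$ with $\lambda_n\searrow 0$, and expanding $p_{\omega_0}=\sum_n c_n e_n$ with $c_n=(p_{\omega_0},e_n)=\overline{e_n(\omega_0)}$, the equation becomes $\phi(\eta)=\epsilon^2$, where
\begin{equation*}
\phi(\eta)=\frac{\sum_n \lambda_n|c_n|^2/(\lambda_n+\eta)^2}{\sum_n |c_n|^2/(\lambda_n+\eta)^2}.
\end{equation*}
As $\eta\to\infty$ the normalized weights $w_n=|c_n|^2/(\lambda_n+\eta)^2$ tend to $|c_n|^2/\|p_{\omega_0}\|^2$, giving $\phi(\eta)\to\|\CR p_{\omega_0}\|_{L^2(-1,1)}^2/\|p_{\omega_0}\|^2$; as $\eta\to 0^+$ they concentrate on the smallest eigenvalues, and a standard truncation argument yields $\phi(\eta)\to 0$. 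Together with continuity this delivers existence in the small-$\epsilon$ regime of interest. The main analytical obstacle is strict monotonicity of $\phi$; I would obtain it by differentiating and symmetrizing the resulting double sum, recognizing its numerator (up to a positive constant) as
\begin{equation*}
\tfrac{1}{2}\sum_{m,n}\frac{w_m w_n(\lambda_m-\lambda_n)^2}{(\lambda_m+\eta)(\lambda_n+\eta)},
\end{equation*}
which is strictly positive since $p_{\omega_0}$ excites infinitely many eigenmodes.
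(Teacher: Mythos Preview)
Your proof is correct and follows the same overall architecture as the paper: reduce $|f(\omega_0)|$ to $\Re(f,p_{\omega_0})$ by phase rotation, introduce $u^*=(\CK+\eta)^{-1}p_{\omega_0}$, and verify that the choice of $\eta$ makes $f^*=u^*/\|u^*\|$ saturate both constraints. The one genuine difference is in how you establish the upper bound $\Re(f,p_{\omega_0})\le(\epsilon^2+\eta)\|u^*\|$. The paper invokes convex duality with two Lagrange multipliers $\mu,\nu>0$, obtaining $\Re(f,p_{\omega_0})\le\tfrac12((\mu+\nu\CK)^{-1}p_{\omega_0},p_{\omega_0})+\tfrac12(\mu+\nu\epsilon^2)$ and then solving the KKT system to pin down $\mu,\nu$. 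You instead split $p_{\omega_0}=\CR^*\CR u^*+\eta u^*$ and apply Cauchy--Schwarz to each piece separately. Your route is more elementary and transparent once $\eta$ is known; the paper's Lagrangian viewpoint is more systematic in that it explains \emph{why} a one-parameter family $(\CK+\eta)^{-1}p_{\omega_0}$ should arise (as the locus of KKT points) rather than presupposing it. Your treatment of existence, uniqueness, and monotonicity of $\Phi$ (including the symmetrized double-sum formula for $\Phi'$) matches the paper's, which defers the technical details to \cite{grho-gen}.
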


We can actually express $D_h^0(\epsilon)$ only in terms of $\eta$.

\begin{lemma} \label{LEM exact formula}
Let $\eta = \eta(\epsilon)>0$ be as in Theorem~\ref{THM D_h^0}, then

\begin{equation} \label{D_h^0 exact exponential}
D_h^0(\epsilon) = C \exp \left\{ -\int_\epsilon^1 \frac{t dt}{t^2 + \eta(t)} \right\}
\end{equation}

\noindent where $C$ is a constant independent of $\epsilon$, namely $C = D_h^0(1)$.
\end{lemma}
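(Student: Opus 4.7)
The plan is to derive a first-order ODE for $D(\epsilon) := D_h^0(\epsilon)$ and integrate it. Let $u_\epsilon = (\CK+\eta(\epsilon))^{-1}p_{\omega_0}$ be the unnormalized maximizer from Theorem~\ref{THM D_h^0}, so that $D(\epsilon) = u_\epsilon(\omega_0)/\|u_\epsilon\|$ by \eqref{D_h^0 exact maximizer}. The starting point is a key algebraic identity: taking the $H^2$ inner product of $(\CK+\eta)u_\epsilon = p_{\omega_0}$ with $u_\epsilon$, and combining with the binding constraint $(\CK u_\epsilon, u_\epsilon) = \|u_\epsilon\|_{L^2(-1,1)}^2 = \epsilon^2\|u_\epsilon\|^2$, one obtains
\[
u_\epsilon(\omega_0) = (u_\epsilon,p_{\omega_0}) = (\CK u_\epsilon,u_\epsilon) + \eta\|u_\epsilon\|^2 = (\epsilon^2+\eta(\epsilon))\|u_\epsilon\|^2,
\]
so that $D(\epsilon) = (\epsilon^2+\eta(\epsilon))\|u_\epsilon\|$.

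To differentiate this, I introduce the scalar auxiliary functions $M(\eta) := \langle(\CK+\eta)^{-1}p_{\omega_0},p_{\omega_0}\rangle$ and $N(\eta) := \|(\CK+\eta)^{-1}p_{\omega_0}\|^2$, both smooth on $\eta>0$ and satisfying $M'(\eta) = -N(\eta)$. The constraint rewrites as $M(\eta) - \eta N(\eta) = \epsilon^2 N(\eta)$, equivalently $\epsilon^2 = M/N - \eta$. A Cauchy--Schwarz inequality for the positive operator $A:=(\CK+\eta)^{-1}$ gives $N^2 = \langle A^2 p_{\omega_0}, p_{\omega_0}\rangle^2 \leq \langle A p_{\omega_0}, p_{\omega_0}\rangle\langle A^3 p_{\omega_0}, p_{\omega_0}\rangle = M\cdot(-N'/2)$, with strict inequality since $p_{\omega_0}$ is not an eigenvector of $\CK$ (a discrete constraint that the continuously-varying reproducing kernel avoids). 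This strict inequality is equivalent to strict monotonicity of $\epsilon(\eta)$, so the implicit function theorem delivers $\eta(\epsilon) \in C^1$ on the binding regime.

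The ODE now falls out of a direct computation. From $D = (\epsilon^2+\eta)\sqrt{N}$,
\[
\frac{d\ln D}{d\epsilon} = \frac{2\epsilon+\eta'(\epsilon)}{\epsilon^2+\eta(\epsilon)} + \frac{N'(\eta)\eta'(\epsilon)}{2N(\eta)}.
\]
Differentiating $M(\eta) = (\epsilon^2+\eta)N(\eta)$ with respect to $\epsilon$ and using $M' = -N$ yields
\[
\eta'(\epsilon) = \frac{-2\epsilon N}{2N+(\epsilon^2+\eta)N'},
\]
and since $\frac{1}{\epsilon^2+\eta}+\frac{N'}{2N} = \frac{2N+(\epsilon^2+\eta)N'}{2N(\epsilon^2+\eta)}$, the two $\eta'$-contributions telescope to $-\epsilon/(\epsilon^2+\eta)$. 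The result is the clean identity
\[
\frac{D'(\epsilon)}{D(\epsilon)} = \frac{\epsilon}{\epsilon^2+\eta(\epsilon)},
\]
which upon integration from $\epsilon$ to $1$ gives $\ln D(1) - \ln D(\epsilon) = \int_\epsilon^1 \frac{t\,dt}{t^2+\eta(t)}$, yielding the stated formula with $C = D_h^0(1)$.

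The main technical subtlety lies in the implicit-function step and the identification of the valid range for $\epsilon$. The derivation above is carried out on the regime where the $L^2(-1,1)$-constraint is binding (and hence $\eta(\epsilon)>0$); one has to check that $\epsilon=1$ lies in this regime, and otherwise extend by continuity through the non-binding interval where $D(\epsilon)$ is constant and equals $\|p_{\omega_0}\|$. Once $\eta(\cdot)$ is known to be $C^1$, the cancellation producing the ODE is forced by the identity $M=(\epsilon^2+\eta)N$ and requires no further cleverness.
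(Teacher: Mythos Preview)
Your proof is correct and follows essentially the same route as the paper: both establish the identity $D(\epsilon)=(\epsilon^2+\eta)\|u_\epsilon\|$, differentiate it using the resolvent relation $\frac{d}{d\eta}\langle(\CK+\eta)^{-1}p_{\omega_0},p_{\omega_0}\rangle=-\|(\CK+\eta)^{-1}p_{\omega_0}\|^2$ together with the binding constraint, arrive at the ODE $D'/D=\epsilon/(\epsilon^2+\eta)$, and integrate. The only cosmetic differences are that the paper obtains the derivative relation via eigenbasis expansion and writes the ODE for $\|u_\epsilon\|$ rather than for $D$ directly; your resolvent-calculus packaging with $M,N$ and the explicit Cauchy--Schwarz/implicit-function justification for $\eta\in C^1$ is a slight tidying, but not a different argument.
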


\begin{proof}
The definition of $u^*$ implies $u^*(\omega_0) = (u^*, p_{\omega_0}) = (u^*, \CK u^* + \eta u^*) = (u^*, \CK u^*) + \eta (u^*,u^*)$, i.e.

\begin{equation} \label{u(z) and the norms}
u^*(\omega_0) = \|u^*\|_{L^2(-1,1)}^2 + \eta \|u^*\|^2 = (\epsilon^2 + \eta)\|u^*\|^2,
\end{equation}

\noindent where the last step follows from the definition of $\eta$. In particular we find that $D_h^0(\epsilon) = (\epsilon^2 + \eta) \|u^*\|$, therefore it is enough to derive a formula for $\|u^*\|$ in terms of $\eta$. Let us write $u^*_\epsilon$ instead of $u^*$ to show its dependence on $\epsilon$. The key observation is the relation between $\partial_\epsilon u^*_\epsilon(\omega_0)$ and $\|u^*_\epsilon\|$ which we are going to use in \eqref{u(z) and the norms} to deduce the desired formula. Let $\{e_n\}_{n=1}^\infty$ be the orthonormal basis of $H^2$ consisting of the eigenfunctions of $\CK$ with corresponding eigenvalues $\{\lambda_n\}_{n=1}^\infty$. The integral equation for $u^*_\epsilon$ diagonalizes in this basis and we find $(e_n, u^*_\epsilon) = e_n(\omega_0) / (\lambda_n + \eta(\epsilon))$. Therefore,

\begin{equation*}
u^*_\epsilon(\omega_0) = \sum_{n=1}^\infty \frac{|e_n(\omega_0)|^2}{\lambda_n + \eta(\epsilon)}, 
\qquad \qquad
\|u^*_\epsilon\|^2 = \sum_{n=1}^\infty \frac{|e_n(\omega_0)|^2}{\left(\lambda_n + \eta(\epsilon)\right)^2}.
\end{equation*}  

\noindent These formulas readily imply

\begin{equation} \label{u_eps in terms of H2 norm}
\partial_\epsilon u^*_\epsilon(\omega_0) = -\eta'(\epsilon)\|u^*_\epsilon\|^2. 
\end{equation}

\noindent Differentiating \eqref{u(z) and the norms} with respect to $\epsilon$ and using the relation \eqref{u_eps in terms of H2 norm} we find

\begin{equation*}
\left(2\epsilon + \eta'(\epsilon)\right) \|u^*_\epsilon\|^2 + 2  \|u^*_\epsilon\| \left(\epsilon^2 + \eta(\epsilon)\right)  \partial_\epsilon  \|u^*_\epsilon\| =    -\eta'(\epsilon)\|u^*_\epsilon\|^2,
\end{equation*}

\noindent which then gives

\begin{equation} \label{u norm deriv over u norm}
\frac{\partial_\epsilon  \|u^*_\epsilon\|}{\|u^*_\epsilon\|} = - \frac{\epsilon + \eta'(\epsilon)}{\epsilon^2 + \eta(\epsilon)} = - \frac{2\epsilon + \eta'(\epsilon)}{\epsilon^2 + \eta(\epsilon)} + \frac{\epsilon}{\epsilon^2 + \eta(\epsilon)}. 
\end{equation}

\noindent Integrating \eqref{u norm deriv over u norm} we find

\begin{equation}
 \|u^*_\epsilon\| = \frac{C}{\epsilon^2 + \eta(\epsilon)} \exp \left\{ -\int_\epsilon^1 \frac{t dt}{t^2 + \eta(t)} \right\},
\end{equation}

\noindent which concludes the proof.

\end{proof}

Combining \eqref{D_h^0 exact maximizer} with \eqref{u(z) and the norms} on one hand and using \eqref{D_h^0 exact exponential} on the other hand (where we change the variables in the integral), we obtain two different representations for the power law exponent:

\begin{equation} \label{gamma with exact maximizer}
\gamma(h) = \limi_{\epsilon \to 0} \frac{\ln \left( (\epsilon + \tfrac{\eta}{\epsilon}) \|u^*\|_{L^2(-1,1)} \right)}{\ln \epsilon}
= \limi_{t \to +\infty} \frac{1}{t} \int_0^t \frac{dx}{1 + e^{2x}\eta(e^{-x})}  
\end{equation}

\noindent Thus, understanding the asymptotic behavior of $\eta(\epsilon)$ as $\epsilon \to 0$ is crucial for unraveling the above formulas. Expanding the two norms in the eigenbasis of $\CK$, we see that $\eta$ solves

\begin{equation}\label{Phi(eta) def and equation}
\Phi(\eta):= \frac{ \sum_{n=1}^\infty \frac{\lambda_n |e_n(\omega_0)|^2}{\left(\lambda_n + \eta\right)^2} }
{\sum_{n=1}^\infty \frac{|e_n(\omega_0)|^2}{\left(\lambda_n + \eta\right)^2} } = \epsilon^{2}.
\end{equation}

\noindent This equation has a unique solution $\eta=\eta(\Ge) > 0$, because $\Phi(\eta)$ is monotone
increasing (since its derivative can be shown to be positive),
$\Phi(+\infty)=(\CK p_{\omega_0},p_{\omega_0})/\|p_{\omega_0}\|^{2}$ and $\Phi(0^{+})=0$ (see \cite{grho-gen} for technical details). Finding the asymptotics of $\eta(\epsilon)$ lies beyond the capabilities of classical asymptotic methods. Nevertheless, under the purported exponential decay \eqref{exp decay} of eigenvalues and eigenfunctions (at the point $\omega_0$) of $\CK$ we proved in \cite{grho-gen} that $\Phi(\eta) \simeq \eta$ with implicit constants independent of $\eta$, leading to $\eta(\epsilon) \simeq \epsilon^2$ with implicit constants independent of $\epsilon$. Moreover, we also showed that $\|u^*\|_{L^2(-1,1)} \simeq \epsilon^{\frac{2\beta}{\alpha}-1}$, which then implies that the ratio inside the first liminf in \eqref{gamma with exact maximizer} converges as $\epsilon \to 0$ and gives the formula $\gamma(h) = 2\beta / \alpha$.

On the other hand, substituting $\lambda_n, |e_n(\omega_0)|$ in \eqref{Phi(eta) def and equation} with their corresponding exponentials from \eqref{exp decay}, and applying (a version) of Lemma~\ref{LEM series asymp} we can approximate

\begin{equation} \label{Phi(eta) approximation}
\Phi(\eta) \approx \eta L\left(\ln\left(\frac{1}{\eta}\right)\right),
\qquad \qquad
L(\tau) = \frac{ \displaystyle e^{\tau} \sum_{k \in \ZZ} \tfrac{e^{(\alpha+2\beta)k}}{(e^{\alpha k} + e^{-\tau})^2}}{\displaystyle \sum_{k \in \ZZ} \tfrac{e^{2\beta k}}{(e^{\alpha k} + e^{-\tau})^2}}.
\end{equation} 
Note that $L(\tau)$ is an elliptic function with periods $\Ga$ and $2\pi i$, further it has symmetries $\overline{L(\tau)} = L(\overline{\tau})$ and $L(2\beta - \tau) = L(\tau)$. Figure~\ref{FIG L(t)} shows the plot of $L$. Therefore, we expect $\epsilon^{-2} \eta(\epsilon)$ to be oscillatory and periodic as $\epsilon \to 0$, more precisely

\begin{equation*}
\epsilon^{-2} \eta(\epsilon) \sim \nth{L(-2 \ln \epsilon)}.
\end{equation*}

\noindent So the integral averages of the function $r(x) = (1 + e^{2x}\eta(e^{-x}))^{-1}$ in the second formula of \eqref{gamma with exact maximizer} converge to the integral (over one period) of its periodic approximation, namely

\begin{equation*}
\frac{2 \beta}{\alpha} = \gamma(h) = \lim_{t \to +\infty} \frac{1}{t} \int_0^t r(x) dx = \lim_{t \to +\infty} \int_0^1 r(tx) dx = \int_0^1 \frac{L(2x)}{1 + L(2x)} dx 
\end{equation*}

\begin{figure}[h]
\center
\includegraphics[scale=0.3]{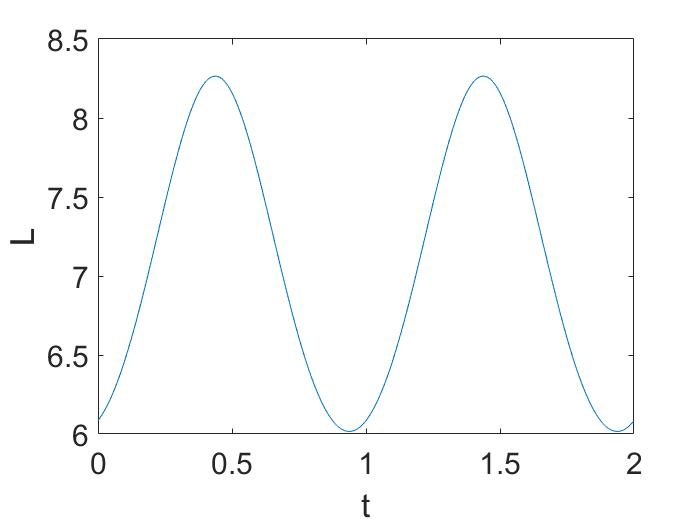}
\caption{The graph of $L(t)$ for $\alpha=4$ and $\beta=1.75$.}
\label{FIG L(t)}
\end{figure}

This insight about the asymptotic behavior of $\eta(\epsilon)$, allowed us to prove a bound that is optimal up to the constant $3/2$, but which is accessible numerically. Namely, with $u=u_{\epsilon, h, \omega_0}$ denoting the solution of the integral equation $(\CK + \epsilon^2) u = p_{\omega_0}$, in \cite{grho-gen} we showed that

\begin{equation*}
D_{h}^{0}(\Ge) \leq \frac{3}{2} u(\omega_0) \min \left\{ \frac{1}{\|u\|}, \frac{\epsilon}{\|u\|_{L^2(-1,1)}} \right\}.
\end{equation*}

\noindent We expect the two quantities under the above minimum to be comparable (this is just a restatement of $\eta(\epsilon) \simeq \epsilon^2$, which holds under \eqref{exp decay}, in fact it also holds under weaker conditions as we observed in \cite{grho-gen}), in which case the formula for $\gamma(h)$ given in \eqref{gamma main formula} follows (compare with the first part of \eqref{gamma with exact maximizer}).
 
The proof of Theorem~\ref{THM D_h^0} follows from \cite{grho-gen} without much change. The only difference is that in the above formulation we presented the exact maximizer for $D_h^0$, versus the $3/2$-maximizer presented in \cite{grho-gen}. For the sake of completeness we give a short recap of the argument.

\begin{proof}[Proof of Theorem~\ref{THM D_h^0}]

\noindent For every $f$, satisfying the two constraints of \eqref{maximization problem} and for every nonnegative numbers $\mu$ and $\nu$ ($\mu^{2}+\nu^{2}\not=0$) we have the inequality
\begin{equation}
  \label{Lagrange}
  ((\mu + \nu \CK)f, f)\le\mu+\nu \epsilon^{2},
\end{equation}

\noindent Applying convex duality to the quadratic functional on the \lhs\ of
(\ref{Lagrange}) we get
\begin{equation}\label{cxdual}
\Re(f, p_{\omega_0})-\frac{1}{2} \left((\mu+\nu \CK)^{-1} p_{\omega_0}, p_{\omega_0} \right)\le
\frac{1}{2} \left( (\mu+\nu \CK)f,f \right)\le
\frac{1}{2} \left( \mu+\nu \epsilon^{2}  \right),
\end{equation}

\noindent so that

\begin{equation} \label{maxub}
\Re(f, p_{\omega_0}) \le \frac{1}{2} \left((\mu+\nu \CK)^{-1} p_{\omega_0}, p_{\omega_0} \right)
+\frac{1}{2} \left( \mu+\nu \epsilon^{2}  \right),
\end{equation}

\noindent which is valid for every $f$, satisfying the constraints of \eqref{maximization problem} and all $\mu > 0$, $\nu\ge
0$. In order for the bound to be optimal we must have equality in
(\ref{cxdual}), which holds if and only if $p_{\omega_0} = (\mu + \nu \CK) f$, giving the formula for optimal vector $f$:

\begin{equation} \label{maxxi}
  f=(\mu + \nu \CK)^{-1} p_{\omega_0}.
\end{equation}
The goal is to choose the Lagrange multipliers $\mu$ and $\nu$ so that the
constraints in (\ref{maximization problem}) are satisfied by $f$, given by (\ref{maxxi}). If $\nu = 0$, then $f = \frac{p_{\omega_0}}{\|p_{\omega_0}\|}$ does not depend on the small parameter $\epsilon$, which leads to a contradiction, because the second constraint $(\CK f,f) \leq \epsilon^2$ is violated when $\epsilon$ is small enough. If $\mu= 0$, then $\CK f = \frac{1}{\nu}p_{\omega_0}$. But this equation has no solution in $H^2$ since $p_{\omega_0}$ has a singularity at $\overline{\omega}_0-2ih$, while $\CK f$ has an analytic extension to $\mathbb{C}\backslash [-1,1]-2ih$.

Thus we are looking for $\mu>0,\  \nu> 0$, so that equalities in
\eqref{maximization problem} hold. (These are the complementary slackness relations in Karush-Kuhn-Tucker conditions.), i.e.

\begin{equation}\label{cxdeq}
\begin{cases}
\left( (\mu + \nu \CK)^{-1} p_{\omega_0}, (\mu + \nu \CK)^{-1} p_{\omega_0} \right) =1, \\
\left( \CK (\mu + \nu \CK)^{-1} p_{\omega_0}, (\mu + \nu \CK)^{-1} p_{\omega_0} \right)= \epsilon^2.
\end{cases}
\end{equation}

\noindent Let $\eta = \frac{\mu}{\nu}$, solving the first equation in \eqref{cxdeq} for $\nu$ we find $\nu = \|(\CK+\eta)^{-1} p_{\omega_0}\|$. The second equation then reads

\begin{equation}\label{Phi(eta) def and equation}
\Phi(\eta):= \frac{ \left( \CK (\CK + \eta)^{-1} p_{\omega_0}, (\CK + \eta)^{-1} p_{\omega_0} \right) }
{ \|(\CK+\eta)^{-1} p_{\omega_0}\|^2 } = \epsilon^{2},
\end{equation}

\noindent which has a unique solution $\eta=\eta(\Ge) > 0$, because $\Phi(\eta)$ is monotone
increasing (since its derivative can be shown to be positive),
$\Phi(+\infty)=(\CK p_{\omega_0},p_{\omega_0})/\|p_{\omega_0}\|^{2}$ and $\Phi(0^{+})=0$ (see \cite{grho-gen} for technical details). Setting $u^*=(\CK+\eta)^{-1}p_{\omega_0}$, \eqref{maxub} reads

\begin{equation} \label{f upper bound intermid}
\Re(f,p_{\omega_0})\le \frac{(u^*,p_{\omega_0})}{2\|u^*\|} + \frac{\|u^*\|}{2} (\epsilon^2 + \eta) = \frac{u^*(\omega_0)}{\|u^*\|},
\end{equation}

\noindent where in the last step we used \eqref{u(z) and the norms}.
\end{proof}

\medskip

\textbf{Acknowledgments.} We are grateful to Leslie Greengard for providing
the quad precision FORTRAN code for solving the integral equation (\ref{Kup})
for $\Ge$ as low as $10^{-16}$ and for computing the eigenvalues of $\CK_{h}$
as small as $10^{-32}$. This material is based upon work supported by the
National Science Foundation under Grant No. DMS-2005538.

\appendix

\section{Appendix} \label{SECT appen}
\setcounter{equation}{0}

\subsection{Extension of positivity}

\begin{proposition} \label{PROP extension}
Let $f$ be analytic in $\HH_h$ with $Sf=f$ (cf. \eqref{S def}) and $f(\omega) \sim -A \omega^{-2}$ as $\omega \to \infty$ for some $A>0$. In addition assume $f'(0) \neq 0$, then the following are equivalent:

\begin{enumerate}
\item[(i)] $\im\, f(x)>0$ for all $x>0$;
\item[(ii)] $\exists\, h' \in (0,h)$ s.t. $\im\, f(x-ih')>0$ for all $x>0$.
\end{enumerate}

\end{proposition}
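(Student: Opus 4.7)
The proof splits naturally by direction.

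For $(ii)\Rightarrow(i)$, I would argue by the minimum principle for the harmonic function $v:=\im f$. The symmetry $Sf=f$ forces $v\equiv 0$ on $i\RR\cap\HH_h$, while the asymptotic $f\sim -A\omega^{-2}$ gives $v\to 0$ as $\omega\to\infty$. Working on the unbounded domain $\Omega=\{\omega:\re\omega>0,\;\im\omega>-h'\}$, the accessible boundary consists of the imaginary-axis segment $\{iy:y\ge -h'\}$ (where $v=0$) and the ray $\{x-ih':x\ge 0\}$ (where $v>0$ by $(ii)$, with $v(-ih')=0$ by symmetry). A Phragm\'en--Lindel\"of argument, justified by the decay at infinity, then gives $v\ge 0$ on $\Omega$; the strong minimum principle upgrades this to $v>0$ at every interior point, and since the positive real axis lies in the interior of $\Omega$ this is exactly $(i)$.

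For $(i)\Rightarrow(ii)$, the plan is to handle a compact range of $x$ by continuity and the regime $x\to\infty$ by a refined asymptotic analysis. The symmetry forces $f(0)\in\RR$ and $f'(0)=i\alpha$ with $\alpha\in\RR$; combining $f'(0)\ne 0$ with $(i)$ near the origin gives $\alpha>0$. A Taylor expansion then yields $\im f(x-ih')=\alpha x+O(x^2+h'^2)$ near $0$, so $\im f>0$ on a fourth-quadrant neighbourhood of $0$. Combining with $(i)$ and continuity on compact subsets of $(0,\infty)$, for each $R>0$ one obtains a height $h'_R>0$ such that $\im f(x-ih')>0$ whenever $x\in(0,R]$ and $0<h'<h'_R$.

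The core step is controlling $\im f(x-ih')$ for large $x$. Writing $f(\omega)=-A/\omega^2+g(\omega)$ with $g=O(\omega^{-3})$, the symmetry transfers as $Sg=g$, which forces the asymptotic coefficients of $g$ to be purely imaginary on odd inverse powers and real on even ones. Hence $g(\omega)=ib/\omega^3+o(\omega^{-3})$ with $b\in\RR$, and $(i)$ applied on the real axis at infinity (where $\im f(x)\sim b/x^3$) forces $b\ge 0$. A direct calculation then gives
\[
\im f(x-ih')=\frac{b-2Ah'}{x^3}+o(x^{-3}),\qquad x\to+\infty,
\]
so for any $h'\in(0,\min\{h,b/(2A)\})$ the right-hand side is positive for large $x$. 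Taking $h'$ also below the compact-range threshold $h'_R$ for a suitably large $R$, one concludes $(ii)$.

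The main obstacle I anticipate is the degenerate subcase $b=0$, where the leading behaviour $-2Ah'/x^3$ on the horizontal line is negative for every $h'>0$ and one must push the expansion to the next non-vanishing odd-power coefficient; the symmetry-enforced alternation of real/imaginary coefficients together with $(i)$ should again pin down the correct sign at that order, but making this uniform, and ruling out pathological cancellations without invoking extra structure (such as a Herglotz-type integral representation forced by $Sf=f$, analyticity in $\HH_h$, and the plasma limit), is the delicate part of the argument.
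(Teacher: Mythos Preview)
Your $(ii)\Rightarrow(i)$ is essentially the paper's argument: minimum principle for $\im f$ on the quarter-plane $\{\re\omega>0,\ \im\omega>-h'\}$, using the symmetry to get $\im f=0$ on the imaginary axis and the decay $f\sim -A\omega^{-2}$ to handle infinity.

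For $(i)\Rightarrow(ii)$ there is a genuine gap. The hypothesis $f(\omega)\sim -A\omega^{-2}$ does not by itself produce an $\omega^{-3}$ term, so writing $g(\omega)=ib/\omega^{3}+o(\omega^{-3})$ is unjustified. More seriously, your own formula shows that when $b=0$ one gets $\im f(x-ih')\sim -2Ah'/x^{3}<0$ for every $h'>0$, and your proposed remedy of ``pushing to the next non-vanishing odd-power coefficient'' cannot work: any such contribution is $o(x^{-3})$ and is therefore dominated by the negative $-2Ah'/x^{3}$ term, so positivity at infinity still fails. The integral-representation route you mention in passing is not optional here; it is precisely what the paper uses, and it is what eliminates the degenerate case.

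The paper first shows $f\in H^{2}(\HH_{h_0})$ for any fixed $h_0\in(0,h)$ (boundedness on a semidisc plus the $|\omega|^{-2}$ decay outside), whence the signed-measure representation
\[
f(\omega)=\int_0^{\infty}\frac{d\sigma(\lambda)}{\lambda-(\omega+ih_0)^{2}},\qquad \int_0^{\infty}d\sigma=A.
\]
From this one reads off directly, for any $t\in(0,h_0)$,
\[
\im f(x-it)=2x(h_0-t)\int_0^{\infty}\frac{d\sigma(\lambda)}{|\lambda-(x+i(h_0-t))^{2}|^{2}}\ \sim\ \frac{2A(h_0-t)}{x^{3}}>0\qquad(x\to+\infty).
\]
The shift $ih_0$ built into the representation manufactures a strictly positive $\omega^{-3}$-level coefficient; the $b=0$ obstruction simply does not arise. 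For the remaining bounded range of $x$ the paper does not use your Taylor-plus-compactness scheme but argues by contradiction: if for every small $t$ one had $\im f(x_t-it)\le 0$ at some $x_t>0$, the asymptotic above forces $\{x_t\}$ bounded; any limit point $x_0$ has $\im f(x_0)\le 0$, so $x_0=0$ by $(i)$. Since $\im f(x_t)>0$ while $\im f(x_t-it)\le 0$, there is $\theta_t\in(0,1]$ with $\im f(x_t-i\theta_t t)=0$; combined with $\im f(-i\theta_t t)=0$ (symmetry) the mean value theorem gives a point $\tilde x_t\in(0,x_t)$ with $\partial_x\im f(\tilde x_t-i\theta_t t)=0$. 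Passing to the limit yields $\im f'(0)=0$, and since $Sf=f$ forces $f'(0)\in i\RR$, this means $f'(0)=0$, contradicting the hypothesis.
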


\begin{proof}
The second item immediately implies the first one. Indeed, the symmetry $Sf=f$ implies that $\im f =0$ on the imaginary axis. Let $\Omega = \{\omega: \im\, \omega > -h', \ \re\, \omega > 0 \}$, note that $\im f \geq 0$ on $\partial \Omega$ and in fact $\min_{\partial \Omega} \im f = 0$, since $\im f$ approaches to zero at infinity (because of $f(\omega) \sim -A \omega^{-2}$) applying the strong maximum principle we conclude that $\im f > 0$ in $\Omega$. (Note that the assumption $f'(0)\neq 0$ was not used here). 

Let us now turn to the converse implication. Let $h_0 \in (0,h)$, then $f$ is
analytic in the closure $\overline{\HH}_{h_0}$ and in particular is bounded
inside the semidisc $D=\{\omega\in\HH_{h_{0}}: |\omega + ih_0| \leq M\}$,
where $M>0$ is a large number that can be chosen such that $|f(\omega)| \leq
2A / |\omega|^2$ for all $\omega \notin D$. With these two inequalities, it is
straightforward to show that $\int_\RR |f(x+iy)|^2 d x$ is bounded uniformly
for $y>-h_0$. Thus, $f \in H^2(\HH_{h_0})$ and following the calculations in
the proof of Lemma~\ref{LEM H2 subset K-K} leading from (\ref{Hardy rep}) to
(\ref{sigma for H^2}), we obtain the representation
$$f(\omega) = \int_0^\infty \frac{d \sigma(\lambda)}{\lambda - (\omega + i h_0)^2}, \qquad \qquad \omega \in \HH_{h_0},$$
where $d \sigma(\lambda) = \frac{1}{\pi} \im f(\sqrt{\lambda}-ih_0) d \lambda$.
Using this, it is easy to find that $f$ must have the more precise asymptotics, as $\omega \to \infty$ in $\HH_{h_0}$:

\begin{equation*}
f(\omega) \sim A \left( -\frac{1}{\omega^2} + \frac{2ih_0}{\omega^3} \right),
\qquad \qquad A = \int_0^{\infty} d \sigma(\lambda).
\end{equation*}

\noindent But then for any $t \in (0,h_0)$

\begin{equation} \label{asymp}
\im\, f(x-it) \sim \frac{2A(h_0-t)}{x^3} > 0, \qquad \quad x \to + \infty.
\end{equation}

Assume, for the sake of contradiction that for each $t \in (0,h_0)$ there
exists $x_t > 0$, such that $\im f(x_t-it) \leq 0$. Clearly, \eqref{asymp}
implies that $x_t$ remains bounded as $t \to 0^+$. Let us now extract
convergent subsequence (without relabeling it) $x_t \to x_0 \geq 0$ as $t \to
0^+$, but then $\im\, f(x_0) \leq 0$. Assumption $(i)$ implies that
$x_0=0$. Let us show that in this case $f'(0)=0$, which is assumed to not be
the case. Since $\im\, f(x_t) > 0$ and $\im f(x_t-it) \leq 0$, by continuity
we conclude that $\exists\, \theta_t \in (0,1]$ such that $\im f(x_t-i \theta_t
t) = 0$. The symmetry $Sf=f$ implies that $\im f(-i\theta_t t) = 0$, therefore
by the mean value theorem $\im f'(\tilde{x}_t - i\theta_t t) = 0$ for some
$\tilde{x}_t \in (0, x_t)$. Taking limits as $t \to 0^+$ we obtain $\im f'(0)
= 0$, but by symmetry $f'(0) \in i\RR$, hence $f'(0)=0$. 
\end{proof}

\subsection{Power law bounds} \label{SECT power law bounds}

Let $D_h^0(\epsilon)$ and $\gamma(h)$ be defined by \eqref{sup no symm} and \eqref{gamma with D_h^0} respectively. Note that $D_h^0(\epsilon)$ is non-increasing in $h$.  Indeed, $\HH_{h_1} \subset \HH_{h_2}$ for $h_1 \leq h_2$ and so admissible functions for $D_{h_2}^0(\epsilon)$ are also admissible for $D_{h_1}^0(\epsilon)$, showing that $D_{h_2}^0(\epsilon) \leq D_{h_1}^0(\epsilon)$. Now dividing by $\ln \epsilon < 0$ and taking liminf in $\epsilon$ we conclude that $\gamma(h)$ is non-decreasing.

Let us turn to deriving power law upper and lower bounds on $D_h^0(\epsilon)$. We are going to use the following two results from \cite{grho-gen} and \cite{grho-annulus}. The first one is analytic continuation from a boundary interval: for any $s \in \HH_+$

\begin{equation} \label{boundary}
\sup\{|f(s)|: f \in H^2(\HH_+) \ \text{and} \ \|f\|_{H^2(\HH_+)} \leq 1, \ \|f\|_{L^2(-1,1)} \leq \delta \} \leq C(s) \delta^{\alpha(s)},
\end{equation}

\noindent where $C(s)^{-2} = \frac{s_i}{9} \left(\arctan \frac{s_r + 1}{s_i} - \arctan \frac{s_r - 1}{s_i}\right)$ with $s = s_r+is_i$ and $\alpha(s) = - \frac{1}{\pi} \arg \frac{s + 1}{s - 1} \in (0,1)$ is the angular size of $[-1,1]$ as seen from $s$, measured in the units of $\pi$ radians. Moreover, the bound is optimal in $\delta$ and maximizer function attaining the bound (up to a constant independent of $\delta$) in \eqref{boundary} is given by 

\begin{equation} \label{G}
G(\zeta) = \frac{\delta}{\zeta - \overline{s}} e^{\frac{i }{\pi} \ln \delta \ln \frac{1+\zeta}{1-\zeta}}, \qquad \qquad \zeta \in \HH_+
\end{equation} 

\noindent where $\ln$ denotes the principal branch of logarithm.

The second one is analytic continuation from a circle. Namely let $\Gamma \subset \HH_+$ be a circle and $s\in\HH_+$ a point lying outside of $\Gamma$, then

\begin{equation} \label{circle}
\sup\{|f(s)|: f \in H^2(\HH_+) \ \text{and} \ \|f\|_{H^2(\HH_+)} \leq 1, \ \|f\|_{L^2(\Gamma)} \leq \epsilon \} \simeq \epsilon^{\beta(s)},
\end{equation}  

\noindent with implicit constants independent of $\epsilon$ and $\beta(s) = \frac{\ln|m(s)|}{\ln\rho}$, where $m$ is the M\"obius map transforming the upper half-plane into the unit disc and the circle $\GG$ into a concentric circle of radius $\rho<1$.

\begin{figure}[h]
\center
\includegraphics[scale=0.3]{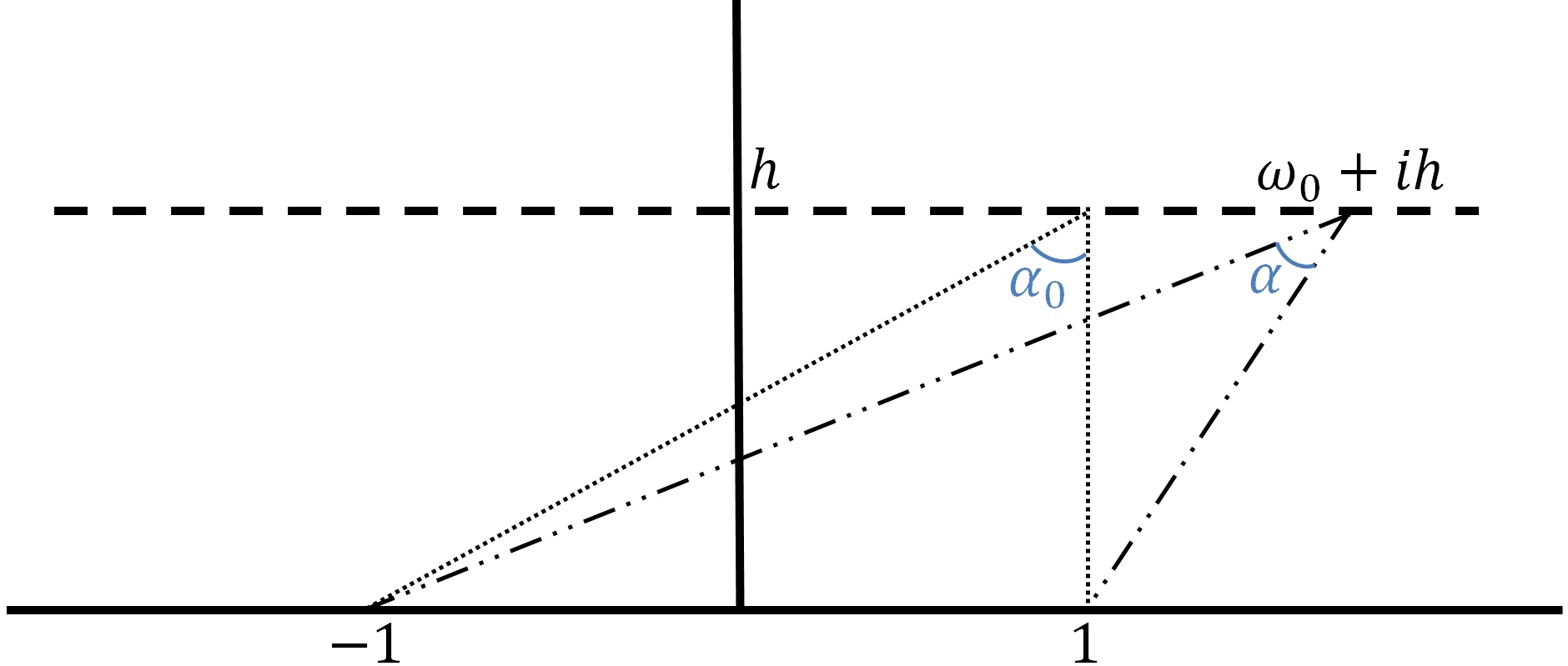}
\caption{Comparison of angles.}
\label{FIG angles}
\end{figure}

\begin{lemma} \label{LEM bounds on D^0}
There exist $\gamma_0, \gamma_1 \in (0,1)$ (depending on $\omega_0, h$) such that

\begin{equation} \label{D power law lower bound}
\epsilon^{\gamma_1} \lesssim D_h^0(\epsilon) \lesssim \epsilon^{\gamma_0},
\end{equation}

\noindent where the implicit constants depend only on $h$ and $\omega_0$. Moreover, $\gamma_1(h) \to 0$ as $h \to 0^+$.
\end{lemma}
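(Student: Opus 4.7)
My plan is to establish the lower bound by constructing a concrete near-extremal competitor in $H^2(\HH_h)$ adapted from the boundary extremal $G$ of \eqref{G}, and to establish the upper bound by combining \eqref{boundary} with a two-constants-theorem argument on a small disk centred at $\omega_0$.

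\textbf{Lower bound.} Choose $s=\omega_0+ih\in\HH_+$ and $\delta>0$ small, and set $f(\omega):=G(\omega+ih)$. The pole of $G$ at $\overline{s}=\omega_0-ih$ shifts to $\omega_0-2ih$, strictly below $\partial\HH_h$, and the branch cuts of the shifted logarithm $L(\omega+ih)=\ln\tfrac{1+\omega+ih}{1-\omega-ih}$ lie on $\{\im\omega=-h,\ |\re\omega|\geq 1\}\subset\partial\HH_h$. Hence $f$ is single-valued and analytic on all of $\HH_h$, with $\|f\|_{H^2(\HH_h)}=\|G\|_{H^2(\HH_+)}\lesssim 1$. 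Using $s-\overline{s}=2ih$ and the identity $\im L(s)=\pi(1-\alpha(s))$ (obtained from $\arg\tfrac{1+s}{1-s}=\pi+\arg\tfrac{s+1}{s-1}$), a short calculation gives
\[
|f(\omega_0)|=|G(s)|=\frac{\delta^{\alpha(s)}}{2h}.
\]
The main computational step is to estimate $\|f\|_{L^2(-1,1)}^{2}=\int_{-1}^{1}|G(x+ih)|^{2}\,dx$. Since $|G(x+ih)|^{2}$ is proportional to $\delta^{2-2\im L(x+ih)/\pi}$, and the arctangent-sum formula $\im L(x+ih)=\arctan\tfrac{h}{1+x}+\arctan\tfrac{h}{1-x}$ shows $\im L\to\pi/2$ as $x\to\pm 1$ while $\im L\asymp h$ in the bulk, I expect the dominant contribution to come from boundary layers of width $\sim h$ near $\pm 1$, in which $|G(x+ih)|^{2}\asymp\delta$. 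This yields $\|f\|_{L^2(-1,1)}\lesssim\sqrt{h\delta}$; balancing $\sqrt{h\delta}\simeq\epsilon$ (so $\delta\simeq\epsilon^{2}/h$) makes $f$ admissible and gives $D_h^0(\epsilon)\gtrsim c(h)\,\epsilon^{\gamma_1}$ with $\gamma_1=2\alpha(\omega_0+ih)$. Because the angular size of $[-1,1]$ from $\omega_0+ih$ never exceeds $\pi/2$ when $\omega_0>1$, we have $\gamma_1\in(0,1)$; and the expansion $\alpha(\omega_0+ih)\sim 2h/(\pi(\omega_0^{2}-1))$ gives $\gamma_1(h)\to 0^{+}$ as $h\to 0^{+}$.

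\textbf{Upper bound.} The inclusion $\HH_+\subset\HH_h$ produces a contractive restriction $H^2(\HH_h)\hookrightarrow H^2(\HH_+)$, so \eqref{boundary} yields $|f(s)|\leq C(s)\,\epsilon^{\alpha(s)}$ for every $s\in\HH_+$. This is useless at $s=\omega_0\in\RR$, where $\alpha(\omega_0)=0$; the extra analyticity of $f$ in $\HH_h$ (below the real axis) must be exploited via a two-constants argument in the disk $D=B(\omega_0,h/2)\subset\HH_h$. On $\partial D$ the pointwise Hardy bound $|f(\omega)|\lesssim\|f\|_{H^2(\HH_h)}/\sqrt{\im\omega+h}$ gives a uniform majorant $M=M(h)$ independent of $\epsilon$; on the sub-arc $\Gamma_0=\{\omega_0+(h/2)e^{i\theta}:\theta\in(\pi/4,3\pi/4)\}\subset\HH_+$, which is bounded away from $\RR$, \eqref{boundary} sharpens this to $|f|\leq C\epsilon^{\alpha_0}$ with $\alpha_0:=\min_{s\in\Gamma_0}\alpha(s)>0$. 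The two-constants theorem at the centre $\omega_0$ now produces $|f(\omega_0)|\leq M^{1-w}(C\epsilon^{\alpha_0})^{w}$, where the harmonic measure of $\Gamma_0$ at $\omega_0$ equals $w=|\Gamma_0|/(2\pi)=1/4$ by the disk symmetry. Hence $D_h^0(\epsilon)\lesssim\epsilon^{\gamma_0}$ with $\gamma_0=\alpha_0/4\in(0,1)$.

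\textbf{Main obstacle.} The delicate step is the integral estimate $\|G\|_{L^2([-1+ih,1+ih])}\simeq\sqrt{h\delta}$ underlying the lower bound: the exponential factor $|e^{i\pi^{-1}\ln\delta\cdot L(\zeta)}|$ blows up as fast as $\delta^{-1/2}$ on boundary layers of width $\sim h$ near $\pm 1$, and one must verify that these layer contributions dominate the integral so that the balance $\delta\simeq\epsilon^{2}/h$ produces the correct exponent $\gamma_1=2\alpha(\omega_0+ih)$ rather than something worse. A related but technically easier point for the upper bound is choosing the sub-arc $\Gamma_0$ so that both the harmonic measure $w$ and the minimum angular size $\alpha_0$ stay strictly positive; the comparison of angular sizes depicted in Figure~\ref{FIG angles} is what makes the choice $\theta\in(\pi/4,3\pi/4)$ effective.
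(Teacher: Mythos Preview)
Your upper bound argument is correct and actually more self-contained than the paper's: the paper first applies \eqref{boundary} on the fixed circle $\{|s-i|=\tfrac12\}\subset\HH_+$ and then invokes the separate ``interior circle'' estimate \eqref{circle} to reach $\omega_0$, whereas your two-constants argument on the small disk $B(\omega_0,h/2)$ avoids \eqref{circle} entirely. Both yield some $\gamma_0\in(0,1)$; yours is more elementary but gives a worse explicit constant.

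The lower bound contains a genuine computational error. From your own arctangent formula, at $x\to 1^-$ one has
\[
\im L(x+ih)=\arctan\frac{h}{1+x}+\arctan\frac{h}{1-x}\longrightarrow \arctan\frac{h}{2}+\frac{\pi}{2},
\]
not $\pi/2$. Hence on the boundary layers the endpoint exponent of $\delta$ in $|G(x+ih)|^{2}=\delta^{2-2\im L/\pi}/|x+ih-\bar s|^{2}$ is
\[
1-\frac{2}{\pi}\arctan\frac{h}{2}=2\alpha_0<1,\qquad \alpha_0=\frac{1}{\pi}\arctan\frac{2}{h},
\]
so $|G(x+ih)|^{2}\asymp\delta^{2\alpha_0}$, not $\delta$. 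Your estimate $\|f\|_{L^2(-1,1)}\lesssim\sqrt{h\delta}$ is therefore false for every fixed $h>0$ (it would be a strictly stronger bound than what the test function actually satisfies), and the claimed exponent $\gamma_1=2\alpha(\omega_0+ih)$ is not justified.

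The fix is simple and in fact removes your ``main obstacle'' entirely: no boundary-layer analysis is needed. Since $\alpha(x+ih)\ge\alpha_0$ for all $x\in[-1,1]$ (the angular size of $[-1,1]$ is smallest from the endpoints $\pm1+ih$; this is the comparison in Figure~\ref{FIG angles}), one has the uniform pointwise bound $|G(x+ih)|\le C(h,\omega_0)\,\delta^{\alpha_0}$ and hence $\|f\|_{L^2(-1,1)}\lesssim\delta^{\alpha_0}$. Balancing $\delta^{\alpha_0}\simeq\epsilon$ then gives $D_h^0(\epsilon)\gtrsim\delta^{\alpha}=\epsilon^{\alpha/\alpha_0}$ with $\gamma_1=\alpha(\omega_0+ih)/\alpha_0\in(0,1)$, and $\gamma_1(h)\to 0$ as $h\to 0^+$ since $\alpha\to 0$ while $\alpha_0\to\tfrac12$. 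This is precisely the paper's argument; your choice of pole $s=\omega_0+ih$ rather than $s=ih$ only affects the bounded prefactor and is immaterial.
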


\begin{proof}
The lower bound is obtained by introducing an ansatz function admissible for $D_h^0(\epsilon)$. Consider the function $G$ in \eqref{G} with $s=ih$, then the ansatz function is going to be $f(\omega) = G(\omega+ih)$. Note that we can rewrite

\begin{equation*}
G(\zeta) = \frac{\delta^{\alpha(\zeta)} e^{i \theta_\delta(\zeta)}}{\zeta + ih} , \qquad \qquad \theta_\delta(\zeta) = \frac{1}{\pi} \ln \delta \ln \left| \frac{1 + \zeta}{1 - \zeta} \right|.
\end{equation*} 

\noindent It is now clear that

\begin{equation*} 
\|G\|_{L^2((-1,1)+ih)} \lesssim \delta^{\alpha_0}, 
\qquad \qquad
\alpha_0 = \min_{x \in [-1,1]} \alpha(x+ih) = \frac{1}{\pi} \arctan \frac{2}{h} \in (0,1)
\end{equation*}

\noindent and $|G(\omega_0+ih)| \gtrsim \delta^\alpha$, where $\alpha=\alpha(\Go_{0}+ih) < \alpha_0$ (see Figure~\ref{FIG angles}). Thus,

\begin{equation} \label{phi norm ineqs}
\|f\|_{H^2(\HH_h)} \lesssim 1, \qquad\|f\|_{L^2(-1,1)} \lesssim \delta^{\alpha_0}, \qquad
|f(\omega_0)| \gtrsim \delta^\alpha.
\end{equation}

\noindent Letting $\epsilon = \delta^{\alpha_0}$ we see that $cf$ is an admissible function for $D_h^0(\epsilon)$, for some constant $c>0$ independent of $\delta$, hence

\begin{equation*}
D_h^0(\epsilon) \geq c |f(\Go_{0})| \gtrsim \delta^{\alpha} = \epsilon^{\gamma_1},
\end{equation*}

\noindent where $\gamma_1 = \gamma_1(h) = \alpha / \alpha_0 \in (0,1)$. It remains to notice that $\gamma_1(h) \to 0$ as $h \to 0^+$.

Let us now turn to the upper bound. Let $f$ be an admissible function for $D_h^0(\epsilon)$, it is clear that $f$ is also admissible for \eqref{boundary} with $\delta = \epsilon$. However, applying the estimate in \eqref{boundary} at the point $\omega_0>1$ doesn't give a useful bound, since $\alpha(\omega_0) = 0$. Instead let us apply \eqref{boundary} at the points $s$ lying on the circle $\PC = \{s \in \HH_+: |s-i| = \frac{1}{2}\}$. It is clear that the angle $\alpha(s)$ is the smallest at the top point of the circle, i.e. at $s_0 = \tfrac{3}{2} i$. Moreover, obviously the constant $C(s)$ in \eqref{boundary} is uniformly bounded for all $s \in \PC$. Thus,

\begin{equation*}
|f(s)| \lesssim \epsilon^{\beta_0}, \qquad \forall s \in \PC, \quad \text{where} \quad
\beta_0 = \alpha(s_0) = \frac{1}{\pi} \arctan \frac{12}{5}
\end{equation*} 

\noindent and the implicit constant is independent of $s$ and $\epsilon$. In particular, $\|f\|_{L^2(\PC)} \lesssim \epsilon^{\beta_0}$. Now we can apply \eqref{circle} to the function $f(\cdot - ih)$ at the point $s=\omega_0+ih$ and obtain

\begin{equation} \label{f upper bound using circle}
|f(\omega_0)| \lesssim \epsilon^{\gamma_0}, \qquad \qquad
\gamma_0 = \beta_0 \cdot \beta(\omega_0+ih) = \beta_0 \frac{\ln|m(\omega_0+ih)|}{\ln\rho},
\end{equation}  

\noindent where $m(z) = \frac{z-z_0}{z+z_0}$ with $z_0=\frac{i}{2}\sqrt{4h^2+8h+3}$ and $\rho = 2h+2-\sqrt{4h^2+8h+3}$. Taking supremum over $f$ in \eqref{f upper bound using circle} we conclude the proof of the upper bound.
\end{proof}

As an immediate corollary from Lemma~\ref{LEM bounds on D^0} we see that for any $h>0$

\begin{equation*}
\gamma(h) \in [\gamma_0(h), \gamma_1(h)] \subset (0,1)
\end{equation*}

\noindent and also $\gamma(h) \to 0$ as $h \to 0^+$.

\bibliographystyle{abbrv}
\bibliography{refs}

\end{document}